\newtheorem{theorem}{Theorem}[section]
\newtheorem{remark}{Remark}
\newtheorem{scheme}{Scheme}
\newtheorem{system}{System}
\newcommand{\tr}{\mbox{Tr}}
\newcommand{\ketbra}[2]{\left | #1\right \rangle \left \langle #2\right | }
\begin{document}

\title{Numerical solution of stochastic master equations using stochastic interacting wave functions}

\author[dim-ci2ma]{C. M. Mora}
    \ead{cmora@ing-mat.udec.cl}

\author[dim]{J. Fern\'andez}
    \ead{joafernandez@udec.cl}

\author[cimat]{R. Biscay}
    \ead{rolando.biscay@cimat.mx}

\address[dim-ci2ma]{Departamento de Ingenier\'{\i}a Matem\'{a}tica and CI$^2$MA, Universidad de Concepci\'{o}n, Chile.}

\address[dim]{Departamento de Ingenier\'{\i}a Matem\'{a}tica, Universidad de Concepci\'{o}n, Chile.}

\address[cimat]{Departamento de Probabilidad y Estad\'istica, Centro de Investigaci\'on en Matem\'atica, Guanajuato, M\'exico.}

\begin{abstract}
We develop a new approach for solving stochastic quantum master equations
with mixed initial states.
First,
we obtain that the solution of the jump-diffusion stochastic master equation
is represented by a mixture of pure states 
satisfying a system of stochastic differential equations of Schr\"odinger type.
Then, 
we design three exponential schemes for these 
coupled stochastic Schr\"odinger equations,
which are driven by Brownian motions and jump processes.
Hence,
we have constructed efficient numerical methods for the stochastic master equations
based on quantum trajectories.
The good performance of the new numerical integrators 
is illustrated by simulations of two quantum measurement processes.
\end{abstract}

\begin{keyword}
Stochastic quantum master equation
\sep
quantum measurement process
\sep
stochastic Schr\"odinger equation
\sep
numerical solution
\sep
stochastic differential equation
\sep
exponential schemes
\sep
quantum trajectories.

\MSC[2010]  60H35 \sep 60J75 \sep 65C05 \sep 65C30  \sep 81Q05 \sep 81Q20.

\PACS 02.60.Cb \sep 02.70.-c \sep 02.70.Uu \sep 03.65.Yz \sep 03.65.Ta \sep 06.20.Dk  \sep 42.50.Lc \sep 42.50.Pq.

\end{keyword}

\maketitle

\section{Introduction}

This paper addresses the numerical simulation of open quantum systems.
We consider a small quantum system 
described by the time-dependent Hamiltonian $\mathbf{H} \left( t \right)$
that interacts  with the environment via the Gorini-Kossakowski-Sudarshan-Lindblad operators $\mathbf{L}_{j} \left( t \right) $ and  $\mathbf{R}_{m} \left( t \right) $
(see, e.g., \cite{AlickiLendi2007,BreuerPetruccione2002,Carmichael2008,HarocheRaimond2006}).
Here, 
for any $t \geq 0$,
the linear operators
 $\mathbf{H} \left( t \right)$, $\mathbf{L}_{j} \left( t \right) $ and  $\mathbf{R}_{m} \left( t \right) $
 act on the complex Hilbert space $\left( \boldsymbol{\mathfrak{h}}, \langle \cdot, \cdot \rangle \right)$.
The main goal of this article is to develop the numerical solution of the stochastic master equation
\begin{equation}
  \label{eq:StochasticQME}
  \begin{aligned}
 d\mathbf{\boldsymbol{\rho}}_{t}	
 & =	
\boldsymbol{\mathcal{L}} \left( t \right)  \boldsymbol{\rho}_{t-} dt 
+ \sum_{j =1}^J \left( \mathbf{L}_{j} \left( t \right) \boldsymbol{\rho}_{t-} + \boldsymbol{\rho}_{t-} \mathbf{L}_{j} \left( t \right)^* 
 - 2 \Re  \left(  \tr \left( \mathbf{L}_{j} \left(  t \right) \boldsymbol{\rho}_{t-} \right) \right) \boldsymbol{\rho}_{t-}  \right) dW^{j}_{t}
\\
& \quad
+ 
\sum_{m=1}^M   \left( 
\frac{ \mathbf{R}_{m} \left( t \right) \boldsymbol{\rho}_{t-}  \mathbf{R}_{m} \left(  t \right) ^*}
{ \tr \left( \mathbf{R}_{m} \left( t \right) ^*\mathbf{R}_{m} \left( t \right) \boldsymbol{\rho}_{t-} \right) } 
- \boldsymbol{\rho}_{t-} \right) \biggl( dN^{m}_{t} -  \tr \left( \mathbf{R}_{m} \left( t \right) ^*\mathbf{R}_{m} \left( t \right) \boldsymbol{\rho}_{t-} \right) dt \biggr) ,
\end{aligned}
\end{equation}
where 
$ \boldsymbol{\rho}_{t} $ is a  random density operator
(i.e., a random non-negative operator on $\boldsymbol{\mathfrak{h}}$ with unit trace),
$W^1, \ldots, W^J$ are independent real Brownian motions,
the $N^m$'s are doubly stochastic Poisson processes (also known as  Cox processes) with predictable compensator 
$
t \rightarrow \int_{0}^t \tr \left( \mathbf{R}_{m} \left( s \right) ^*\mathbf{R}_{m} \left( s \right) \boldsymbol{\rho}_{s-} \right) ds
$
(see, e.g., \cite{Bremaud1981,JacodProtter1982,Pellegrini2010})
such that $N^1,\ldots, N^M$ have no common jumps, 
and
$$
\boldsymbol{\mathcal{L}} \left( t \right) \boldsymbol{\varrho}
=
\mathbf{G} \left( t \right) \boldsymbol{\varrho} + \boldsymbol{\varrho} \, \mathbf{G} \left( t \right) ^* 
+ \sum_{j = 1}^{J} \mathbf{L}_{j}  \left( t \right) \boldsymbol{\varrho} \, \mathbf{L}_{j} \left( t \right)^* 
+ \sum_{m=1}^{M} \mathbf{R}_{m} \left( t \right) \boldsymbol{\varrho} \, \mathbf{R}_{m} \left( t \right)^*
$$
with 
\begin{equation}
\label{eq:1.1}
 \mathbf{G} \left( t \right)  = -  \mathrm{i}  \, \mathbf{H} \left( t \right) 
- \dfrac{1}{2}\, \sum_{j = 1}^{J} \mathbf{L}_{j}  \left( t \right)^*\mathbf{L}_{j}  \left( t \right) 
- \dfrac{1}{2}\, \sum_{m = 1}^{M} \mathbf{R}_{m}  \left( t \right)^*\mathbf{R}_{m}  \left( t \right) .
\end{equation}
It is worth pointing out that the  unknown 
$\left( \boldsymbol{\rho}_{t} \right)_{t \geq 0}$ 
is an $\boldsymbol{\mathfrak{L}} \left(\boldsymbol{\mathfrak{h}}\right)$-valued adapted stochastic process 
on the underlying filtered complete probability space 
$\left( \Omega ,\mathfrak{F}, \left(\mathfrak{F}_{t}\right) _{t\geq 0},\mathbb{P}\right) $,
where 
$\boldsymbol{\mathfrak{L}} \left(\boldsymbol{\mathfrak{h}}\right)$  stands for the space of all linear operators from $\boldsymbol{\mathfrak{h}}$ to $\boldsymbol{\mathfrak{h}}$,
and that $\mathbf{H} \left( t \right)$ is a symmetric operator.
In this paper,
we design efficient numerical methods for the stochastic evolution equation \eqref{eq:StochasticQME}.

The stochastic operator equation \eqref{eq:StochasticQME} describes
the dynamics of several quantum systems interacting with the environment
under the  Born approximation.
In the quantum measurement  process,
$ \boldsymbol{\rho}_{t}$ represents the system density operator conditioned on the measurement outcomes
(see, e.g., \cite{BarchielliBelavkin1991,BarchielliGregoratti2009,BarchielliGregoratti2012,BarchielliPellegriniPetruccione2012,BreuerPetruccione2002,WisemanMilburn2009}).
Moreover,
$N^m_t$ is the number of detections registered  up to time $t$ by the counter associated to the observable 
$\mathbf{R}_{m} \left( t \right)$,
and
the integral from $0$ to $t$ of the photocurrent in the homodyne or heterodyne measurement 
of the observable $\mathbf{L}_{j} \left(  t \right)$ is proporcional to 
$
W_t^j  + 2 \int_{0}^t  \Re  \left(  \tr \left( \mathbf{L}_{j} \left(  s \right) \boldsymbol{\rho}_{s-} \right)  \right) ds
$
(see, e.g, \cite{BarchielliBelavkin1991,BarchielliGregoratti2009,BarchielliGregoratti2012,BreuerPetruccione2002,WisemanMilburn2009}).
Since \eqref{eq:StochasticQME} describes the continuous weak measurements on the small quantum system,
the stochastic master equation  \eqref{eq:StochasticQME} and its versions
have been used to study and design 
quantum feedback control systems (see, e.g.,  \cite{BarchielliGregoratti2012,Doherty2000,Mirrahimi2007,Rouchon2015,Sarovar2004,WisemanMilburn1993,WisemanMilburn2009}).

Let $\boldsymbol{\rho}_{0}$ be a random pure state,
which means, in the the Dirac notation, that 
$
\boldsymbol{\rho}_{0} =\ketbra{\mathbf{X}_0}{\mathbf{X}_0}
$,
where 
$\mathbf{X}_0$ is a $\mathfrak{F}_{0}$-measurable random variable taking values in $\boldsymbol{\mathfrak{h}}$
such that $\left\Vert \mathbf{X}_0 \right\Vert  = 1$.
Then
\begin{equation}
 \label{eq:1.2}
 \boldsymbol{\rho}_{t} = \ketbra{\mathbf{X}_t}{\mathbf{X}_t}
\end{equation}
(see, e.g., Remark \ref{rem:PureInitialState} given below),
where $\mathbf{X}_t$ is an adapted stochastic process  taking values in $\boldsymbol{\mathfrak{h}}$
that satisfies the non-linear stochastic differential equation (SDEs)  
\begin{equation}
\begin{aligned} 
\label{eq:SSE}
 d\mathbf{X}_t
& =
 \left( \mathbf{G} \left( t \right)   \mathbf{X}_{t-} + \mathbf{g} \left ( t,  \mathbf{X}_{t-} \right)   \right) dt
 + \sum_{j=1}^J
  \left( \mathbf{L}_{j}  \left( t \right) - \Re \left( \langle \mathbf{X}_{t-}, \mathbf{L}_{j}  \left( t \right)  \mathbf{X}_{t-} \rangle \right)  \right) \mathbf{X}_{t-} dW^j_t
  \\
& \quad 
 +
 \sum_{m=1}^M \left(  \dfrac{ \mathbf{R}_{m}  \left( t \right) \mathbf{X}_{t-}}{ \left\Vert \mathbf{R}_{m}  
 \left( t \right) \mathbf{X}_{t-} \right\Vert } - \mathbf{X}_{t-} \right) dN^m_t 
\end{aligned}
\end{equation}
with initial datum $\mathbf{X}_0$,
$
t \rightarrow \int_{0}^t  \left\Vert  \mathbf{R}_{m}  \left( s \right) \mathbf{X}_{s-} \right\Vert^2 ds
$
the  predictable compensator of $N^m$,
and
$$
\mathbf{g} \left ( t,  \mathbf{x} \right) 
=
\sum_{j=1}^J \left( 
  \Re \langle  \mathbf{x}, \mathbf{L}_{j}  \left( t \right) \mathbf{x} \rangle \mathbf{L}_{j} \left( t \right) \mathbf{x} - \dfrac{1}{2}\Re^2 \left( \langle \mathbf{x}, \mathbf{L}_{j}  \left( t \right) \mathbf{x} \rangle \right)  \mathbf{x} \right)
+\dfrac{1}{2} \sum_{m=1}^M  \left\Vert  \mathbf{R}_{m}  \left( t \right) \mathbf{x} \right\Vert^2  \mathbf{x} .
$$
In case  \eqref{eq:SSE} does not include the doubly stochastic Poisson noises
(i.e., $M=0$ and $J \geq 1$),
the non-linear stochastic Schr\"odinger equation \eqref{eq:SSE} has been solved 
by combining  finite-dimensional approximations and
projections on the unit sphere with one of the following numerical methods:
\begin{itemize}
 
 \item Versions of the Euler-Maruyama scheme (see, e.g., \cite{BreuerPetruccione2002,Mora2005,Schack1995}).
 
 \item A second order weak It\^o-Taylor scheme \cite{Breuer2000,BreuerPetruccione2002}.
 
 \item An exponential scheme \cite{Mora2005}.
 
\end{itemize}
On the other hand,
if  \eqref{eq:SSE} does not involve Brownian motions
(i.e., $M  \geq 1$ and $J = 0$),
then 
$\mathbf{X}_t$ has been approximated numerically by simulating the quantum jumps
and solving ordinary differential equations
(see, e.g, \cite{Carmichael2008,PlenioKnight1998,WisemanMilburn2009}).

In addition to the applications of \eqref{eq:StochasticQME},
the non-linear stochastic Schr\"odinger equation \eqref{eq:SSE}
plays a key role in the computation of the mean values of the quantum observables
(see, e.g., \cite{BreuerPetruccione2002,Breuer2000,HarocheRaimond2006,Percival1998,Schack1995,WisemanMilburn2009}).
The mean value of the observable $\mathbf{A} \in \boldsymbol{\mathfrak{L}} \left(\boldsymbol{\mathfrak{h}}\right)$ at time $t$ is given by 
the trace of the operator $ \mathbf{A} \, \boldsymbol{\varrho}_{t}$,
where  $\boldsymbol{\varrho}_{t} \in \boldsymbol{\mathfrak{L}} \left(\boldsymbol{\mathfrak{h}}\right)$ satisfies 
the quantum master equation 
\begin{equation}
\label{eq:1.3}
\dfrac{d}{dt} \boldsymbol{\varrho}_{t} = \boldsymbol{\mathcal{L}} \left( t \right)  \boldsymbol{\varrho}_{t} .
\end{equation}
The numerical solution of  \eqref{eq:1.3}
by  schemes for ordinary differential equations
presents drawbacks 
when the dimension $d$ of the Hilbert space required for representing accurately the solution of \eqref{eq:1.3} 
is not small with respect to the available computational resource
(see, e.g., \cite{HarocheRaimond2006,Mora2005,Percival1998,Schack1995,WisemanMilburn2009}).
For instance,
a current notebook  can compute the explicit solution of \eqref{eq:1.3} using Pad\'e approximants
only when $d$ is in the range of tens.
It is a common practice to obtain $Tr \left( \mathbf{A} \, \boldsymbol{\varrho}_{t} \right) $
by computing the solution to \eqref{eq:SSE},
and using the fact that 
$$
\mathbb{E} \langle \mathbf{X}_t , \mathbf{A} \mathbf{X}_t \rangle
=
Tr \left(   \mathbf{A} \, \boldsymbol{\varrho}_{t} \right) 
$$
whenever $\mathbf{X}_0$ is such that
$\boldsymbol{\varrho}_{0} =  \mathbb{E} \ketbra{\mathbf{X}_0}{\mathbf{X}_0}$
(see, e.g., \cite{BarchielliGregoratti2009,MoraJFA,MoraAP,Percival1998}).
This method is called unraveling of \eqref{eq:1.3}.
Two advantages of  solving \eqref{eq:SSE} rather than \eqref{eq:1.3}
are the following: 
(i) the number of unknown complex functions in \eqref{eq:1.3} is the square of 
the components of the vector $\mathbf{X}_t$;
and 
(ii) in many physical systems the values of $\mathbf{X}_t$ are localized into small time-dependent  regions of  $\mathfrak{h}$
(see, e.g, \cite{Percival1998,Schack1995}).
For instance, 
using a current desktop computer 
we can get $\mathbb{E} \langle \mathbf{X}_t , \mathbf{A} \mathbf{X}_t \rangle$,
with $M=0$, 
even if the required basis has thousand of elements.

Now, suppose that the dimension of $ \boldsymbol{\mathfrak{h}}$ is finite, 
i.e., $\dim \left(  \boldsymbol{\mathfrak{h}} \right) < + \infty$,
and that $\boldsymbol{\rho}_{0}$ is a random mixed state,
i.e., $\boldsymbol{\rho}_{0}$ is not a random pure state.
We can transform \eqref{eq:StochasticQME} into a system of 
$ \dim \left( \boldsymbol{\mathfrak{h}} \right)^2$ complex SDEs,
and hence we can compute $\boldsymbol{\rho}_{t}$
by applying classical numerical schemes for SDEs with multiplicative noise 
(see, e.g., \cite{GrahamTalay2013,Kloeden1992,Milstein2004}).
Worse than the mentioned approach of solving \eqref{eq:1.3},
this procedure presents scale issues if $ \dim \left( \boldsymbol{\mathfrak{h}} \right)$ is not in the range of tens,
together with difficulties to yield semi-positive definite  numerical solutions.
In the pure diffusive case (i.e., $M=0$),
Amini, Mirrahimi and Rouchon \cite{Amini2011} introduced 
Scheme \ref{scheme:AminiMirrahimiRouchon} given in Section \ref{subsec:DiffusiveSQMEs}
that is a numerical method tailored to the specific characteristics of \eqref{eq:StochasticQME} 
(see also \cite{Rouchon2015}).
Scheme \ref{scheme:AminiMirrahimiRouchon} preserves the positivity of $\boldsymbol{\rho}_{t}$,
but is inaccurate and slow in our numerical experiments with high dimensional physical systems
(see, e.g., Section \ref{subsec:DiffusiveSQMEs}).

This paper develops a quantum trajectory approach for solving 
\eqref{eq:StochasticQME} with $\dim \left(  \boldsymbol{\mathfrak{h}} \right) < + \infty$.
This makes possible not only the simulation of quantum systems with finite dimensional state spaces 
but also the approximate solution of infinite dimensional stochastic master equations  (see, e.g., Remark \ref{rem:aproximacion}).
In Section \ref{sec:Representation},
we extend the representation \eqref{eq:1.2} to random mixed initial states.
Roughly speaking,
we deduce that
$$
\boldsymbol{\rho}_{t} =  \sum_{k} \ketbra{\mathbf{X}^k_t}{\mathbf{X}^k_t} 
\hspace{2cm}
\forall t \geq 0 ,
$$
where the $\boldsymbol{\mathfrak{h}}$-valued stochastic processes $\mathbf{X}^k_t$'s satisfy a system of 
weakly coupled complex stochastic differential equations of type \eqref{eq:SSE}.
In Section \ref{sec:NumericalSolution},
we design exponential schemes for computing the $\mathbf{X}^k_t$'s,
which are novel even for \eqref{eq:SSE}.
Section \ref{sec:Simulation results} illustrates the very good numerical performance of the new exponential methods
by simulating a quantized electromagnetic field coupled to a two-level system,
which interact among them and with the reservoir.
Section \ref{sec:Proofs} is devoted to the proofs of our theoretical results,
and Section \ref{sec:Conclusion} presents the conclusions.


\section{Representation of the solution to the stochastic master equation}
\label{sec:Representation}

For simplicity, 
we here assume that  the dimension of the space state $\boldsymbol{\mathfrak{h}}$ is finite
(see, e.g., Remark \ref{rem:aproximacion}),
and that
$\mathbf{H}, \mathbf{L}_{j}, \mathbf{R}_{m} : \left[ 0, + \infty \right[ \rightarrow  \boldsymbol{\mathfrak{L}} \left(\boldsymbol{\mathfrak{h}}\right)$
are continuous functions.
Let 
\begin{equation}
 \label{eq:3.1}
 \boldsymbol{\rho}_{0} = \sum_{k=1}^{\mu} \ketbra{\mathbf{X}_0^k}{\mathbf{X}_0^k}
\end{equation}
where 
$\mu \in \mathbb{N}$ 
and 
$\mathbf{X}_0^1,\ldots,\mathbf{X}_0^{\mu}$ are $\mathfrak{F}_0$-measurable $\boldsymbol{\mathfrak{h}}$-valued random variables
satisfying
$
 \sum_{k=1}^{\mu} \left\Vert \mathbf{X}_0^k \right\Vert ^2 = 1
$.
 In practice, there is no loss of generality in assuming \eqref{eq:3.1}.
In many situations,
using the physical meaning of  $\boldsymbol{\rho}_{0}$ we obtain
\begin{equation}
 \label{eq:3.2}
 \boldsymbol{\rho}_{0} = \sum_{k=1}^{\mu} p^k \ketbra{\mathbf{Y}^k}{\mathbf{Y}^k} ,
\end{equation}
with $\mathbf{Y}^k \in \boldsymbol{\mathfrak{h}}$, $\left\Vert \mathbf{Y}^k \right\Vert = 1$, $p^k > 0$
and  $ \sum_{k=1}^{\mu}  p^k  = 1 $,
that is, $ \boldsymbol{\rho}_{0}$ can be represented 
as the mixture of the quantum states $\ketbra{\mathbf{Y}^k}{\mathbf{Y}^k}$ with probability $p^k$.
Taking $\mathbf{X}_0^k = \sqrt{p^k} \, \mathbf{Y}^k$ we obtain \eqref{eq:3.1}.
Otherwise,  
conditioning on $\mathfrak{F}_{0}$ leads to solve \eqref{eq:StochasticQME}
with  the initial density operator $\boldsymbol{\rho}_0$ deterministic,
and so applying the spectral decomposition of $\boldsymbol{\rho}_0$ yields \eqref{eq:3.2},
where $\mu$ is less than or equal to the dimension of $ \boldsymbol{\mathfrak{h}}$,
$p^1, p^2, \ldots, p^{\mu}$ are the positive eigenvalues of 
the non-negative operator $\boldsymbol{\rho}_{0}$
and 
$\mathbf{Y}^1,\ldots, \mathbf{Y}^{\mu}$ are the orthonormal eigenvectors of $\boldsymbol{\rho}_{0}$.

We associate to \eqref{eq:3.1} 
the following system of classical stochastic differential equations  in $\boldsymbol{\mathfrak{h}}$:
\begin{equation}  
\label{eq:SistemSSE}
\begin{aligned}
 \mathbf{X}^k _t   
& =   
\mathbf{X}^k _0
+ \int_{0}^t \mathbf{G} \left ( s-, \mathbf{X}^k_{s-}  \right)  ds
 + \sum_{j=1}^J \int_{0}^t \left( \mathbf{L}_{j}  \left( s \right) \mathbf{X}^k_{s-}  - \ell_j \left( s- \right) \mathbf{X}^k_{s-} \right) dW^j_s
  \\
 & \quad 
 +  \sum_{m=1}^M \int_{0+}^t   \left(  \dfrac{ \mathbf{R}_{m}  \left( s \right) \mathbf{X}^k_{s-}}{ \sqrt{r_{m} \left( s - \right)} } - \mathbf{X}^k_{s-} \right) dN^m_s ,
\end{aligned}
\end{equation}
where
$k=1,\ldots, \mu$,
$ r_{m} \left( s \right) := \sum_{k=1}^{\mu}  \left\Vert  \mathbf{R}_{m}  \left( s \right) \mathbf{X}^k_{s} \right\Vert^2$,
the $N^m$'s are doubly stochastic Poisson processes with predictable compensator 
$
t \rightarrow \int_{0}^t r_{m} \left( s - \right) ds 
$,
$$
\mathbf{G} \left ( s , \mathbf{x} \right)
 =
\mathbf{G} \left ( s \right)  \mathbf{x}
 + \sum_{j=1}^J \left( \ell_j \left( s \right)  \mathbf{L}_{j} \left( s \right) \mathbf{x}  - \dfrac{1}{2}\ell_j \left( s \right) ^2  \mathbf{x} \right)
 + \dfrac{1}{2} \sum_{m=1}^M  \mathbf{R}_{m} \left( s \right) \mathbf{x}
 $$
and 
$\ell_j \left( s \right)  := \sum_{k=1}^{\mu}  \Re \langle  \mathbf{X}^k_{s}, \mathbf{L}_{j}  \left( s \right) \mathbf{X}^k_{s} \rangle$.
The equations of \eqref{eq:SistemSSE} are coupled by
the functions $\ell_j$ and $r_{m}$.
We next represent the solution of \eqref{eq:StochasticQME} 
by means of the generalized non-linear stochastic Schr\"odinger equations \eqref{eq:SistemSSE}.

\begin{theorem}
\label{th:Representation}
Suppose that the dimension of $\boldsymbol{\mathfrak{h}}$ is finite.
Let $\mathbf{X}^1_t, \ldots, \mathbf{X}^{\mu}_t$ satisfy the system \eqref{eq:SistemSSE}.
Set
\begin{equation}
\label{eq:Representation}
 \boldsymbol{\rho}_{t} =  \sum_{k=1}^{\mu}  \ketbra{\mathbf{X}^k_t}{\mathbf{X}^k_t} 
\hspace{2cm}
\forall t \geq 0 .
\end{equation}
Then $\left( \boldsymbol{\rho}_t \right)_{t \geq 0}$ solves 
the stochastic master equation \eqref{eq:StochasticQME} with initial datum \eqref{eq:3.1}.
\end{theorem}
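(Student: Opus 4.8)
The plan is to verify \eqref{eq:StochasticQME} directly by applying the It\^o formula for jump--diffusions to each rank-one process $\boldsymbol{\rho}^k_t := \ketbra{\mathbf{X}^k_t}{\mathbf{X}^k_t}$ and summing over $k$, since $\boldsymbol{\rho}_t = \sum_{k=1}^{\mu}\boldsymbol{\rho}^k_t$ and the map $\mathbf{x}\mapsto\ketbra{\mathbf{x}}{\mathbf{x}}$ is quadratic (linear in the ket, conjugate-linear in the bra). First I would write the differential product rule $d\boldsymbol{\rho}^k_t = \ketbra{d\mathbf{X}^k_t}{\mathbf{X}^k_{t-}} + \ketbra{\mathbf{X}^k_{t-}}{d\mathbf{X}^k_t} + d[\mathbf{X}^k,\mathbf{X}^k]_t$, being careful that conjugation enters through the bra. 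The decisive bookkeeping observation, valid after summing over $k$, is that the coupling coefficients are exactly the global nonlinear quantities appearing in \eqref{eq:StochasticQME}: namely $\ell_j(t) = \sum_k\Re\langle\mathbf{X}^k_t,\mathbf{L}_j(t)\mathbf{X}^k_t\rangle = \Re(\tr(\mathbf{L}_j(t)\boldsymbol{\rho}_t))$ and $r_m(t) = \sum_k\|\mathbf{R}_m(t)\mathbf{X}^k_t\|^2 = \tr(\mathbf{R}_m(t)^*\mathbf{R}_m(t)\boldsymbol{\rho}_t)$. In particular this identifies the compensator $\int_0^t r_m(s-)\,ds$ of $N^m$ with the one prescribed in \eqref{eq:StochasticQME}, so the same driving processes serve both equations.

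For the continuous (Brownian) contribution I would collect the coefficient of $dW^j_t$: from $\ketbra{(\mathbf{L}_j-\ell_j)\mathbf{X}^k_t}{\mathbf{X}^k_t} + \ketbra{\mathbf{X}^k_t}{(\mathbf{L}_j-\ell_j)\mathbf{X}^k_t}$ one gets $\mathbf{L}_j\boldsymbol{\rho}^k_t + \boldsymbol{\rho}^k_t\mathbf{L}_j^* - 2\ell_j\boldsymbol{\rho}^k_t$ (using that $\ell_j$ is real), and summing over $k$ with $\ell_j = \Re(\tr(\mathbf{L}_j\boldsymbol{\rho}_t))$ reproduces exactly the diffusion coefficient in \eqref{eq:StochasticQME}. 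For the jump part, on an event $\Delta N^m_t = 1$ the normalized jump of $\mathbf{X}^k$ turns $\boldsymbol{\rho}^k_{t-}$ into $\mathbf{R}_m\boldsymbol{\rho}^k_{t-}\mathbf{R}_m^*/r_m(t-)$, so $\Delta\boldsymbol{\rho}^k_t = \mathbf{R}_m\boldsymbol{\rho}^k_{t-}\mathbf{R}_m^*/r_m(t-) - \boldsymbol{\rho}^k_{t-}$; summing over $k$ and using $r_m(t-) = \tr(\mathbf{R}_m^*\mathbf{R}_m\boldsymbol{\rho}_{t-})$ yields the uncompensated jump coefficient of \eqref{eq:StochasticQME}. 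I would then split $dN^m_t = (dN^m_t - r_m(t-)\,dt) + r_m(t-)\,dt$ to put the jump term in the compensated form of \eqref{eq:StochasticQME}, at the cost of an extra drift $\sum_m(\mathbf{R}_m\boldsymbol{\rho}_{t-}\mathbf{R}_m^* - r_m(t-)\boldsymbol{\rho}_{t-})\,dt$.

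It remains to assemble all the $dt$-terms and check they collapse to $\boldsymbol{\mathcal{L}}(t)\boldsymbol{\rho}_{t-}$. Expanding $\ketbra{\mathbf{G}(t,\mathbf{X}^k_t)}{\mathbf{X}^k_t} + \ketbra{\mathbf{X}^k_t}{\mathbf{G}(t,\mathbf{X}^k_t)}$ together with the second-order It\^o correction $\sum_j\ketbra{(\mathbf{L}_j-\ell_j)\mathbf{X}^k_t}{(\mathbf{L}_j-\ell_j)\mathbf{X}^k_t}$, the $\ell_j$-linear cross terms cancel against the corresponding terms of $\mathbf{G}(t,\cdot)$, the $\ell_j^2$ terms cancel likewise, and the surviving continuous drift is $\mathbf{G}\boldsymbol{\rho} + \boldsymbol{\rho}\mathbf{G}^* + \sum_j\mathbf{L}_j\boldsymbol{\rho}\mathbf{L}_j^* + \sum_m r_m\boldsymbol{\rho}$, where the last summand comes from the $\tfrac{1}{2}\sum_m r_m(\cdot)$ correction in $\mathbf{G}(t,\cdot)$, contributing once through the ket and once through the bra. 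This extra $\sum_m r_m\boldsymbol{\rho}$ cancels precisely the $-\sum_m r_m(t-)\boldsymbol{\rho}_{t-}$ produced by the jump compensator in the previous step, leaving $\mathbf{G}\boldsymbol{\rho} + \boldsymbol{\rho}\mathbf{G}^* + \sum_j\mathbf{L}_j\boldsymbol{\rho}\mathbf{L}_j^* + \sum_m\mathbf{R}_m\boldsymbol{\rho}\mathbf{R}_m^* = \boldsymbol{\mathcal{L}}(t)\boldsymbol{\rho}_{t-}$, as required; the initial value $\boldsymbol{\rho}_0 = \sum_k\ketbra{\mathbf{X}^k_0}{\mathbf{X}^k_0}$ is \eqref{eq:3.1}. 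The main obstacle is not depth but this drift bookkeeping: one must track the conjugations in the sesquilinear It\^o formula correctly and, above all, verify that the compensator drift of the jumps and the $\tfrac{1}{2}\sum_m r_m$-drift built into $\mathbf{G}(t,\cdot)$ are tuned to cancel. It is essential here that $r_m$ is the \emph{global} quantity $\sum_k\|\mathbf{R}_m\mathbf{X}^k\|^2$ rather than the per-trajectory norm; this is the mechanism by which the nonlinear, weakly coupled system closes into the generator $\boldsymbol{\mathcal{L}}$, which is linear in $\boldsymbol{\rho}$.
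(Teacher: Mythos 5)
Your proposal is correct and follows essentially the same route as the paper's proof: apply the It\^o formula for jump--diffusions to each rank-one process $\ketbra{\mathbf{X}^k_t}{\mathbf{X}^k_t}$, identify the coupling scalars as $\ell_j(s)=\Re\left(\tr\left(\mathbf{L}_j(s)\boldsymbol{\rho}_s\right)\right)$ and $r_m(s)=\tr\left(\mathbf{R}_m(s)^*\mathbf{R}_m(s)\boldsymbol{\rho}_s\right)$ (so the compensators agree), sum over $k$, and observe that the $\ell_j$-cross and $\ell_j^2$ terms cancel while the $\tfrac12\sum_m r_m$ drift built into $\mathbf{G}(t,\cdot)$ cancels against the jump compensator, leaving $\boldsymbol{\mathcal{L}}(t)\boldsymbol{\rho}_{t-}$. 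Your drift bookkeeping in fact makes explicit the recompensation step that the paper leaves implicit after its equation \eqref{eq:3.5}, and you correctly read the scalar $r_m$ where the paper's formulas typeset $\mathbf{R}_m$; the only detail worth adding is an explicit appeal to the no-common-jumps hypothesis (the paper's opening computation of $\left[ N^m, N^{\bar m}\right]$) to justify treating the jump events of the $N^m$'s one at a time.
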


\begin{proof}
 Deferred to Subsection \ref{sec:Proof:Representation}.
\end{proof}

\begin{remark}
\label{rem:PureInitialState}
Let $\mu =1$, that is, $\boldsymbol{\rho}_{0}$ is the random pure state $\ketbra{\mathbf{X}_0}{\mathbf{X}_0} $.
Then \eqref{eq:SistemSSE}  becomes \eqref{eq:SSE} with initial condition $\mathbf{X}_0$.
According to Theorem \ref{th:Representation} we have  
$\boldsymbol{\rho}_{t} = \ketbra{\mathbf{X}_t}{\mathbf{X}_t}$,
which is a well-known relation (see, e.g., \cite{BarchielliBelavkin1991}). 
\end{remark}

\begin{remark}
Using  Girsanov's theorem,
we can obtain one solution of \eqref{eq:StochasticQME},
also called Belavkin equation,
as the normalized solution to certain linear SDE 
after changing the original probability measure
(see, e.g., \cite{BarchielliHolevo1995} for details).
Pellegrini \cite{Pellegrini2010,PellegriniIHP2010} proves 
the existence and uniqueness of the solution to the following version of  \eqref{eq:StochasticQME}:
 \begin{align*}
 d\boldsymbol{\rho}_{t}	
 & =	
 \boldsymbol{\rho}_{0} 
 + \int_{0}^t \left( \boldsymbol{\mathcal{L}} \left( s \right)  \boldsymbol{\rho}_{s-} 
 + \tr \left( \mathbf{R}_{m} \left( s \right) ^*\mathbf{R}_{m} \left( s \right) \boldsymbol{\rho}_{s-} \right) \boldsymbol{\rho}_{s-}
 - \mathbf{R}_{m} \left( s \right) \boldsymbol{\rho}_{s-}  \mathbf{R}_{m} \left(  s \right) ^*
 \right) ds 
 \\
& \quad 
+ \sum_{j =1}^J 
\int_{0}^t  \left( \mathbf{L}_{j} \left( s \right) \boldsymbol{\rho}_{s-} + \boldsymbol{\rho}_{s-} \mathbf{L}_{j} \left( s \right)^* 
 - 2 \Re  \left(  \tr \left( \mathbf{L}_{j} \left(  s \right) \boldsymbol{\rho}_{s-} \right) \right)  \boldsymbol{\rho}_{s-} \right) dW^{j}_{s}
\\
& \quad
+ 
\sum_{m=1}^M  \int_{0+}^t \int_{\mathbb{R}_+}
 \left( 
\frac{ \mathbf{R}_{m} \left( s \right) \boldsymbol{\rho}_{s-}  \mathbf{R}_{m} \left(  s \right) ^*}
{ \tr \left( \mathbf{R}_{m} \left( s \right) ^*\mathbf{R}_{m} \left( s \right) \boldsymbol{\rho}_{s-} \right) } 
- \boldsymbol{\rho}_{s-} \right)  \mathbf{1}_{0 \leq x \leq \tr \left( \mathbf{R}_{m} \left( s \right) ^*\mathbf{R}_{m} \left( s \right) \boldsymbol{\rho}_{s-} \right) }
N^m \left( ds, dx \right) ,
\end{align*}
where  $ N^1, \ldots, N^M$ are independent adapted Poisson point processes of intensity $dt \otimes dx$
that are independent of $W^1,\ldots, W^J$.
Here, $\boldsymbol{\rho}_{t}$ satisfies \eqref{eq:StochasticQME} with 
 $
 N^{m}_{t} 
 = 
 \int_{0}^t \int_{\mathbb{R}} 
  \mathbf{1}_{0 \leq x \leq \tr \left( \mathbf{R}_{m} \left( s \right) ^*\mathbf{R}_{m} \left( s \right) \boldsymbol{\rho}_{s-} \right) }
  N^m \left( ds, dx \right)
$.
Analysis similar to that in \cite{Pellegrini2010,PellegriniIHP2010} yields 
the existence and uniqueness of the following version of \eqref{eq:SistemSSE}:
for all $k=1,\ldots, \mu$,
\begin{align*}
 \mathbf{X}^k _t   
& =   
\mathbf{X}^k _0
+ \int_{0}^t \mathbf{G} \left ( s-, \mathbf{X}^k_{s-}  \right)  ds
 + \sum_{j=1}^J \int_{0}^t \left( \mathbf{L}_{j}  \left( s \right) \mathbf{X}^k_{s-}  - \ell_j \left( s- \right) \mathbf{X}^k_{s-} \right) dW^j_s
  \\
 & \quad 
 +  \sum_{m=1}^M \int_{0+}^t   \int_{\mathbb{R}_+}
 \left(  \dfrac{ \mathbf{R}_{m}  \left( s \right) \mathbf{X}^k_{s-}}{ \sqrt{r_{m} \left( s - \right)} } - \mathbf{X}^k_{s-} \right) 
 \mathbf{1}_{0 \leq x \leq \mathbf{R}_{m} \left( s - \right) } N^m \left( ds, dx \right) .
 \end{align*}
\end{remark}

\begin{remark}
\label{rem:aproximacion}
The dynamics of quantum physical systems with infinite-dimensional state space
can be approximated by finite-dimensional stochastic quantum master equations. 
Indeed,
we can approximate \eqref{eq:StochasticQME} with $\dim \left(  \boldsymbol{\mathfrak{h}} \right) = \infty$
by finite-dimensional stochastic master equations
in a similar way to that of the numerical solution of the quantum master equations
(see, e.g., \cite{MoraMC2004}).
To this end,
we should find a finite dimensional subspace $\boldsymbol{\mathfrak{h}}_d$ of $\boldsymbol{\mathfrak{h}}$
such that 
$\mathbf{H} \left( t \right) \approx \overline{\mathbf{H}} \left( t \right) := \mathbf{P}_d \, \mathbf{H} \left( t \right) \mathbf{P}_d$,
$ \mathbf{L}_{j} \left( t \right) \approx \overline{\mathbf{L}}_{j} \left( t \right) := \mathbf{P}_d \, \mathbf{L}_{j} \left( t \right) \mathbf{P}_d$
and
$\mathbf{R}_{j} \left( t \right) \approx \overline{\mathbf{R}}_{j} \left( t \right) := \mathbf{P}_d \, \mathbf{R}_{j} \left( t \right) \mathbf{P}_d$,
where 
$\mathbf{P}_d$ is the orthogonal projection of  $\boldsymbol{\mathfrak{h}}$ onto $\boldsymbol{\mathfrak{h}}_d$.
Then $\boldsymbol{\rho}_{t}$ is approximated by the solution of \eqref{eq:StochasticQME}
with initial datum $\mathbf{P}_d \boldsymbol{\rho}_0 \mathbf{P}_d$
and 
$\mathbf{H} \left( t \right)$, $ \mathbf{L}_{j} \left( t \right) $, $\mathbf{R}_{j} \left( t \right)$
replaced by the operators
$\overline{\mathbf{H}} \left( t \right) $, 
$ \overline{\mathbf{L}}_{j} \left( t \right) $, 
$\overline{\mathbf{R}}_{j} \left( t \right) $.
\end{remark}

\section{Numerical solution of the quantum master equations}
\label{sec:NumericalSolution}

\subsection{The Euler-exponential scheme}
\label{subsec:EulerExponential}

In this section we develop 
the numerical solution of the system of non-linear stochastic Schr\"odinger equations \eqref{eq:SistemSSE}
with $\boldsymbol{\mathfrak{h}}$ finite-dimensional.
To this end,
we consider the time discretization 
$ 0 = T_0 < T_1 < T_2 < \cdots $ of the interval $\left[ 0 , \infty \right[ $,
where the $T_n$'s are stopping times. 
Suppose that $ t \in \left[ T_n , T_{n+1} \right] $.
Decomposing  
$\mathbf{G} \left( t \right) = \mathbf{G} \left( T_{n} \right) + \left( \mathbf{G} \left( t \right) - \mathbf{G} \left( T_{n} \right) \right)$
we obtain from \eqref{eq:SistemSSE} that 
$$
\mathbf{X}^k_t
=
\mathbf{S}^k_t + \int_{T_n}^t \mathbf{G} \left( T_{n} \right)   \mathbf{X}^k_{s-}  ds,
$$
where $k=1,\ldots, \mu$ and 
\begin{align*}
\mathbf{S}^k_t
& =
\mathbf{X}^k_{T_n}
+
\int_{T_n }^{t}  \left(  \left( \mathbf{G} \left ( s \right)  - \mathbf{G} \left( T_{n} \right)  \right) \mathbf{X}^k_{s-}  +  \widetilde{\mathbf{g}}  \left( s-,  \mathbf{X}_{s-} \right)   \right) ds
 \\
 & \quad 
 + \sum_{j=1}^J \int_{T_n}^t \left( \mathbf{L}_{j}  \left( s \right) \mathbf{X}^k_{s-}  - \ell_j \left( s - \right) \mathbf{X}^k_{s-} \right) dW^j_s 
 +  \sum_{m=1}^M \int_{T_n+}^t   \left(  \dfrac{ \mathbf{R}_{m}  \left( s \right) \mathbf{X}^k_{s-}}{ \sqrt{r_{m} \left( s - \right)} } - \mathbf{X}^k_{s-} \right) dN^m_s 
\end{align*}
with 
$
\widetilde{\mathbf{g}}  \left( s, \mathbf{x} \right)
=
 \sum_{j=1}^J \left( \ell_j \left( s \right)  \mathbf{L}_{j} \left( s \right) \mathbf{x}  - \ell_j \left( s \right) ^2  \mathbf{x}/2   \right)
 +  \sum_{m=1}^M  \mathbf{R}_{m} \left( s \right) \mathbf{x}/2 .
$
Hence
$$
\mathbf{X}^k_t 
=
\exp \left( \mathbf{G} \left ( T_n \right)   \left( t - T_n  \right) \right) \mathbf{S}^k_{T_n} 
+
\int_{T_n +}^t \exp \left( \mathbf{G} \left ( T_n \right)   \left( t - s  \right) \right) d \mathbf{S}^k_s
$$
(see, e.g., \cite{Protter2005}).
This gives
\begin{equation}
 \label{eq:2.5}
 \begin{aligned}
 \mathbf{X}^k_t
 = &
\exp{\left( \mathbf{G} \left ( T_n \right)   \left( t - T_n  \right)  \right )} \mathbf{X}^k_{T_n}
\\
&+
\int_{T_n }^{t}  
\exp{\left( \mathbf{G} \left ( T_n \right)   \left( t - s  \right)\right )  }
\left(  \left( \mathbf{G} \left ( s \right)   - \mathbf{G} \left( T_{n} \right)  \right) \mathbf{X}^k_{s-}  +  \widetilde{\mathbf{g}}  \left( s -,  \mathbf{X}^k_{s-} \right)   \right) ds
 \\
 & 
 + \sum_{j=1}^J
\int_{T_n+}^{t}  
 \exp{ \left( \mathbf{G} \left ( T_n \right)   \left( t - s  \right) \right ) } \left( \mathbf{L}_{j}  \left( s \right) \mathbf{X}^k_{s-}  - \ell_j \left( s - \right) \mathbf{X}^k_{s-} \right) 
dW^j_s
\\
&
 +  
 \sum_{m=1}^M 
 \int_{T_n+}^{t}  
\exp{ \left( \mathbf{G} \left ( T_n \right)   \left( t - s  \right)\right )  }  \left(  \dfrac{ \mathbf{R}_{m}  \left( s \right) \mathbf{X}^k_{s-}}{ \sqrt{r_{m} \left( s - \right)} } - \mathbf{X}^k_{s-} \right)
dN^m_s  .
\end{aligned}
\end{equation}

We now approximate the terms in \eqref{eq:2.5}.
Applying the Euler approximation yields
\begin{equation}
\label{eq:2.2}
\begin{aligned}
&
\int_{T_n }^{T_{n+1}}  
\exp{\left( \mathbf{G} \left ( T_n \right)   \left( T_{n+1} - s  \right) \right ) }
\left(  \left( \mathbf{G} \left ( s \right)   - \mathbf{G} \left( T_{n} \right)  \right) \mathbf{X}^k_{s-}  +  \widetilde{\mathbf{g}}  \left( s -,  \mathbf{X}^k_{s-} \right)   \right) ds
\\
& \quad \approx
\exp{ \left( \mathbf{G} \left ( T_n \right)   \left( T_{n+1} - T_n  \right) \right ) }  \widetilde{\mathbf{g}}  \left( T_n,  \mathbf{X}^k_{T_n}  \right) \left(  T_{n+1} - T_n  \right)
\end{aligned}
\end{equation}
and 
\begin{equation}
 \label{eq:2.1}
\begin{aligned}
 & \int_{T_n+}^{ T_{n+1} }  
 \exp{\left( \mathbf{G} \left ( T_n \right)   \left( T_{n+1} - s  \right)  \right )} \left( \mathbf{L}_{j}  \left( s \right) \mathbf{X}^k_{s-}  - \ell_j \left( s - \right) \mathbf{X}^k_{s-} \right) 
dW^j_s
\\
& \quad  \approx
\exp{\left( \mathbf{G} \left ( T_n \right)   \left( T_{n+1} - T_n  \right)  \right )} 
\left( \mathbf{L}_{j}  \left( T_n \right) \mathbf{X}^k_{T_n}  - \ell_j \left( T_n \right) \mathbf{X}^k_{T_n} \right) 
\left( W^j_{T_{n+1}} - W^j_{T_{n}}  \right) .
\end{aligned}
\end{equation}
Moreover, 
in case $r_{m} \left( T_n \right) \neq 0$ we have
\begin{equation}
 \label{eq:2.3}
\begin{aligned}
&
\int_{T_n+}^{ T_{n+1} }  
\exp{ \left( \mathbf{G} \left ( T_n \right)   \left( T_{n+1} - s  \right) \right ) }  \left(  \dfrac{ \mathbf{R}_{m}  \left( s \right) \mathbf{X}^k_{s-}}{ \sqrt{r_{m} \left( s - \right)} } - \mathbf{X}^k_{s-} \right)
dN^m_s 
\\
& \quad  \approx
\exp{\left( \mathbf{G} \left ( T_n \right)   \left( T_{n+1} - T_n  \right) \right) }  
\left(  \dfrac{ \mathbf{R}_{m}  \left( T_n \right) \mathbf{X}^k_{ T_n }}{ \sqrt{r_{m} \left( T_n \right)} } - \mathbf{X}^k_{T_n} \right)
\left( N^m_{T_{n+1}}  - N^m_{T_n}  \right) .
\end{aligned}
\end{equation}
If $r_{m} \left( T_n \right) = 0$,
then 
$
N^m_{s}  - N^m_{T_n} \approx 0
$
for all $s \in \left[ T_n , T_{n+1} \right]$, and so
\begin{equation}
\label{eq:2.10}
\int_{T_n+}^{ T_{n+1} }  
\exp{\left( \mathbf{G} \left ( T_n \right)   \left( T_{n+1} - s  \right) \right ) }  \left(  \dfrac{ \mathbf{R}_{m}  \left( s \right) \mathbf{X}^k_{s-}}{ \sqrt{r_{m} \left( s - \right)} } - \mathbf{X}^k_{s-} \right)
dN^m_s 
\approx 0 .
\end{equation}

Suppose that we are able to get the approximations 
$\bar{\mathbf{X}}^1_n, \ldots, \bar{\mathbf{X}}^{\mu}_n $
of the solution of \eqref{eq:SistemSSE} at time $T_n$.
To make this more precise we use the symbol $\approx_w$ to denote 
the approximation in the weak sense 
(see, e.g., \cite{GrahamTalay2013,Kloeden1992,Milstein2004}),
that is,  
$\mathbf{Y} \approx_w \mathbf{Z}$ means, roughly speaking, that
the distributions of the random variables  $\mathbf{Y}$, $\mathbf{Z}$ approximate each other.
Then,
we actually assume that
$\bar{\mathbf{X}}^1_n, \ldots, \bar{\mathbf{X}}^{\mu}_n $ are $\mathfrak{F}_{T_n}$-measurable random variables 
such that 
$
 \sum_{k=1}^{\mu} \left\Vert \bar{\mathbf{X}}_n^k \right\Vert ^2 = 1
$
and 
$\bar{\mathbf{X}}^k_n \approx_w \mathbf{X}^k_{T_n}$ for any $k=1,\ldots, \mu$.
Therefore,
$
r_{m} \left( s \right) \approx_w \bar{r}_n^m 
:= \sum_{k=1}^{\mu}  \left\Vert  \mathbf{R}_{m}  \left( T_n \right) \bar{\mathbf{X}}^k_n \right\Vert^2 
$
and 
$
\ell_j \left( s \right) \approx_w \bar{\ell}^n_j 
:= \sum_{k=1}^{\mu}  \Re \langle  \bar{\mathbf{X}}^k_n, \mathbf{L}_{j}  \left( T_n \right) \bar{\mathbf{X}}^k_n \rangle
$
for all 
$s \in \left[ T_n , T_{n+1} \right] $.

Next,
we simulate numerically the right-hand side of \eqref{eq:2.3} and \eqref{eq:2.10}.
In order to approximate the increments 
$ \left( W^j_{T_{n+1}} - W^j_{T_{n}} \right) / \sqrt{ T_{n+1} - T_{n} }$
and 
$N^m_{T_{n+1}}  - N^m_{T_n} $,
we consider the random variables 
$\overline{\delta W}^1_{n+1}, \ldots$, $\overline{\delta W}^J_{n+1}$,
$\overline{\delta N}^1_{n+1}, \ldots $, $\overline{\delta N}^{M}_{n+1} $,
which are  conditionally independent relative to $\mathfrak{F}_{T_n}$
(see, e.g., \cite{Chow1997,Chung2001}),  such that:
(i) $
 \overline{\delta W}^1_{n+1}, \ldots, \overline{\delta W}^J_{n+1}$
are identically distributed symmetric random variables with variance $1$
that are independent of $\mathfrak{F}_{T_n}$; and 
(ii) 
the conditional distribution of $\overline{\delta N}^m_{n+1}$ given $\mathfrak{F}_{T_n}$ 
is the Poisson law with parameter 
$ \bar{r}_n^m  \left( T_{n+1} - T_n \right) $
for any $m=1,\ldots,M$.
Since
$$
 \int_{T_n}^{t} r_{m} \left( s - \right) ds
 \approx 
 r_{m} \left( T_n \right)  \left( t - T_n \right)
 \approx_w 
 \bar{r}_n^m  \left( t - T_n \right)
\hspace{1cm}
\forall  t \in \left[ T_n , T_{n+1} \right] ,
$$
$ \overline{\delta N}^m_{n+1} \approx_w N^m_{T_{n+1}}  - N^m_{T_n} $.
Hence, 
combining \eqref{eq:2.3} with \eqref{eq:2.10} gives
\begin{equation}
 \label{eq:2.9}
\begin{aligned}
&
\int_{T_n+}^{ T_{n+1} }  
\exp{\left( \mathbf{G} \left ( T_n \right)   \left( T_{n+1} - s  \right)  \right )}  \left(  \dfrac{ \mathbf{R}_{m}  \left( s \right) \mathbf{X}^k_{s-}}{ \sqrt{r_{m} \left( s - \right)} } - \mathbf{X}^k_{s-} \right)
dN^m_s 
\\
& \quad \approx_w
\exp{\left( \mathbf{G} \left ( T_n \right)   \left( T_{n+1} - T_n  \right) \right ) }  
\left(  \dfrac{ \mathbf{R}_{m}  \left( T_n \right) \bar{\mathbf{X}}^k_n }{ \sqrt{\bar{r}_n^m }} -  \bar{\mathbf{X}}^k_n \right)
\overline{\delta N}^m_{n+1} ,
\end{aligned}
\end{equation}
where, for simplicity of notation,
we define 
$
\left(  { \mathbf{R}_{m}  \left( T_n \right) \bar{\mathbf{X}}^k_n }/{ \sqrt{\bar{r}_n^m }} -  \bar{\mathbf{X}}^k_n \right)
\overline{\delta N}^m_{n+1}
$
to be $0$ provided that $\bar{r}_n^m = 0$.

Using  \eqref{eq:2.5}, \eqref{eq:2.2}, \eqref{eq:2.1}, \eqref{eq:2.9},
together with 
$$
\ell_j \left( T_n \right) \approx_w \bar{\ell}^n_j 
:= \sum_{k=1}^{\mu}  \Re \langle  \bar{\mathbf{X}}^k_n, \mathbf{L}_{j}  \left( T_n \right) \bar{\mathbf{X}}^k_n \rangle ,
$$
we deduce that 
$\mathbf{X}^k_{T_{n+1}} \approx_w \hat{\mathbf{X}}^k_{n+1}$ for all  $k=1,\ldots, \mu$,
where 
\begin{align*}
 \hat{\mathbf{X}}^k_{n+1}
& =
\exp{ \left( \mathbf{G} \left ( T_n \right)   \left( T_{n+1} - T_n  \right) \right ) } \bar{\mathbf{X}}^k_n
+
\left( T_{n+1} - T_n  \right) \exp{\left ( \mathbf{G} \left ( T_n \right)   \left( T_{n+1} - T_n  \right) \right ) }  \bar{\mathbf{g}}^k_n   
 \\
 & \quad 
 + \sum_{j=1}^J
 \exp{\left( \mathbf{G} \left ( T_n \right)   \left( T_{n+1} - T_n  \right) \right ) } \left( \mathbf{L}_{j}  \left( T_n \right) \bar{\mathbf{X}}^k_n  - \bar{\ell}_n^j  \, \bar{\mathbf{X}}^k_n \right) 
\sqrt{T_{n+1} - T_n} \,  \overline{\delta W}^j_{n+1} 
\\
& \quad   
+  \sum_{m=1}^M  
\exp{\left( \mathbf{G} \left ( T_n \right)   \left( T_{n+1} - T_n  \right) \right ) }  \left(  \dfrac{ \mathbf{R}_{m}  \left( T_n \right) \bar{\mathbf{X}}^k_n }{ \sqrt{\bar{r}_n^m }} -  \bar{\mathbf{X}}^k_n \right)
\overline{\delta N}^m_{n+1} 
\end{align*}
with
$
\bar{\mathbf{g}}^k_n 
= 
\sum_{j=1}^J \left( \bar{\ell}_n^j  \, \mathbf{L}_j  \left( T_n \right) \bar{\mathbf{X}}^k_n  - {1}/{2} \left(\bar{\ell}_n^j\right)  ^2  \bar{\mathbf{X}}^k_n  \right)
 + {1}/{2}\, \sum_{m=1}^M  \bar{r}_n^m \, \bar{\mathbf{X}}^k_n 
$.
Therefore,
$$
\hat{\boldsymbol{\rho}}_{n+1}
:=
\sum_{k=1}^{\mu}   \ketbra{ \hat{\mathbf{X}}^k_{n+1} }{\hat{\mathbf{X}}^k_{n+1}}
\approx_w
\sum_{k=1}^{\mu}   \ketbra{ \mathbf{X}^k_{T_{n+1}} }{ \mathbf{X}^k_{T_{n+1}} } ,
$$
and so Theorem \ref{th:Representation} yields 
$
\hat{\boldsymbol{\rho}}_{n+1}
\approx_w
\boldsymbol{\rho}_{T_{n+1}}$.

In order to preserve the important physical property $\tr \left( \boldsymbol{\rho}_t \right) =1 $,
we normalize  $\hat{\boldsymbol{\rho}}_{n+1} $ by $\tr \left( \hat{\boldsymbol{\rho}}_{n+1} \right)$.
Thus, using $\tr \left( \hat{\boldsymbol{\rho}}_{n+1} \right) = \sum_{k=1}^{\mu}  \left\Vert \hat{\mathbf{X}}^k_{n+1} \right\Vert^2 $ 
we deduce that 
$$
\boldsymbol{\rho}_{T_{n+1}} 
\approx_w
\sum_{k=1}^{\mu}   \ketbra{ \frac{ \hat{\mathbf{X}}^k_{n+1}}{\sqrt{\sum_{k=1}^{\mu}  \left\Vert \hat{\mathbf{X}}^k_{n+1} \right\Vert^2}} }{\frac{ \hat{\mathbf{X}}^k_{n+1}}{\sqrt{\sum_{k=1}^{\mu}  \left\Vert \hat{\mathbf{X}}^k_{n+1} \right\Vert^2}}} .
$$
Summarizing, we have obtained the following numerical scheme of exponential type.
\begin{scheme}
 \label{scheme:SME_Euler_Exponential}
Suppose that the random variables $\bar{\mathbf{X}}^1_{0}, \ldots, \bar{\mathbf{X}}^{\mu}_{0}$
with values in $\boldsymbol{\mathfrak{h}}$ satisfy 
$
 \sum_{k=1}^{\mu} \left\Vert \bar{\mathbf{X}}_0^k \right\Vert ^2 = 1
$. 
Let 
$\overline{\delta W}^1_1, \overline{\delta W}^2_1,  \ldots,$ $\overline{\delta W}^J_1, \overline{\delta W}^1_2, \ldots$ 
be i.i.d. symmetric real random variables with variance $1$
that are independent of $\bar{\mathbf{X}}^1_{0}, \ldots, \bar{\mathbf{X}}^{\mu}_{0}$.
For any $n \geq 0$, we approximate $\boldsymbol{\rho}_{T_n} $ by 
$
\bar{\boldsymbol{\rho}}_{n} := 
\sum_{k=1}^{\mu}   \ketbra{ \bar{\mathbf{X}}^k_{n} }{\bar{\mathbf{X}}^k_{n}  } ,
$
where 
the $\boldsymbol{\mathfrak{h}}$-valued random vectors $\bar{\mathbf{X}}^1_{n} , \ldots, \bar{\mathbf{X}}^{\mu}_{n}$,
together with the real random variables 
$\overline{\delta N}^1_{1}, \ldots, \overline{\delta N}^{M}_{1}$, $\overline{\delta N}^{1}_{2} , \ldots $,
are defined recursively as follows:
\begin{itemize}

\item
Generate the  random variables $\overline{\delta N}^1_{n+1}, \ldots, \overline{\delta N}^{M}_{n+1} $
such that: 
\begin{description}
 \item[(i)] $\overline{\delta N}^1_{n+1}, \ldots$, $\overline{\delta N}^{M}_{n+1} $
are conditionally  independent relative to the $\sigma$-algebra  $\mathfrak{G}_{n}$ generated by 
$\bar{\mathbf{X}}^1_{0}, \ldots$, $\bar{\mathbf{X}}^{\mu}_{0}$, $\overline{\delta W}^1_1, \ldots$, $\overline{\delta W}^J_n$, 
$\overline{\delta N}^1_{1}, \ldots, \overline{\delta N}^{M}_{n}$.

\item[(ii)] $\overline{\delta N}^1_{n+1}, \ldots, \overline{\delta N}^{M}_{n+1} $ are independent of 
the  $\sigma$-algebra  generated by $\overline{\delta W}^1_{n+1}, \ldots$, $\overline{\delta W}^J_{n+1}$.

\item[(iii)] For any $m=1,\ldots,M$,
the conditional distribution of $\overline{\delta N}^m_{n+1}$ given $\mathfrak{G}_{n}$
is the Poisson law with parameter 
$ \left( T_{n+1} - T_n \right) \bar{r}_n^m $, 
where
$
\bar{r}_n^m = \sum_{k=1}^{\mu}   \left\Vert  \mathbf{R}_{m}  \left( T_n \right) \bar{\mathbf{X}}^k_n \right\Vert^2 
$.
\end{description}

\item  For any $k=1,\ldots, \mu$ we choose 
$$
\bar{\mathbf{X}}^k_{n+1} 
 = 
\begin{cases}
\bar{\mathbf{X}}^k_{n} 
&
\text{if } 
\hat{\mathbf{X}}^1_{n+1}= \ldots = \hat{\mathbf{X}}^{\mu}_{n+1} = 0 ,
\\
\hat{\mathbf{X}}^k_{n+1} / \sqrt{\sum_{k=1}^{\mu}  \left\Vert \hat{\mathbf{X}}^k_{n+1} \right\Vert^2} 
&
\text{otherwise},
\end{cases} 
$$
with
$
\hat{\mathbf{X}}^k_{n+1} = \exp \left( \mathbf{G} \left ( T_n \right)   \left( T_{n+1} - T_n  \right)  \right) \bar{\mathbf{Z}}^k_{n+1} ,
$
where
\begin{align*}
\bar{\mathbf{Z}}^k_{n+1}
& =
\bar{\mathbf{X}}^k_n 
+ \left(  T_{n+1} - T_n  \right) \bar{\mathbf{g}}^k_n
+ \sum_{j=1}^J \left( \mathbf{L}_{j}  \left( T_n \right) \bar{\mathbf{X}}^k_n  - \bar{\ell}_n^j  \, \bar{\mathbf{X}}^k_n \right) 
\sqrt{T_{n+1} - T_n} \,  \overline{\delta W}^j_{n+1} 
\\
& \quad 
+  \sum_{m=1}^M    \left(  { \mathbf{R}_{m}  \left( T_n \right) \bar{\mathbf{X}}^k_n }/{ \sqrt{\bar{r}_n^m }} -  \bar{\mathbf{X}}^k_n \right)
\overline{\delta N}^m_{n+1} ,
\end{align*}
$
\bar{\mathbf{g}}^k_n 
= 
\sum_{j=1}^J \left( \bar{\ell}_n^j \mathbf{L}_j  \left( T_n \right) \bar{\mathbf{X}}^k_n  - \left(\bar{\ell}_n^j\right)  ^2  \bar{\mathbf{X}}^k_n \,/2 \right)
 + \sum_{m=1}^M  \bar{r}_n^m  \bar{\mathbf{X}}^k_n \,/2
$
and
$
\bar{\ell}_j  = \sum_{k=1}^{\mu}  \Re \langle  \bar{\mathbf{X}}^k_n, \mathbf{L}_{j}  \left( T_n \right) \bar{\mathbf{X}}^k_n \rangle 
$.
\end{itemize}
\end{scheme}

\begin{remark}
\label{rem:Generalization}
Let $\boldsymbol{\rho}_{0}$ be a random pure state.
Then $\mu = 1$, 
and so $\bar{\mathbf{X}}^1_{n}$ described recursively by Scheme \ref{scheme:SME_Euler_Exponential}
is a new numerical  integrator for \eqref{eq:SSE},
which generalizes Scheme 2 of \cite{Mora2005} constructed for solving efficiently 
\eqref{eq:SSE} in the autonomous diffusive case,
that is, 
when $M=0$ and the coefficient operators 
$\mathbf{G} \left( t \right)$ and $ \mathbf{L}_{j} \left( t \right)$ are constant.
\end{remark}

\begin{remark}
An analysis of the derivation of Scheme \ref{scheme:SME_Euler_Exponential} 
suggests us that  the rate of  weak convergence of Scheme \ref{scheme:SME_Euler_Exponential} 
is  equal to $1$,
that is, 
for any regular function $\mathbf{f}: \boldsymbol{\mathfrak{h}} \rightarrow \mathbb{C}$ 
we expect that 
$
\mathbb{E} \mathbf{f} \left( \bar{\boldsymbol{\rho}}_{N}   \right) - \mathbb{E} \mathbf{f}  \left( \boldsymbol{\rho}_{T} \right)
=
o \left( \Delta \right)
$
whenever $\Delta = T / N$ and $ T_n = n  \Delta $.
The rigorous proof of this convergence property is in progress;
by combining techniques from \cite{Mora2005} and \cite{Mora2017}
we have actually proved that 
$
\mathbb{E} \tr \left( \mathbf{A} \, \bar{\boldsymbol{\rho}}_{N}   \right) 
- 
\mathbb{E} \tr \left(  \mathbf{A} \, \boldsymbol{\rho}_{T} \right)
=
o \left( \Delta \right)
$
for any linear operator  $\mathbf{A} : \boldsymbol{\mathfrak{h}} \rightarrow  \boldsymbol{\mathfrak{h}}$,
which is consistent with Table \ref{Tab:Ej2OpA1} given in Section \ref{subsec:DirectDetection}.
\end{remark}

\begin{remark}
In general,
it cannot be guaranteed  that
$\tr \left( \boldsymbol{\rho}_{t} \right) = \tr \left( \boldsymbol{\rho}_{0} \right)$
unless $\tr \left( \boldsymbol{\rho}_{0} \right) = 1$,
and so the quantity $
\sum_{k=1}^{\mu}  \left\Vert  \mathbf{X}^k_{t} \right\Vert^2
$
may not be preserved. 
If $M = 0$ and 
$ -  \mathrm{i} \, \mathbf{L}_{1} \left(  t \right), \ldots, -  \mathrm{i} \, \mathbf{L}_{J} \left(  t \right)$ are symmetric operators,
then 
$
\sum_{k=1}^{\mu}  \left\Vert  \mathbf{X}^k_{t} \right\Vert^2 = \sum_{k=1}^{\mu}  \left\Vert  \mathbf{X}^k_{0} \right\Vert^2
$
for all $ t \geq 0$.
In this interesting particular case,
which is not the focus of the paper,
\eqref{eq:StochasticQME} comes down to the linear stochastic master equation
$
 d\mathbf{\boldsymbol{\rho}}_{t}	
 =	
\boldsymbol{\mathcal{L}} \left( t \right)  \boldsymbol{\rho}_{t} dt 
+ \sum_{j =1}^J \left( \mathbf{L}_{j} \left( t \right) \boldsymbol{\rho}_{t} + \boldsymbol{\rho}_{t} \mathbf{L}_{j} \left( t \right)^* \right) dW^{j}_{t} 
$,
and \eqref{eq:SistemSSE} becomes the system of uncoupled linear  stochastic Schr\"odinger equations
\begin{equation}  
\label{eq:LinearSSE}
 \mathbf{X}^k _t   
  =   
\mathbf{X}^k _0
+ \int_{0}^t \mathbf{G} \left ( s \right) \mathbf{X}^k_{s}  \,  ds
 + \sum_{j=1}^J \int_{0}^t \mathbf{L}_{j}  \left( s \right) \mathbf{X}^k_{s} \, dW^j_s ,
 \end{equation}
which satisfy
$\left\Vert  \mathbf{X}^k_{t} \right\Vert = \left\Vert  \mathbf{X}^k_{0} \right\Vert$.
As an alternative to Scheme \ref{scheme:SME_Euler_Exponential},
we can solve \eqref{eq:LinearSSE} by schemes preserving quadratic invariants
(see, e.g.,  \cite{Abdulle2012,Anmarkrud2017,Hong2015} and  \cite{ChenHong2016,Cui2017,jiang_wang_hong_2013}).
Since 
$ -  \mathrm{i} \, \mathbf{L}_{j} \left(  t \right)$ is a symmetric operator,
according to \eqref{eq:LinearSSE} we have 
\begin{equation}  
\label{eq:LinearSSE_S}
 \mathbf{X}^k _t   
  =   
\mathbf{X}^k _0
- \int_{0}^t  \mathrm{i}  \, \mathbf{H}  \left ( s \right) \mathbf{X}^k_{s}  \,  ds
 + \sum_{j=1}^J \int_{0}^t \mathbf{L}_{j}  \left( s \right) \mathbf{X}^k_{s} \, \circ dW^j_s ,
 \end{equation}
where $\circ $ denotes the Stratonovich integral.
Applying the midpoint rule to  \ref{eq:LinearSSE_S} yields the recursive algorithm 
$$
\bar{\mathbf{V}}^k_{n+1}
=
\bar{\mathbf{V}}^k_{n}
+
\frac{1}{2} \bar{\boldsymbol{\Phi}}_{n} \left( \bar{\mathbf{V}}^k_{n} + \bar{\mathbf{V}}^k_{n+1} \right) ,
$$
with
$
\bar{\boldsymbol{\Phi}}_{n}
=
-  \mathrm{i}  \, \mathbf{H} \left (  \frac{T_n + T_{n+1} }{2} \right) \left( T_{n+1} - T_n  \right)
+
\sqrt{T_{n+1} - T_n} \sum_{j=1}^J \mathbf{L}_{j}  \left(  \frac{T_n + T_{n+1} }{2} \right)   \overline{\delta W}^j_{n+1}
$.
Here,
$\overline{\delta W}^1_1, \overline{\delta W}^2_1,  \ldots,$ $\overline{\delta W}^J_1, \overline{\delta W}^1_2, \ldots$ 
are i.i.d. symmetric real random variables with variance $1$
that are independent of $\bar{\mathbf{V}}^1_{0}, \ldots, \bar{\mathbf{V}}^{\mu}_{0}$.
From the symmetry of $ -  \mathrm{i} \, \mathbf{L}_{j} \left(  t \right)$ it follows that
$  \Re \langle v, \bar{\boldsymbol{\Phi}}_{n} v \rangle = 0$ for all $v \in \boldsymbol{\mathfrak{h}}$.
Hence,
$$
\left\Vert \bar{\mathbf{V}}^k_{n+1} \right\Vert^2  - \left\Vert  \bar{\mathbf{V}}^k_{n} \right\Vert^2
=
\frac{1}{2} \Re \langle  \bar{\mathbf{V}}^k_{n} + \bar{\mathbf{V}}^k_{n+1} , 
\bar{\boldsymbol{\Phi}}_{n} \left( \bar{\mathbf{V}}^k_{n} + \bar{\mathbf{V}}^k_{n+1} \right) \rangle
=
0
$$
and 
$\left( I - \bar{\boldsymbol{\Phi}}_{n}/2 \right)$ is invertible.
Therefore,
$\bar{\mathbf{V}}^k_{n}$ is well-defined and conserves the norm of $\bar{\mathbf{V}}^k_{0}$.
\end{remark}

\subsection{Variants of the Euler-exponential scheme}

In this subsection we adopt 
the notation and approximations of Section \ref{subsec:EulerExponential},
except the relation \eqref{eq:2.2}.
We will design exponential schemes for  \eqref{eq:SistemSSE}
by computing 
$$
\int_{T_n }^{T_{n+1}}  
\exp{\left( \mathbf{G} \left ( T_n \right)   \left( T_{n+1} - s  \right)\right )  }
\left(  \left( \mathbf{G} \left ( s \right)   - \mathbf{G} \left( T_{n} \right)  \right) \mathbf{X}^k_{s-}  +  \widetilde{\mathbf{g}}  \left( s -,  \mathbf{X}^k_{s-} \right)   \right) ds 
$$
in two different ways.

First,
$
 \left( \mathbf{G} \left ( s \right)   - \mathbf{G} \left( T_{n} \right)  \right) \mathbf{X}^k_{s-}  +  \widetilde{\mathbf{g}}  \left( s -,  \mathbf{X}^k_{s-} \right)
\approx
 \widetilde{\mathbf{g}}   \left( T_n ,  \mathbf{X}^k_{T_n} \right) 
$
for all $s \in \left] T_n , T_{n+1} \right]$.
Hence
\begin{equation}
\label{eq:2.11} 
 \begin{aligned}
&  \int_{T_n }^{T_{n+1}} 
\exp{\left( \mathbf{G} \left ( T_n \right)   \left( T_{n+1} - s  \right) \right ) }
\left(  \left( \mathbf{G} \left ( s \right)   - \mathbf{G} \left( T_{n} \right)  \right) \mathbf{X}^k_{s-}  +  \widetilde{\mathbf{g}}  \left( s -,  \mathbf{X}^k_{s-} \right)   \right) ds
\\
&
\quad  \approx
\int_{T_n }^{T_{n+1}}  \exp{\left( \mathbf{G} \left ( T_n \right)   \left( T_{n+1} - s  \right)\right ) }  
\, \widetilde{\mathbf{g}}  \left( T_n ,  \mathbf{X}^k_{T_n} \right)  ds
\approx_w
\int_{T_n }^{T_{n+1}}  \exp{\left( \mathbf{G} \left ( T_n \right)   \left( T_{n+1} - s  \right) \right )}  
\, \widetilde{\mathbf{g}}  \left( T_n ,  \bar{\mathbf{X}}^k_n \right)  ds .
\end{aligned}
\end{equation}
This implies
\begin{align*}
\mathbf{X}^k_{T_{n+1}}
& \approx_w
\exp{\left( \mathbf{G} \left ( T_n \right)   \left( T_{n+1} - T_n  \right)  \right )} \bar{\mathbf{X}}^k_n
+
\int_{T_n }^{ T_{n+1} }  
\exp{\left( \mathbf{G} \left ( T_n \right)   \left( T_{n+1} - s  \right) \right ) } \, \bar{\mathbf{g}}^k_n  \, ds
 \\
 & \quad 
 + \sum_{j=1}^J
 \exp{\left( \mathbf{G} \left ( T_n \right)   \left( T_{n+1} - T_n  \right)  \right )} \left( \mathbf{L}_{j}  \left( T_n \right) \bar{\mathbf{X}}^k_n  - \bar{\ell}_n^j  \, \bar{\mathbf{X}}^k_n \right) 
\sqrt{T_{n+1} - T_n} \,  \overline{\delta W}^j_{n+1} 
\\
& \quad   
+  \sum_{m=1}^M  
\exp{\left( \mathbf{G} \left ( T_n \right)   \left( T_{n+1} - T_n  \right)\right )  }  \left(  \dfrac{ \mathbf{R}_{m}  \left( T_n \right) \bar{\mathbf{X}}^k_n }{ \sqrt{\bar{r}_n^m }} -  \bar{\mathbf{X}}^k_n \right)
\overline{\delta N}^m_{n+1} .
\end{align*}
Using 
\begin{align*}
 & \exp \left(  
\begin{pmatrix}
  \mathbf{G} \left ( T_n \right) &  \bar{\mathbf{g}}^k_n  \\
  \mathbf{0}_{1 \times d}  & 0
\end{pmatrix} 
 \left( T_{n+1} - T_n  \right)
\right)
=
\begin{pmatrix}
\exp{\left( \mathbf{G} \left ( T_n \right)   \left( T_{n+1} - T_n  \right) \right ) }  &
\int_{T_n }^{ T_{n+1} }  \exp{\left( \mathbf{G} \left ( T_n \right)   \left( T_{n+1} - s  \right)  \right )} \, \bar{\mathbf{g}}^k_n \, ds
\\
0_{1 \times d} & 1  
\end{pmatrix}
\end{align*}
(see, e.g., \cite{VanLoan1978}),
we get
$\exp{\left( \mathbf{G} \left ( T_n \right)   \left( T_{n+1} - T_n  \right) \right ) } \bar{\mathbf{X}}^k_n$
 and 
$\int_{T_n }^{ T_{n+1} }  \exp{\left( \mathbf{G} \left ( T_n \right)   \left( T_{n+1} - s  \right)  \right )} \, \bar{\mathbf{g}}^k_n \,   ds$
by evaluating just one matrix exponential.
As in Scheme  \ref{scheme:SME_Euler_Exponential},
normalizing by the trace of the approximation of $ \boldsymbol{\rho}_{T_{n+1}}$ 
we deduce the following numerical integrator.

\begin{scheme}
\label{scheme:SME_Exponential_Integral}
Define $\bar{\boldsymbol{\rho}}_{n}, \bar{\mathbf{X}}^1_{n}, \ldots, \bar{\mathbf{X}}^{\mu}_{n}$, together with $\overline{\delta W}^j_{n}$ and $\overline{\delta N}^m_{n}$,
as in Scheme  \ref{scheme:SME_Euler_Exponential}
with $\hat{\mathbf{X}}^k_{n+1}$ replaced by the first $d$ components of
$$
\exp \left(  
\begin{pmatrix}
  \mathbf{G} \left ( T_n \right) & \bar{\mathbf{g}}^k_n   \\
  \mathbf{0}_{1 \times d}  & 0
\end{pmatrix} 
 \left( T_{n+1} - T_n  \right)
\right)
\begin{pmatrix}
\bar{\mathbf{Y}}^k_{n+1} 
\\
1
\end{pmatrix}
$$
with
$$
\bar{\mathbf{Y}}^k_{n+1}
 =
\bar{\mathbf{X}}^k_n
 + \sum_{j=1}^J \left( \mathbf{L}_{j}  \left( T_n \right) \bar{\mathbf{X}}^k_n  - \bar{\ell}_n^j  \, \bar{\mathbf{X}}^k_n \right) 
\sqrt{T_{n+1} - T_n} \,  \overline{\delta W}^j_{n+1}  
+  \sum_{m=1}^M    \left(  \dfrac{ \mathbf{R}_{m}  \left( T_n \right) \bar{\mathbf{X}}^k_n }{ \sqrt{\bar{r}_n^m }} -  \bar{\mathbf{X}}^k_n \right)
\overline{\delta N}^m_{n+1} .
$$
\end{scheme}

Second,
we suppose that $\mathbf{G}$ is continuously differentiable.
Then
$
\mathbf{G} \left ( s \right)   - \mathbf{G} \left( T_{n} \right)  \approx \mathbf{G}^{\prime} \left( T_n \right) \left( s - T_n  \right) 
$
for all
$
 s \in \left[ T_n , T_{n+1} \right] .
$
Therefore
\begin{equation}
\label{eq:2.12}
 \begin{aligned}
&
\int_{T_n }^{T_{n+1}}  
\exp{\left( \mathbf{G} \left ( T_n \right)   \left( T_{n+1} - s  \right)  \right )}
\left(  \left( \mathbf{G} \left ( s \right)   - \mathbf{G} \left( T_{n} \right)  \right) \mathbf{X}^k_{s-}  +  \widetilde{\mathbf{g}}  \left( s -,  \mathbf{X}^k_{s-} \right)   \right) ds
\\
& 
\quad  \approx
\int_{T_n }^{T_{n+1}}  \exp{\left( \mathbf{G} \left ( T_n \right)   \left( T_{n+1} - s  \right) \right )} 
\left(  \widetilde{\mathbf{g}}  \left( T_n ,  \mathbf{X}^k_{T_n} \right) 
+ \left( s - T_n  \right) \mathbf{G}^{\prime} \left( T_n \right)  \mathbf{X}^k_{T_n}   \right) ds
\\
& 
\quad  \approx_w
\mathbf{I}_{n+1}:= 
\int_{T_n }^{ T_{n+1} }  
\exp{\left( \mathbf{G} \left ( T_n \right)   \left( T_{n+1} - s  \right) \right ) }
\left(  \bar{\mathbf{g}}^k_n +  \left( s - T_n  \right)  \mathbf{G}^{\prime} \left( T_n \right) \bar{\mathbf{X}}^k_n \right) ds ,
\end{aligned}
\end{equation}
which becomes an alternative to the approximations \eqref{eq:2.2} and \eqref{eq:2.11}.
It is worth pointing out that \eqref{eq:2.12} becomes \eqref{eq:2.11} whenever $ \mathbf{G}^{\prime} \left( T_n \right) = 0$.
Since 
\begin{align*}
 & \exp \left(  
\begin{pmatrix}
  \mathbf{G} \left ( T_n \right) & \mathbf{G}^{\prime} \left( T_n \right)  \bar{\mathbf{X}}^k_n & \bar{\mathbf{g}}^k_n   
  \\
  \mathbf{0}_{1 \times d}  & 0 & 1
  \\
  \mathbf{0}_{1 \times d}  & 0 & 0
\end{pmatrix} 
 \left( T_{n+1} - T_n  \right)
\right)
\\
& \quad  =
\begin{pmatrix}
\exp{\left( \mathbf{G} \left ( T_n \right)   \left( T_{n+1} - T_n  \right)\right )  }  &
\int_{T_n }^{ T_{n+1} }  \exp{\left( \mathbf{G} \left ( T_n \right)   \left( T_{n+1} - s  \right) \right ) } \mathbf{G}^{\prime} \left( T_n \right)  \bar{\mathbf{X}}^k_n  ds
&
\mathbf{I}_{n+1}
\\
\mathbf{0}_{1 \times d} & 1  &  T_{n+1} - T_n   
\\
\mathbf{0}_{1 \times d} &  0 &  1
\end{pmatrix} 
\end{align*}
(see, e.g., \cite{VanLoan1978}),
using \eqref{eq:2.5}, \eqref{eq:2.1} and \eqref{eq:2.9} we construct the next numerical method for \eqref{eq:StochasticQME}.

\begin{scheme}
 \label{scheme:SME_ExponentialExponential_Integral_GD}
Let $\bar{\boldsymbol{\rho}}_{n}, \bar{\mathbf{X}}^1_{n}, \ldots, \bar{\mathbf{X}}^{\mu}_{n}$, $\overline{\delta W}^j_{n}$ and $\overline{\delta N}^m_{n}$
be as in Scheme  \ref{scheme:SME_Euler_Exponential} 
but with $\hat{\mathbf{X}}^k_{n+1}$ given by  the first $d$ components of
$$
\exp \left(  
\begin{pmatrix}
  \mathbf{G} \left ( T_n \right) & \mathbf{G}^{\prime} \left( T_n \right)  \bar{\mathbf{X}}^k_n &  \bar{\mathbf{g}}^k_n   
  \\
  \mathbf{0}_{1 \times d}  & 0 & 1
  \\
  \mathbf{0}_{1 \times d}  & 0 & 0
\end{pmatrix} 
 \left( T_{n+1} - T_n  \right)
\right)
\begin{pmatrix}
\bar{\mathbf{Y}}^k_{n+1} 
\\
0
\\
1
\end{pmatrix} 
$$ 
with
$$
\bar{\mathbf{Y}}^k_{n+1}
 =
\bar{\mathbf{X}}^k_n
 + \sum_{j=1}^J \left( \mathbf{L}_{j}  \left( T_n \right) \bar{\mathbf{X}}^k_n  - \bar{\ell}_n^j  \, \bar{\mathbf{X}}^k_n \right) 
\sqrt{T_{n+1} - T_n} \,  \overline{\delta W}^j_{n+1}  
+  \sum_{m=1}^M    \left(  \dfrac{ \mathbf{R}_{m}  \left( T_n \right) \bar{\mathbf{X}}^k_n }{ \sqrt{\bar{r}_n^m }} -  \bar{\mathbf{X}}^k_n \right)
\overline{\delta N}^m_{n+1} .
$$
\end{scheme}

\begin{remark}
If  $t \mapsto \mathbf{G} \left( t \right)$ is a constant function,
then 
Schemes \ref{scheme:SME_Exponential_Integral} and \ref{scheme:SME_ExponentialExponential_Integral_GD}
provide the same approximation of $\boldsymbol{\rho}_{T_n} $.
Nevertheless, 
Scheme \ref{scheme:SME_ExponentialExponential_Integral_GD}  has higher computational cost.
\end{remark}

\begin{remark}
Similar to Remark \ref{rem:Generalization},
$\bar{\mathbf{X}}^1_{n}$ defined recursively by Schemes  \ref{scheme:SME_Exponential_Integral}
and \ref{scheme:SME_ExponentialExponential_Integral_GD}
are new numerical methods for \eqref{eq:SSE}
in case $\boldsymbol{\rho}_{0}$ is a random pure state.
\end{remark}

On the other hand,
$
\dfrac{1}{2}\, \sum_{j = 1}^{J} \mathbf{L}_{j}  \left( t \right)^*\mathbf{L}_{j}  \left( t \right) 
+ \dfrac{1}{2}\, \sum_{m = 1}^{M} \mathbf{R}_{m}  \left( t \right)^*\mathbf{R}_{m}  \left( t \right)
$
is a  positive-definite matrix,
and so we deduce from \eqref{eq:1.1} that 
$\mathbf{I} - \mathbf{G} \left( T_n \right)   \left( T_{n+1} - T_n  \right)$ is a nonsingular matrix,
where $\mathbf{I}$ stands for the identity matrix.
Using
$
\exp \bigl( \mathbf{G} \left( T_n \right)   \left( T_{n+1} - T_n  \right)  \bigr) 
\approx
\bigl( \mathbf{I} - \mathbf{G} \left( T_n \right)   \left( T_{n+1} - T_n  \right) \bigr)^{-1} 
$
we transform Scheme  \ref{scheme:SME_Euler_Exponential} into 
the following semi-implicit Euler scheme, 
which avoids the solution of nonlinear equations.

\begin{scheme}
 \label{scheme:Euler_Implicit}
Adopt the setup of Scheme \ref{scheme:SME_Euler_Exponential}
with $\hat{\mathbf{X}}^k_{n+1}$ substituted by 
$$
\hat{\mathbf{X}}^k_{n+1} 
=
\bigl(  \mathbf{I} - \mathbf{G} \left ( T_n \right)   \left( T_{n+1} - T_n  \right)  \bigr)^{-1}  \bar{\mathbf{Z}}^k_{n+1} .
$$
\end{scheme}


\subsection{Implementation issues}

Schemes \ref{scheme:SME_Euler_Exponential}, \ref{scheme:SME_Exponential_Integral} and \ref{scheme:SME_ExponentialExponential_Integral_GD}
involve the computation of matrix exponentials times vectors,
which can be accomplished in many ways (see, e.g.,  \cite{MolerVanLoan2003}).
If $\mathbf{A}$ is a large and sparse matrix and $\mathbf{v}$ is a vector,
then we can get $\exp \left( \mathbf{A} \right) \mathbf{v}$ by Krylov subspace iterative methods 
(see, e.g., \cite{GallopoulosSaad1992,HochbruckLubich1997,Wang2016}).
In case the dimension of $\mathbf{v}$ is less than  a few thousand,
we use the standard scaling and squaring method with Pad\'e approximants
for computing the exponential of $\mathbf{A}$   on a current computer
(see, e.g.,  \cite{DeLaCruz2013,Higham2005,Higham2009,MolerVanLoan2003}).
In the latter method,
$m \in \mathbb{Z}_+$ is chosen such that $\left\Vert \mathbf{A} / 2^m \right\Vert$ is of order of magnitude $1$
and  $\exp \left( \mathbf{A}/2^m \right)$ is approximated by the rational function 
$\mathbf{D}_{p q} \left( \mathbf{A} /{2^m} \right)^{-1} \mathbf{N}_{p q}\left( \mathbf{A}/{2^m} \right)$,
where $p,q \in  \mathbb{N}$ and for any matrix $\mathbf{X}$ we define
\begin{equation}
\label{eq:2.7}
\mathbf{N}_{p q} \left( \mathbf{X} \right) = \sum_{j=0}^p \frac{\left(p+q-j \right)! \, p!}{\left(p+q \right)! \left(p-j \right)!}\frac{\mathbf{X}^j}{j!}
\text{ and }
\mathbf{D}_{p q}(\mathbf{X}) =  \sum_{j=0}^q \frac{\left(p+q-j \right)! \, q!}{\left(p+q \right)! \left(q-j \right)!}\frac{\left(-1 \right)^j \mathbf{X}^j}{j!} .
\end{equation}
Thus,
$
\exp \left( \mathbf{A} \right) 
=
\exp \left( \mathbf{A}/2^m \right)^{2^m}
\approx
\left(\mathbf{D}_{p q} \left( \mathbf{A}/{2^m} \right)^{-1} \mathbf{N}_{p q}\left( \mathbf{A}/{2^m} \right) \right)^{2^m} 
$.

We next obtain simple algebraic expressions for 
the Pad\'e approximants to the matrix exponentials appearing in
Schemes \ref{scheme:SME_Exponential_Integral}
and \ref{scheme:SME_ExponentialExponential_Integral_GD}.
Hence, 
Schemes \ref{scheme:SME_Exponential_Integral}
and \ref{scheme:SME_ExponentialExponential_Integral_GD}
increasing slightly the computational cost of Scheme \ref{scheme:SME_Euler_Exponential}.
At each recursive step,
Scheme \ref{scheme:SME_Exponential_Integral} computes 
\begin{equation*}
\exp \left(  
\begin{pmatrix}
  \mathbf{G} \left ( T_n \right) & \mathbf{g} \left ( T_n ,  \bar{\mathbf{X}}_n \right)  \\
  \mathbf{0}_{1 \times d}  & 0
\end{pmatrix} 
 \left( T_{n+1} - T_n  \right)
\right)
\begin{pmatrix}
\bar{\mathbf{Z}}_{n+1} 
\\
1
\end{pmatrix} ,
\end{equation*}
where for any $n \in \mathbb{Z}_+$,  $\mathbf{G} \left( T_n \right)$ is a fixed matrix and 
$\mathbf{g} \left ( T_n ,  \bar{\mathbf{X}}_n \right)$, $\bar{\mathbf{Z}}_{n+1}$ depend on the statistical sample.
According to Theorem \ref{th:Pade1} given below,
we can reduce $M$ evaluations of 
$$
\exp \left(  
\begin{pmatrix}
  \mathbf{G} \left ( T_n \right) & \mathbf{g} \left ( T_n ,  \bar{\mathbf{X}}_n \right)  \\
  \mathbf{0}_{1 \times d}  & 0
\end{pmatrix} 
 \left( T_{n+1} - T_n  \right)
\right)
$$
to the operations required essentially to evaluate $\exp \left( \left( T_{n+1} - T_n  \right)  \mathbf{G} \left ( T_n \right) \right)$,
together with the multiplication of a deterministic $d\times d$-matrix by 
$M$  sample vectors $\mathbf{g} \left ( T_n ,  \bar{\mathbf{X}}_n \right)$.
Thus,
Schemes \ref{scheme:SME_Euler_Exponential} and \ref{scheme:SME_Exponential_Integral} 
involve similar number of floating point multiplications.

\begin{theorem}
 \label{th:Pade1}
Suppose that 
$
\mathbf{A} =
\begin{pmatrix}
  \mathbf{G}  & \mathbf{g}  \\
  \mathbf{0}_{1 \times d}  & 0
\end{pmatrix} 
$
with $\mathbf{G} \in \mathbb{C}^{d \times d}$ and $\mathbf{g} \in \mathbb{C}^{d \times 1}$.  
Fix $p,q \in  \mathbb{N}$ and  $m \in  \mathbb{Z}_+$.
For any matrix $\mathbf{X}$ we set 
$
\mathbf{P}_{p q} \left( \mathbf{X} \right) 
=
\mathbf{D}_{p q} \left( \mathbf{X} \right)^{-1} \mathbf{N}_{p q}\left( \mathbf{X} \right) 
$,
where $\mathbf{D}_{p q}$ and $\mathbf{N}_{p q}$ are as in \eqref{eq:2.7}.
Then
$
\mathbf{P}_{p q} \left( \mathbf{A} / \left(2^m \right) \right)^{2^m}
$
is equal to
$$
\begin{pmatrix}
 \mathbf{P}_{p q} \left( \frac{\mathbf{G}}{2^m} \right) ^{2^m}  
  &  
 \prod^{m-1}_{j=0}\left( \mathbf{I} + \mathbf{P}_{p q} \left( \frac{\mathbf{G}}{2^m} \right)^{2^j} \right)
 \mathbf{D}_{p q} \left( \frac{\mathbf{G}}{2^m} \right)^{-1}
 \left(\widetilde{\mathbf{N}}_{p q}\left( \frac{\mathbf{G}}{2^m} \right) - \widetilde{\mathbf{D}}_{p q} \left( \frac{\mathbf{G}}{2^m} \right)\right)  \frac{\mathbf{g}}{2^m}
 \\
  \mathbf{0}_{1 \times d}  & 1
\end{pmatrix} ,
$$
where for any matrix $\mathbf{X}$,
$
\widetilde{\mathbf{N}}_{ p q} \left( \mathbf{X} \right) = \sum_{j=1}^p \frac{\left(p+q-j \right)! \, p!}{\left(p+q \right)! \left(p-j \right)!}\frac{\mathbf{X}^{j-1}}{j!}
$
and 
$
\widetilde{\mathbf{D}}_{p q }(\mathbf{X}) =  \sum_{j=1}^q \frac{\left(p+q-j \right)! \, q!}{\left(p+q \right)! \left(q-j \right)!}\frac{\left(-1 \right)^j \mathbf{X}^{j-1}}{j!} .
$
\end{theorem}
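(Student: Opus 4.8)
The plan is to exploit the block upper-triangular structure of $\mathbf{A}$ throughout. First I would compute the powers of $\mathbf{A}$: a direct induction gives $\mathbf{A}^j = \begin{pmatrix} \mathbf{G}^j & \mathbf{G}^{j-1}\mathbf{g} \\ \mathbf{0}_{1\times d} & 0 \end{pmatrix}$ for every $j\geq 1$, while $\mathbf{A}^0 = \mathbf{I}$. Substituting this into the definitions \eqref{eq:2.7} of $\mathbf{N}_{pq}$ and $\mathbf{D}_{pq}$ and collecting blocks, the $j=0$ terms contribute the constant $1$ in the bottom-right corner (since the coefficient of $\mathbf{X}^0$ in both $\mathbf{N}_{pq}$ and $\mathbf{D}_{pq}$ equals $1$), so that
\begin{equation*}
\mathbf{N}_{pq}(\mathbf{A}) = \begin{pmatrix} \mathbf{N}_{pq}(\mathbf{G}) & \widetilde{\mathbf{N}}_{pq}(\mathbf{G})\,\mathbf{g} \\ \mathbf{0}_{1\times d} & 1 \end{pmatrix}, \qquad \mathbf{D}_{pq}(\mathbf{A}) = \begin{pmatrix} \mathbf{D}_{pq}(\mathbf{G}) & \widetilde{\mathbf{D}}_{pq}(\mathbf{G})\,\mathbf{g} \\ \mathbf{0}_{1\times d} & 1 \end{pmatrix} ,
\end{equation*}
where the off-diagonal blocks are precisely the shifted polynomials $\widetilde{\mathbf{N}}_{pq}$ and $\widetilde{\mathbf{D}}_{pq}$ from the statement, because each $\mathbf{A}^j$ produces $\mathbf{G}^{j-1}\mathbf{g}$ in its upper-right block.

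Second, I would invert the block upper-triangular matrix $\mathbf{D}_{pq}(\mathbf{A})$ via the elementary formula $\begin{pmatrix} \mathbf{B} & \mathbf{c} \\ \mathbf{0} & 1 \end{pmatrix}^{-1} = \begin{pmatrix} \mathbf{B}^{-1} & -\mathbf{B}^{-1}\mathbf{c} \\ \mathbf{0} & 1 \end{pmatrix}$ and then form $\mathbf{P}_{pq}(\mathbf{A}) = \mathbf{D}_{pq}(\mathbf{A})^{-1}\mathbf{N}_{pq}(\mathbf{A})$. A short multiplication yields
\begin{equation*}
\mathbf{P}_{pq}(\mathbf{A}) = \begin{pmatrix} \mathbf{P}_{pq}(\mathbf{G}) & \mathbf{D}_{pq}(\mathbf{G})^{-1}\bigl(\widetilde{\mathbf{N}}_{pq}(\mathbf{G}) - \widetilde{\mathbf{D}}_{pq}(\mathbf{G})\bigr)\mathbf{g} \\ \mathbf{0}_{1\times d} & 1 \end{pmatrix} .
\end{equation*}
Replacing $\mathbf{A}$ by $\mathbf{A}/2^m$ (hence $\mathbf{G}$ by $\mathbf{G}/2^m$ and $\mathbf{g}$ by $\mathbf{g}/2^m$) leaves the same shape; I would write this matrix as $\begin{pmatrix} \mathbf{P} & \mathbf{u} \\ \mathbf{0} & 1 \end{pmatrix}$, where $\mathbf{P} = \mathbf{P}_{pq}(\mathbf{G}/2^m)$ and $\mathbf{u}$ is the stated vector.

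Third, I would raise this to the power $2^m$. For any such block matrix an easy induction gives $\begin{pmatrix} \mathbf{P} & \mathbf{u} \\ \mathbf{0} & 1 \end{pmatrix}^{k} = \begin{pmatrix} \mathbf{P}^k & \bigl(\sum_{i=0}^{k-1}\mathbf{P}^i\bigr)\mathbf{u} \\ \mathbf{0} & 1 \end{pmatrix}$, so with $k = 2^m$ the top-left block is $\mathbf{P}_{pq}(\mathbf{G}/2^m)^{2^m}$, as claimed, and the off-diagonal block is $\bigl(\sum_{i=0}^{2^m-1}\mathbf{P}^i\bigr)\mathbf{u}$. It then remains to convert the geometric sum into the product of the statement, for which I would invoke the factorization $\sum_{i=0}^{2^m-1}\mathbf{P}^i = \prod_{j=0}^{m-1}\bigl(\mathbf{I} + \mathbf{P}^{2^j}\bigr)$. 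This holds because all factors are polynomials in $\mathbf{P}$ and hence commute, so the scalar identity $\sum_{i=0}^{2^m-1}x^i = \prod_{j=0}^{m-1}(1+x^{2^j})$ (a consequence of the unique binary representation of $0,\dots,2^m-1$) transfers verbatim to $\mathbf{P}$. Substituting $\mathbf{u} = \mathbf{D}_{pq}(\mathbf{G}/2^m)^{-1}(\widetilde{\mathbf{N}}_{pq}(\mathbf{G}/2^m) - \widetilde{\mathbf{D}}_{pq}(\mathbf{G}/2^m))\mathbf{g}/2^m$ matches the off-diagonal block exactly.

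The computation is essentially routine once the block structure is exploited, so there is no deep obstacle; the two points demanding care are (i) checking that the constant coefficients of $\mathbf{N}_{pq}$ and $\mathbf{D}_{pq}$ are both $1$, which is what collapses the bottom-right entry to $1$ and keeps the lower block row trivial under both inversion and exponentiation, and (ii) recognizing and justifying the binary factorization $\sum_{i=0}^{2^m-1}\mathbf{P}^i = \prod_{j=0}^{m-1}(\mathbf{I}+\mathbf{P}^{2^j})$, whose virtue is algorithmic: it lets one assemble the off-diagonal block from the $m$ repeated squarings already used to compute $\mathbf{P}^{2^m}$, rather than forming a fresh geometric sum.
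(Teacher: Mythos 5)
Your proposal is correct and takes essentially the same route as the paper's proof: induction on the block powers $\mathbf{A}^j$, the resulting block forms of $\mathbf{N}_{pq}$ and $\mathbf{D}_{pq}$ with bottom-right entry $1$, block inversion and multiplication to obtain $\mathbf{P}_{pq}\left(\mathbf{A}/2^m\right)$, the geometric-sum formula for powers of the affine block matrix, and finally the identity $\sum_{i=0}^{2^m-1}\mathbf{P}^i=\prod_{j=0}^{m-1}\left(\mathbf{I}+\mathbf{P}^{2^j}\right)$. The only difference is cosmetic: you spell out the binary-representation justification of that factorization, which the paper simply invokes without proof.
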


\begin{proof}
 Deferred to Subsection \ref{sec:Proof:Pade1}.
\end{proof}

Every integration step of Scheme \ref{scheme:SME_ExponentialExponential_Integral_GD} involves 
the computation of a matrix exponential times a random vector,
where the matrix exponential function is evaluated at 
$\begin{pmatrix}
   \mathbf{G} \left ( T_n \right) &  \mathbf{G}^{\prime} \left( T_n \right)  \bar{ \mathbf{X}}^k_n &  \bar{ \mathbf{g}}^k_n   
  \\
   \mathbf{0}_{1 \times d}  & 0 & 1
  \\
   \mathbf{0}_{1 \times d}  & 0 & 0
\end{pmatrix} 
 \left( T_{n+1} - T_n  \right)
$
with $ \mathbf{G} \left( T_n \right)$,  $ \mathbf{G}^{\prime} \left( T_n \right)$ deterministic matrices and 
 $\bar{ \mathbf{X}}^k_n$, $\bar{ \mathbf{g}}^k_n$ random vectors.
Similarly to Scheme \ref{scheme:SME_Exponential_Integral}, 
Theorem \ref{th:Pade2} allows us to implement efficiently Scheme \ref{scheme:SME_ExponentialExponential_Integral_GD},
at a computational cost not substantially greater than that of Scheme \ref{scheme:SME_Euler_Exponential}.

\begin{theorem}
\label{th:Pade2}
Set
$
\mathbf{A} =
\begin{pmatrix}
  \mathbf{G}  & \mathbf{a} & \mathbf{g} \\
  \mathbf{0}_{1 \times d}  & 0 & b\\
  \mathbf{0}_{1 \times d}  & 0 & 0 
\end{pmatrix} 
$
with $\mathbf{G} \in \mathbb{C}^{d \times d}$, $\mathbf{a},\mathbf{g} \in \mathbb{C}^{d \times 1}$ and $b \in \mathbb{C}$.
Let $p,q \in  \mathbb{N}$ and  $m \in  \mathbb{Z}_+$.
Suppose that $\mathbf{D}_{p q}$ and $\mathbf{N}_{p q}$ are given by \eqref{eq:2.7}
and that $\mathbf{P}_{p q} $, $\widetilde{\mathbf{N}}_{ p q}$, $\widetilde{\mathbf{D}}_{p q }$ are as in Theorem \ref{th:Pade1}.
Then
$$
\mathbf{P}_{p q} \left(  \frac{\mathbf{A}}{2^m} \right)^{2^m}
=
\begin{pmatrix}
 \mathbf{P}_{p q} \left( \frac{\mathbf{G}}{2^m} \right) ^{2^m}  
  &  
 \mathbf{R}_{m} \widetilde{\mathbf{r}}_{m} \frac{\mathbf{a}}{2^m}
 &
 \mathbf{R}_{m} \left( \widetilde{\mathbf{r}}_m \frac{\mathbf{g}}{2^m} + \widehat{\mathbf{r}}_m  \frac{b \, \mathbf{a}}{2^{2m}} \right)
 + 
 \widehat{\mathbf{R}}_m \widetilde{\mathbf{r}}_m  \frac{b \, \mathbf{a}}{2^{2m}}
 \\
  \mathbf{0}_{1 \times d}  & 1 & b
   \\
  \mathbf{0}_{1 \times d}  & 0 & 1
\end{pmatrix} ,
$$
where 
$
\widetilde{\mathbf{r}}_m
=
\mathbf{D}_{p q} \left( \mathbf{G}/{2^m} \right)^{-1}
 \left(\widetilde{\mathbf{N}}_{p q}\left( \mathbf{G}/{2^m} \right) - \widetilde{\mathbf{D}}_{p q} \left( \mathbf{G}/{2^m} \right)\right) 
$,
the matrices $\mathbf{R}_{m}$, $\widehat{\mathbf{R}}_m$ are defined recursively by
\begin{equation}
 \label{eq:2.21}
 \mathbf{R}_{k+1}
 = 
\begin{cases}
\left( \mathbf{I} + \mathbf{P}_{p q} \left( \mathbf{G}/{2^m} \right) ^{2^k} \right) \mathbf{R}_k
&
\text{if } 
k \in \mathbb{N} ,
\\
\mathbf{I}
&
\text{if } k=0 ,
\end{cases} 
\text{ and }
\widehat{\mathbf{R}}_{k+1}
 = 
\begin{cases}
\left( \mathbf{I} + \mathbf{P}_{p q} \left( \mathbf{G}/{2^m} \right) ^{2^k} \right) \widehat{\mathbf{R}}_{k}
+2^k \mathbf{R}_k
&
\text{if } 
k \in \mathbb{N} ,
\\
\mathbf{0}_{d\times d}
&
\text{if } k=0 ,
\end{cases} 
\end{equation}
and 
$
\widehat{\mathbf{r}}_m
=
\mathbf{D}_{p q} \left( \mathbf{G}/{2^m} \right)^{-1}
 \left(\widehat{\mathbf{N}}_{p q}\left( \mathbf{G}/{2^m} \right) 
 - \widehat{\mathbf{D}}_{p q} \left( \mathbf{G}/{2^m} \right)
 - \widetilde{\mathbf{D}}_{p q} \left( \mathbf{G}/{2^m} \right)\right)
$
with 
$$
\widehat{\mathbf{N}}_{ p q} \left( X \right) 
=
\begin{cases}
\mathbf{0}_{d\times d}
&
\text{if } 
p = 1,
\\
\sum_{j=2}^p \frac{\left(p+q-j \right)! \, p!}{\left(p+q \right)! \left(p-j \right)!}\frac{\mathbf{X}^{j-2}}{j!}
&
\text{if } p \geq 2 
\end{cases} 
\text{ and }
\widehat{\mathbf{D}}_{p q }(X) 
= 
\begin{cases}
\mathbf{0}_{d\times d}
&
\text{if } 
q = 1,
\\
\sum_{j=2}^q \frac{\left(p+q-j \right)! \, q!}{\left(p+q \right)! \left(q-j \right)!}\frac{\left(-1 \right)^j \mathbf{X}^{j-2}}{j!} 
&
\text{if } p \geq 2 .
\end{cases} 
$$
\end{theorem}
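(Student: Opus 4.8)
The plan is to follow the strategy behind Theorem~\ref{th:Pade1}, exploiting the block upper-triangular structure of $\mathbf{A}$. First I would scale, writing $\mathbf{B} = \mathbf{A}/2^m$ as the $3\times 3$ block matrix with diagonal blocks of sizes $d$, $1$, $1$, and compute its powers by induction on $k$. A short calculation shows that for $k\geq 1$ the third row of $\mathbf{B}^k$ vanishes and the second row vanishes for $k\geq 2$ (being $(\mathbf{0},0,b/2^m)$ when $k=1$); the leading block is $(\mathbf{G}/2^m)^k$, while the two top off-diagonal blocks are $(\mathbf{G}/2^m)^{k-1}\,\mathbf{a}/2^m$ and $(\mathbf{G}/2^m)^{k-1}\,\mathbf{g}/2^m + (b/2^m)(\mathbf{G}/2^m)^{k-2}\,\mathbf{a}/2^m$ for $k\geq 2$. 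The decisive point is that the single power $k=1$ is the only one contributing $b/2^m$ to the $(2,3)$ entry, and that the shift by one in the power of $\mathbf{G}/2^m$ in the $(1,2)$ and $(1,3)$ blocks is exactly what converts the coefficient sequences of $\mathbf{N}_{pq}$ and $\mathbf{D}_{pq}$ into the index-shifted polynomials $\widetilde{\mathbf{N}}_{pq},\widehat{\mathbf{N}}_{pq}$ and $\widetilde{\mathbf{D}}_{pq},\widehat{\mathbf{D}}_{pq}$.

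Next I would substitute these powers into $\mathbf{N}_{pq}(\mathbf{B}) = \sum_j c_j^N\mathbf{B}^j$ and $\mathbf{D}_{pq}(\mathbf{B}) = \sum_j c_j^D\mathbf{B}^j$ to read off their block forms: the $(1,1)$ blocks are $\mathbf{N}_{pq}(\mathbf{G}/2^m)$ and $\mathbf{D}_{pq}(\mathbf{G}/2^m)$, the $(1,2)$ blocks are $\widetilde{\mathbf{N}}_{pq}(\mathbf{G}/2^m)\,\mathbf{a}/2^m$ and $\widetilde{\mathbf{D}}_{pq}(\mathbf{G}/2^m)\,\mathbf{a}/2^m$, the $(1,3)$ blocks pair $\widetilde{(\cdot)}$ acting on $\mathbf{g}/2^m$ with $\widehat{(\cdot)}$ acting on $(b/2^m)\,\mathbf{a}/2^m$, and the $(2,3)$ entries are $c_1^N\,b/2^m$ and $c_1^D\,b/2^m$. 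Since the $(2,2)$ and $(3,3)$ blocks of $\mathbf{D}_{pq}(\mathbf{B})$ equal $1$, this matrix is invertible exactly when $\mathbf{D}_{pq}(\mathbf{G}/2^m)$ is, and I would invert it by the usual block-triangular formula. Forming $\mathbf{P}_{pq}(\mathbf{B}) = \mathbf{D}_{pq}(\mathbf{B})^{-1}\mathbf{N}_{pq}(\mathbf{B})$ then yields a matrix of the same shape whose top off-diagonal blocks are precisely $\widetilde{\mathbf{r}}_m\,\mathbf{a}/2^m$ and $\widetilde{\mathbf{r}}_m\,\mathbf{g}/2^m + (b/2^m)\widehat{\mathbf{r}}_m\,\mathbf{a}/2^m$. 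The arithmetic identity $c_1^N - c_1^D = p/(p+q) + q/(p+q) = 1$ is what drives the simplification: it collapses the $(2,3)$ entry of $\mathbf{P}_{pq}(\mathbf{B})$ to exactly $b/2^m$, reflecting that the Pad\'e approximant is exact on the nilpotent lower-right corner, and it is also the source of the $-\widetilde{\mathbf{D}}_{pq}$ correction inside $\widehat{\mathbf{r}}_m$.

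Finally I would raise $\mathbf{P}_{pq}(\mathbf{B})$ to the power $2^m$ by repeated squaring, proving by induction on $k$ that $\mathbf{Q}_k := \mathbf{P}_{pq}(\mathbf{B})^{2^k}$ has the asserted block form with coefficient matrices $\mathbf{R}_k$, $\widehat{\mathbf{R}}_k$ and $(2,3)$ entry $2^k\,b/2^m$, the base $\mathbf{Q}_0 = \mathbf{P}_{pq}(\mathbf{B})$ coming from the previous step. Computing $\mathbf{Q}_{k+1} = \mathbf{Q}_k^2$, the $(1,2)$ block becomes $(\mathbf{I} + \mathbf{P}_{pq}(\mathbf{G}/2^m)^{2^k})\mathbf{R}_k$, which is precisely the recursion for $\mathbf{R}_{k+1}$ in \eqref{eq:2.21}, while the $(1,3)$ block couples that same $(1,2)$ block to the $(2,3)$ scalar $2^k\,b/2^m$, generating the additive term $2^k\mathbf{R}_k$ and hence the recursion for $\widehat{\mathbf{R}}_{k+1}$ in \eqref{eq:2.21}. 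Collecting the $d\times 1$ pieces and re-inserting the scalings $b/2^m$, $\mathbf{a}/2^m$, $\mathbf{g}/2^m$ reproduces the stated $(1,3)$ entry.

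The routine but error-prone heart of the argument is the $(1,3)$ block, which I expect to be the main obstacle. Both the inversion step, where the $(1,3)$ entry of $\mathbf{D}_{pq}(\mathbf{B})^{-1}$ already mixes the $(1,2)$ and $(2,3)$ data, and the squaring step, where the $(1,3)$ block accumulates the product of the growing $(2,3)$ scalar with the $(1,2)$ block, demand careful bookkeeping to confirm that everything reorganizes into the two recursions \eqref{eq:2.21} together with the single correction $\widehat{\mathbf{r}}_m$. By contrast, the $(1,1)$, $(1,2)$ and $(2,3)$ entries reduce directly to Theorem~\ref{th:Pade1} and to a scalar exponential, so the only genuinely new content is tracking this one coupled block.
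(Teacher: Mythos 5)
Your proposal is correct, and its first half coincides with the paper's proof: both scale $\mathbf{A}$, exploit the block form of the powers (second and third rows vanishing from the second power on), read off $\mathbf{N}_{pq}$ and $\mathbf{D}_{pq}$ blockwise with $(2,3)$ entries $\frac{b\,p}{2^m(p+q)}$ and $-\frac{b\,q}{2^m(p+q)}$, invert by the block-triangular formula, and invoke the identity $\frac{p}{p+q}+\frac{q}{p+q}=1$ --- this is exactly how the paper obtains the $(2,3)$ entry $b/2^m$ of $\mathbf{P}_{pq}\left(\mathbf{A}/2^m\right)$ in \eqref{eq:6.1} and the $-\widetilde{\mathbf{D}}_{pq}$ correction inside $\widehat{\mathbf{r}}_m$, which you correctly identify as the crux. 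The two arguments part ways only at the power-raising step. The paper computes $\mathbf{P}_{pq}\left(\mathbf{A}/\mu\right)^{\mu}$ in closed form for an \emph{arbitrary} integer $\mu$, with coefficient matrices the explicit sums $\mathbf{r}\left(\mu\right)=\sum_{j=0}^{\mu-1}\mathbf{P}_{pq}\left(\mathbf{G}/\mu\right)^{j}$ and $\widehat{\mathbf{r}}\left(\mu\right)=\sum_{j=0}^{\mu-1}\left(\mu-1-j\right)\mathbf{P}_{pq}\left(\mathbf{G}/\mu\right)^{j}$, and only afterwards verifies that at $\mu=2^m$ these sums obey the recursions \eqref{eq:2.21}. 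You instead induct on the squaring steps, showing that $\mathbf{Q}_{k+1}=\mathbf{Q}_k^2$ propagates the block form with $(1,2)$ coefficient $\left(\mathbf{I}+\mathbf{P}_{pq}\left(\mathbf{G}/2^m\right)^{2^k}\right)\mathbf{R}_k$ and $(1,3)$ coefficient $\left(\mathbf{I}+\mathbf{P}_{pq}\left(\mathbf{G}/2^m\right)^{2^k}\right)\widehat{\mathbf{R}}_{k}+2^k\mathbf{R}_k$, the extra term arising from the product of the $(1,2)$ block with the $(2,3)$ scalar $2^k b/2^m$; thus \eqref{eq:2.21} falls out of the squaring law itself, with base case $\mathbf{R}_0=\mathbf{I}$, $\widehat{\mathbf{R}}_0=\mathbf{0}_{d\times d}$ and terminal $(2,3)$ entry $2^m b/2^m = b$, all of which checks out. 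Your route is slightly more economical and mirrors the actual scaling-and-squaring implementation, dispensing with the separate verification that the explicit sums satisfy the recursions; the paper's route buys closed-form expressions for $\mathbf{R}_m$ and $\widehat{\mathbf{R}}_m$ and a formula for the $\mu$-th power valid beyond dyadic $\mu$.
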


\begin{proof}
 Deferred to Subsection \ref{sec:Proof:Pade2}.
\end{proof}

In the next remark we show how to change  the time scale of \eqref{eq:StochasticQME}.

\begin{remark}
\label{rem:normalizacion}
We can consider  \eqref{eq:StochasticQME} with normalized physical constants.
For a given $c > 0$,
we define $\boldsymbol{\varrho}_t := \boldsymbol{\rho}_{c \, t}$, where $t \geq 0$.
Then $\boldsymbol{\varrho}_t$ satisfies \eqref{eq:StochasticQME} with $\mathbf{L}_j \left( t \right)$ and $\mathbf{R}_{m} \left( t \right)$ 
replaced by $ c \, \mathbf{L}_j \left( c \, t \right)$ and $\sqrt{c} \, \mathbf{R}_{m} \left( c \, t \right)$,
respectively.
Indeed, 
from \eqref{eq:StochasticQME} we obtain that 
\begin{equation}
\label{eq:TimeChanged}
\begin{aligned}
d \boldsymbol{\varrho}_t
&  =	
 \boldsymbol{\mathcal{L}} \left( c \, t \right)  \boldsymbol{\varrho}_{t-} c \, dt 
\\ 
& \quad
+ \sum_{j =1}^J \left( \mathbf{L}_{j} \left( c \, t \right) \boldsymbol{\varrho}_{t-} + \boldsymbol{\varrho}_{t-} \mathbf{L}_{j} \left( c \,t \right)^* 
 - 2 \Re  \left(  \tr \left( \mathbf{L}_j \left( c \, t \right) \boldsymbol{\varrho}_{t-} \right) \boldsymbol{\varrho}_{t-} \right) \right) \sqrt{c} \, d\widetilde{W}^{j}_{t}
\\ 
& \quad
+ 
\sum_{m=1}^M   \left( 
\frac{ \mathbf{R}_{m} \left( c \, t \right) \boldsymbol{\varrho}_{t-}  \mathbf{R}_{m} \left(  c \, t \right) ^*}
{ \tr \left( \mathbf{R}_{m} \left( c \, t \right) ^*\mathbf{R}_{m} \left( c \, t \right) \boldsymbol{\varrho}_{t-} \right) } 
- \boldsymbol{\varrho}_{t-} \right) \biggl( d\widetilde{N}^{m}_{t} -  \tr \left( \mathbf{R}_{m} \left( c \, t \right) ^*\mathbf{R}_{m} \left( c \, t \right) \boldsymbol{\varrho}_{t-} \right) c \, dt \biggr) ,
\end{aligned}
\end{equation}
where 
$\boldsymbol{\varrho}_t$, $ \widetilde{W}^{j}_{t} := W^{j}_{c \, t} / \sqrt{c}$ and $\widetilde{N}^{m}_{t} := N^{m}_{c \, t}$ 
are adapted stochastic process with respect to the new filtration $\left(\mathfrak{F}_{c \, t}\right) _{ t\geq 0}$
(see, e.g., Chapter X of \cite{Jacod1979} or \cite{Kobayashi2011}).
Since 
$\widetilde{W}^1, \ldots, \widetilde{W}^J$ are  independent  $\left(\mathfrak{F}_{c \, t}\right) _{ t\geq 0}$-Brownian motions
and 
the $\widetilde{N}^m$'s are $\left(\mathfrak{F}_{c \, t}\right) _{ t\geq 0}$-doubly stochastic Poisson processes with 
intensity
$
 \, \tr \left( \mathbf{R}_{m} \left( c \, t \right) ^*\mathbf{R}_{m} \left( c \, t \right) \boldsymbol{\varrho}_{t-} \right)
$,
\eqref{eq:TimeChanged} becomes \eqref{eq:StochasticQME} with 
$ c \, \mathbf{L}_j \left( c \, t \right)$ and $\sqrt{c} \, \mathbf{R}_{m} \left( c \, t \right)$
in place of 
$\mathbf{L}_j \left( t \right)$ and $\mathbf{R}_{m} \left( t \right)$,
respectively.
\end{remark}

\section{Simulation results}
\label{sec:Simulation results}

This section illustrates the performance of Schemes \ref{scheme:SME_Euler_Exponential} through \ref{scheme:Euler_Implicit}.
For this purpose,
as a ``small quantum system"
we select 
a single-mode quantized electromagnetic field interacting with a two-level system.
The internal dynamics of this coupled system is governed by the Rabi Hamiltonian 
\begin{equation}
\label{eq:RabiHamiltonian}
\mathbf{H}_{Rabi}
=
{\omega_1}\,\boldsymbol{\sigma}^z /2+ \omega_2 \, \mathbf{a}^{\dag} \mathbf{a} + g   \left( \mathbf{a}^{\dag} + \mathbf{a} \right )  \boldsymbol{\sigma}^x ,
\end{equation}
which acts upon the tensor product space $ \boldsymbol{\mathfrak{h}}:= \ell^2 \left(\mathbb{Z}_+ \right)\otimes \mathbb{C}^{2}$
(see, e.g., \cite{Agarwal2012,Braak2011,Niemczyk2010,Xiang2013}).
Here,
$\omega_1 > 0$ is the transition frequency between the ground state 
$
\begin{pmatrix}
 0 
 \\
 1 
\end{pmatrix}
$
of the two-level system and its upper level
$
\begin{pmatrix}
 1
 \\
 0 
\end{pmatrix}
$,
$\omega_2 > 0$ is the angular frequency of the electromagnetic field mode
and
$g \geq 0$ is the dipole coupling constant.
As usual,
the Pauli matrices $\boldsymbol{\sigma}^x$, $\boldsymbol{\sigma}^y$, $\boldsymbol{\sigma}^z$ are  defined by
\begin{equation*}
\boldsymbol{\sigma}^x = 
\begin{pmatrix}
 0 & 1
 \\
 1 & 0
\end{pmatrix},
\quad
\boldsymbol{\sigma}^y = 
\begin{pmatrix}
 0 &  - \mathrm{i} 
 \\
  \mathrm{i}  & 0
\end{pmatrix} ,
\quad
\boldsymbol{\sigma}^z = 
\begin{pmatrix}
 1 &  0
 \\
0  & -1
\end{pmatrix} ,
\end{equation*}
and
the creation and annihilation operators $\mathbf{a}$, $\mathbf{a}^{\dagger}$ are 
closed operators on $\ell^2 \left(\mathbb{Z}_+ \right)$ given by
$$
\mathbf{a}\mathbf{e}_{n} 
=
\begin{cases}
 \sqrt{n} \, \mathbf{e}_{n-1}  & \text{if } n \in  \mathbb{N} ,
 \\
 0 &  \text{if } n = 0 
\end{cases}
\text{ and }
\
\mathbf{a}^{\dagger}\mathbf{e}_{n}
=
\begin{cases} 
 \sqrt{ n+1} \, \mathbf{e}_{n+1}  & \text{if } n \in   \mathbb{Z}_{+} ,
\end{cases}
$$
where
$(\mathbf{e}_n)_{n\ge 0}$ denotes the canonical orthonormal basis of $\ell^2(\mathbb{Z_+})$.

\subsection{Autonomous stochastic quantum master equation of diffusive type}
\label{subsec:DiffusiveSQMEs}

In this subsection,
$M=0$ and 
$\mathbf{H} \left( t \right) $, $\mathbf{L}_1 \left( t \right) , \ldots, \mathbf{L}_J \left( t \right)$
are constant functions.
Therefore, 
we will compute the solution of the autonomous stochastic quantum master equation
\begin{equation}
\label{eq:DiffusiveAutSQME} 
 d\boldsymbol{\rho}_{t}	
  =	
 \boldsymbol{\mathcal{L}}  \, \boldsymbol{\rho}_{t} dt 
+ \sum_{j =1}^J \left( \mathbf{L}_{j} \boldsymbol{\rho}_{t} + \boldsymbol{\rho}_{t} \, \mathbf{L}_{j}^* 
 - 2 \Re  \left(  \tr \left( \mathbf{L}_j \boldsymbol{\rho}_{t} \right) \boldsymbol{\rho}_{t} \right) \right) dW^{j}_{t} ,
\end{equation}
where 
$$
\boldsymbol{\mathcal{L}} \, \boldsymbol{\varrho}
=
- \mathrm{i} \left( \mathbf{H} \, \boldsymbol{\varrho} - \boldsymbol{\varrho}  \, \mathbf{H} \right)
+ \sum_{j = 1}^{J} \left( \mathbf{L}_{j}  \boldsymbol{\varrho} \, \mathbf{L}_{j} ^* -  \mathbf{L}_{j}^* \mathbf{L}_{j} \, \boldsymbol{\varrho} /2-\boldsymbol{\varrho} \, \mathbf{L}_{j}^* \mathbf{L}_{j} /2
\right) 
$$
and 
$\mathbf{H},$ $ \mathbf{L}_1, \ldots, \mathbf{L}_J$ are linear operators on $\boldsymbol{\mathfrak{h}}$ with $\mathbf{H}^* = \mathbf{H}$.
We will compare the performance of  
Schemes \ref{scheme:SME_Euler_Exponential}, \ref{scheme:SME_Exponential_Integral}, \ref{scheme:Euler_Implicit} 
and the following numerical method designed by \cite{Amini2011} for solving \eqref{eq:DiffusiveAutSQME}.

\begin{scheme}
\label{scheme:AminiMirrahimiRouchon}
Let $\bar{\boldsymbol{\rho}}_0$ be a random variable with values in $\boldsymbol{\mathfrak{h}}$.
Suppose that 
$\overline{\delta W}^1_1, \overline{\delta W}^2_1$, $\ldots$, 
$\overline{\delta W}^J_1, \overline{\delta W}^1_2, \ldots$ 
are i.i.d. symmetric real random variables with variance $1$
that are independent of $\bar{\boldsymbol{\rho}}_0$.
Define recursively 
$$
\bar{\boldsymbol{\rho}}_{n+1} = \dfrac{\mathbf{M}_n \, \bar{\boldsymbol{\rho}}_{n} \mathbf{M}_n^*}{\tr\left (\mathbf{M}_n \, \bar{\boldsymbol{\rho}}_{n}  \mathbf{M}_n^*\right )} ,
$$
where $n \in \mathbb{Z}_+$ and
$$
\mathbf{M}_n 
=
\mathbf{I} - \left (i \mathbf{H}+{1}/{2}\,\sum^{J}_{j=1}\mathbf{L}_{j}^*\mathbf{L}_{j} \right)\Delta 
+ \sum_{j=1}^{J} \overline{\delta y}^j_n  \mathbf{L}_{j} 
$$
with
$
\overline{\delta y}^j_n  
= 
\tr \left (\mathbf{L}_{j} \bar{\boldsymbol{\rho}}_n + \bar{\boldsymbol{\rho}}_n \mathbf{L}_{j}^*\right )\Delta +\sqrt{\Delta }\,\,\overline{\delta W}_{n+1}^j
$.
\end{scheme}

We will simulate computationally the next physical system.

\begin{system}
\label{Ex:DiffusiveSQME} 
The internal dynamics of the small quantum system is determined by 
the Hamiltonian $\mathbf{H} = \mathbf{H}_{Rabi}$ acting upon the quantum state space $ \boldsymbol{\mathfrak{h}}= \ell^2 \left(\mathbb{Z}_+ \right)\otimes \mathbb{C}^{2}$,
where $\mathbf{H}_{Rabi}$ is as in \eqref{eq:RabiHamiltonian}.
The interaction of the small system with two independent non-zero temperature baths
is described by the Gorini-Kossakowski-Sudarshan-Lindblad operators  
$ \mathbf{L}_1 = \sqrt{\alpha_{1}}  \, \mathbf{a} $, 
$ \mathbf{L}_2 = \sqrt{\alpha_{2}} \, \mathbf{a}^{\dag} $, 
$ \mathbf{L}_3 = \sqrt{\beta_1} \boldsymbol{\sigma}^- $, 
$ \mathbf{L}_4 = \sqrt{\beta_2} \boldsymbol{\sigma}^+  $
and 
$ \mathbf{L}_5 =  \sqrt{\beta_3} \boldsymbol{\sigma}^z $,
where  $\alpha_{1}, \alpha_{2}, \beta_1, \beta_2,  \beta_3 \geq 0$
and $\boldsymbol{\sigma}^-$, $\boldsymbol{\sigma}^+$ are defined by 
\begin{equation*}
\boldsymbol{\sigma}^+ = 
\begin{pmatrix}
 0 & 1
 \\
 0 & 0
\end{pmatrix}
\text{ and }
\boldsymbol{\sigma}^- = 
\begin{pmatrix}
 0 & 0
 \\
 1 & 0
\end{pmatrix} 
\end{equation*}
(see, e.g., \cite{HarocheRaimond2006}).
The continuous monitoring of the radiation field is characterized  by 
$$
 \mathbf{L}_6 
=   
\sqrt{\alpha_{3} \gamma/2}  \left( \exp{\left (\mathrm{i} \psi\right )} \mathbf{a}^{\dag} 
+ \exp{\left (- \mathrm{i}  \psi \right )} \mathbf{a} \right),
$$
 and 
$
 \mathbf{L}_7 
=  
 \sqrt{\alpha_{3} \left( 1 - \gamma \right)/2}  \left( \exp{\left (\mathrm{i} \psi\right )} \mathbf{a}^{\dag} 
+ \exp{\left (- \mathrm{i}  \psi \right )} \mathbf{a} \right) 
$,
with $ \alpha_{3}, \psi \geq 0$ and $\gamma \in \left] 0 , 1 \right]$ 
(see, e.g., \cite{BarchielliBelavkin1991,BarchielliGregoratti2009,Shabani2014,WisemanMilburn2009}).
\end{system}

The above generic model describes important physical phenomena.
In circuit quantum electrodynamics,
for instance,
the two-level system is realized by an artificial atom (superconducting qubit)
and the field represents a microwave resonator.
Moreover,
the channel $\mathbf{L}_6$ characterizes 
a quantum non-demolition measurement of the squeezed quadrature observable of 
the cavity microwave mode with detection efficiency $\gamma$
(see, e.g., \cite{Agarwal2012,Gambetta2008,Niemczyk2010,Shabani2014,Xiang2013}),
the channel $\mathbf{L}_7$ represents the loss light in the detection of the field quadrature.

\begin{figure}[tb]
\centering
\subfigure{
\includegraphics[height= 0.35in,width=3.17in]{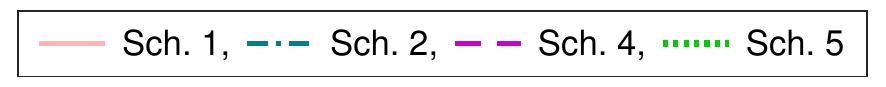}}\\
\addtocounter{subfigure}{-1}
\subfigure[$\mathbb{E}\left( \tr \left( \mathbf{L}_6 \, \boldsymbol{\rho}_t \right) \right)$ with $\Delta = 2^{-9}$]{
\includegraphics[height= 1.0in,width=3.17in]{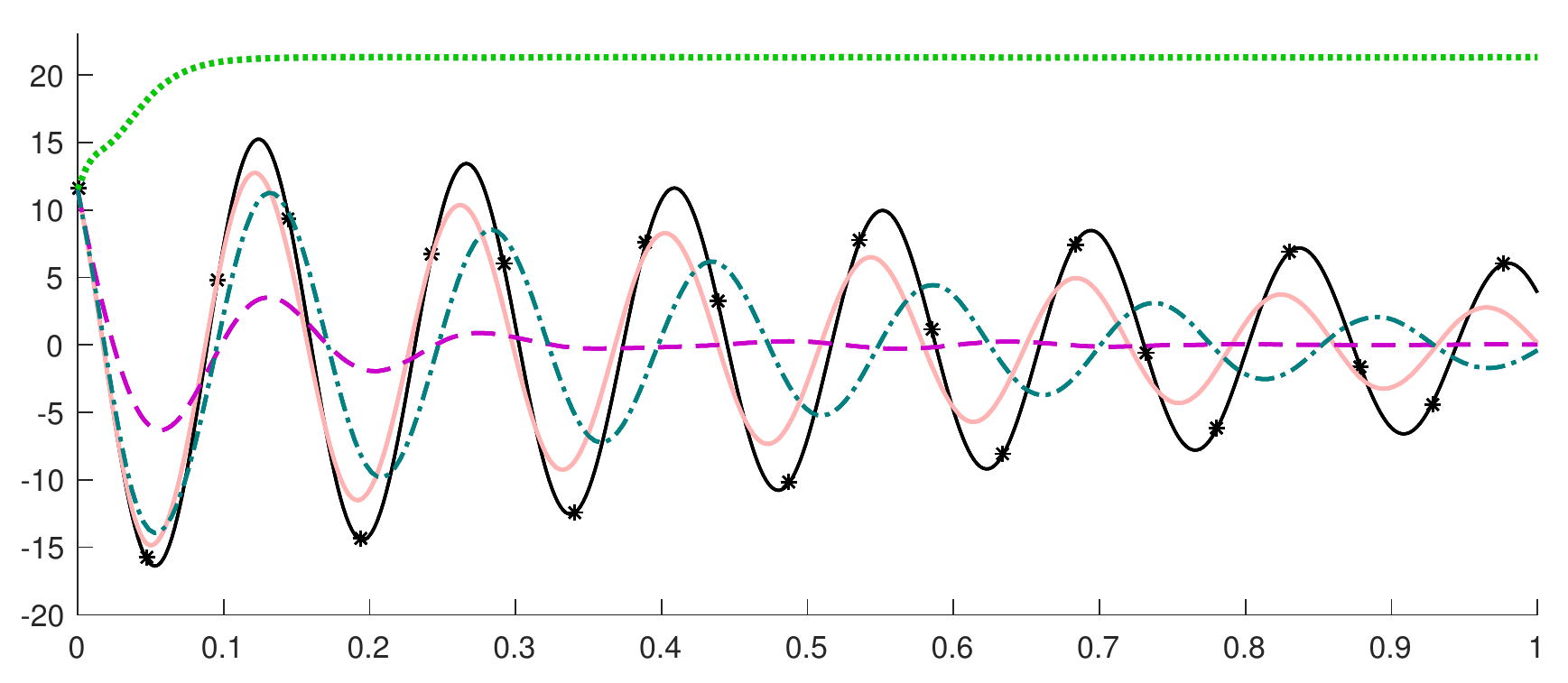}} 
\subfigure[$V\left (\tr \left( \mathbf{L}_6 \boldsymbol{\rho}_t \right)\right )$ with $\Delta=  2^{-9} $]{
\includegraphics[height= 1.0in,width=3.17in]{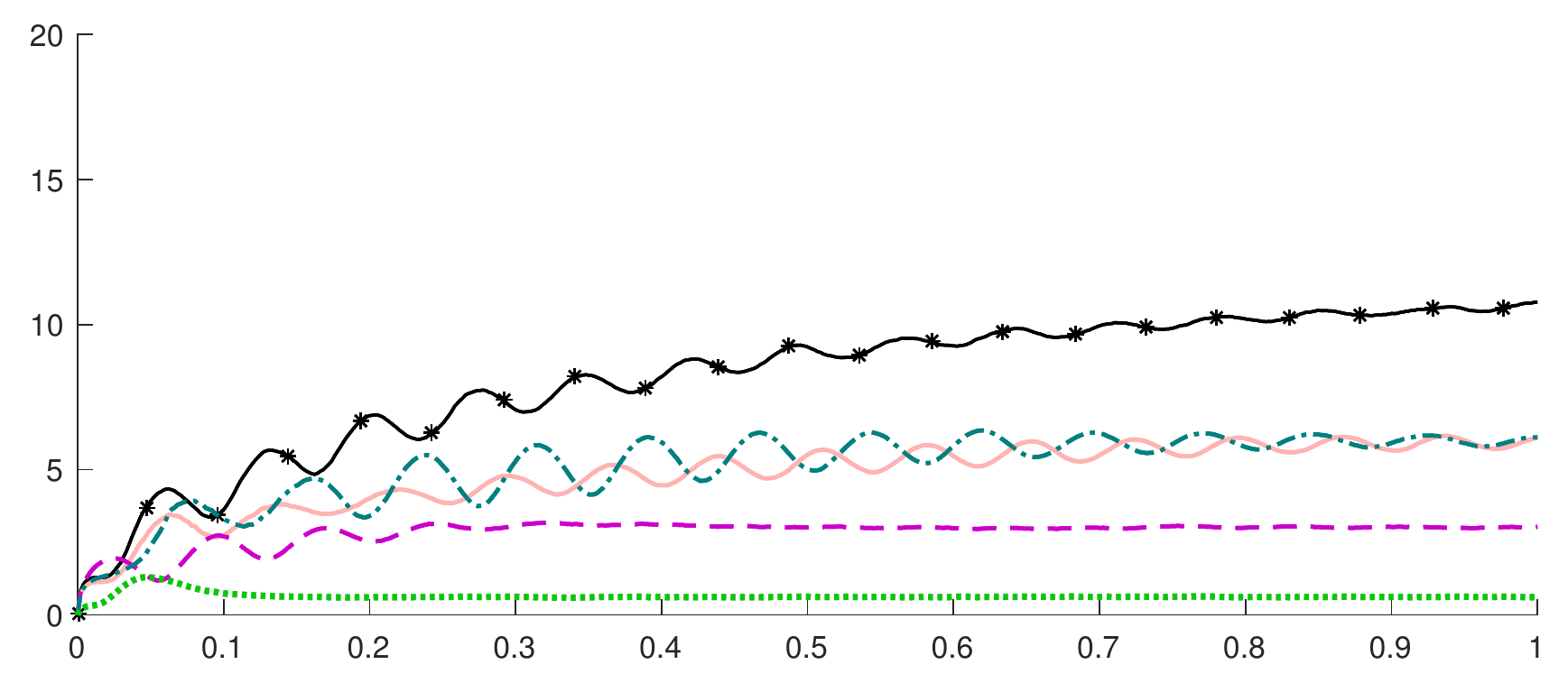}} \\
\subfigure[$\mathbb{E}\left( \tr \left( \mathbf{L}_6 \, \boldsymbol{\rho}_t \right) \right)$ with $\Delta=2^{-11}$]{
\includegraphics[height= 1.0in,width=3.17in]{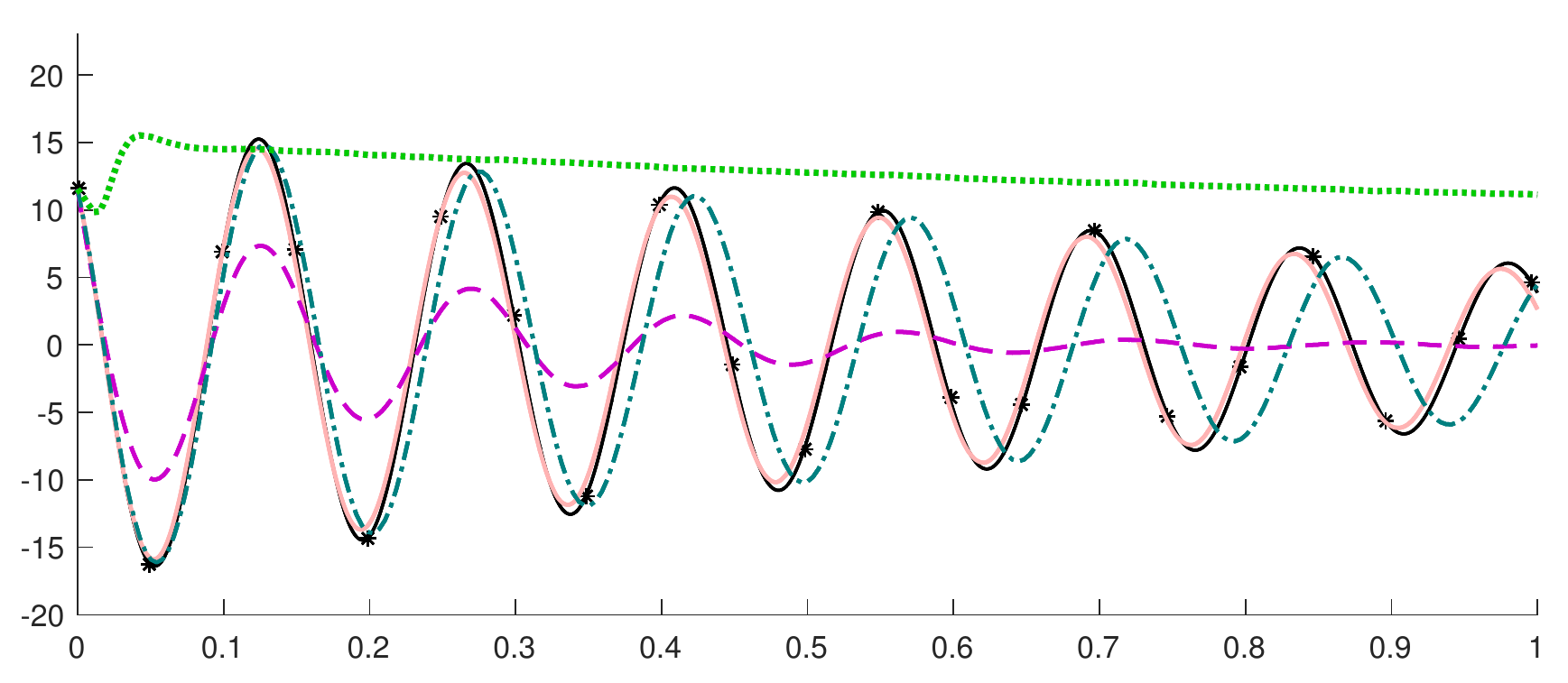}} 
\subfigure[$V\left (\tr \left( \mathbf{L}_6 \boldsymbol{\rho}_t \right)\right )$ with $\Delta=2^{-11}$]{
\includegraphics[height= 1.0in,width=3.17in]{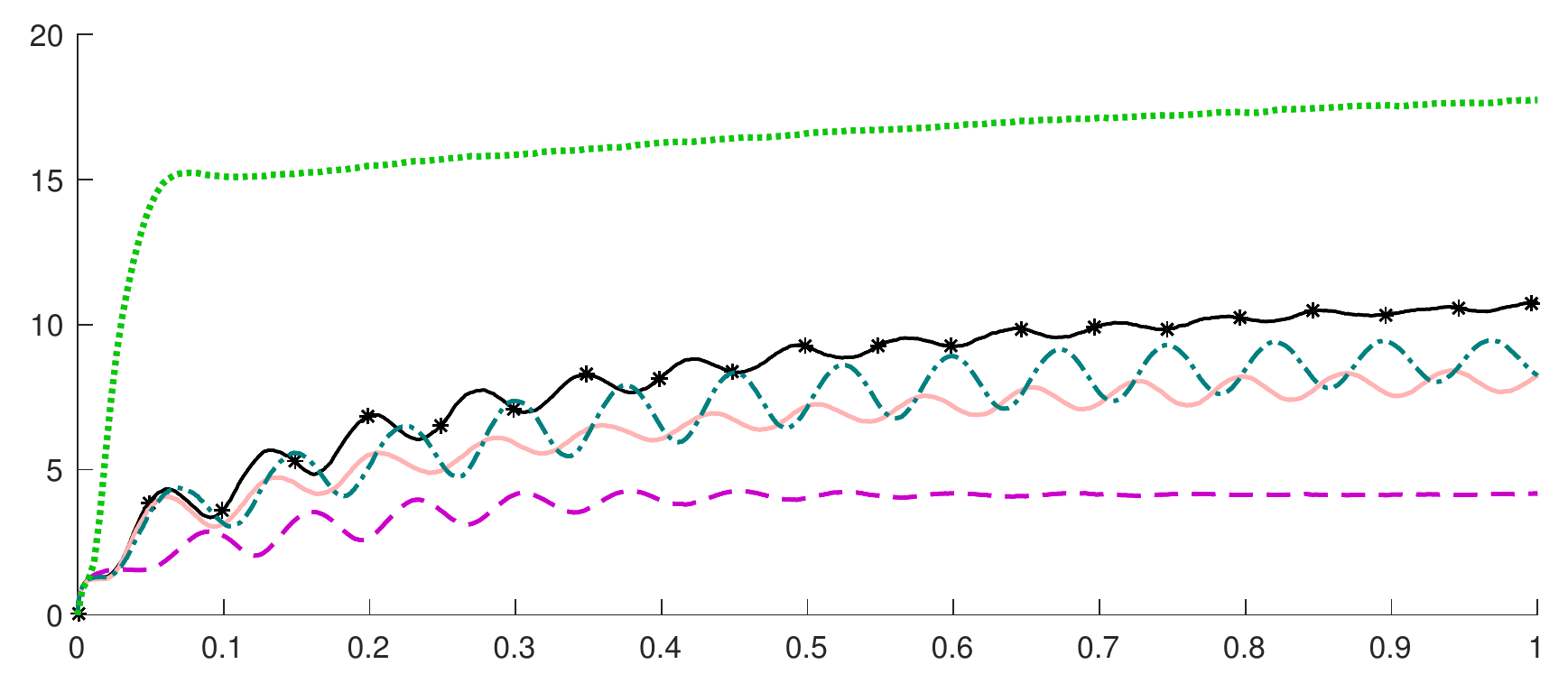}} \\
\subfigure[$\mathbb{E}\left( \tr \left( \mathbf{L}_6 \, \boldsymbol{\rho}_t \right) \right)$ with $\Delta=2^{-14}$]{
\includegraphics[height= 1.0in,width=3.17in]{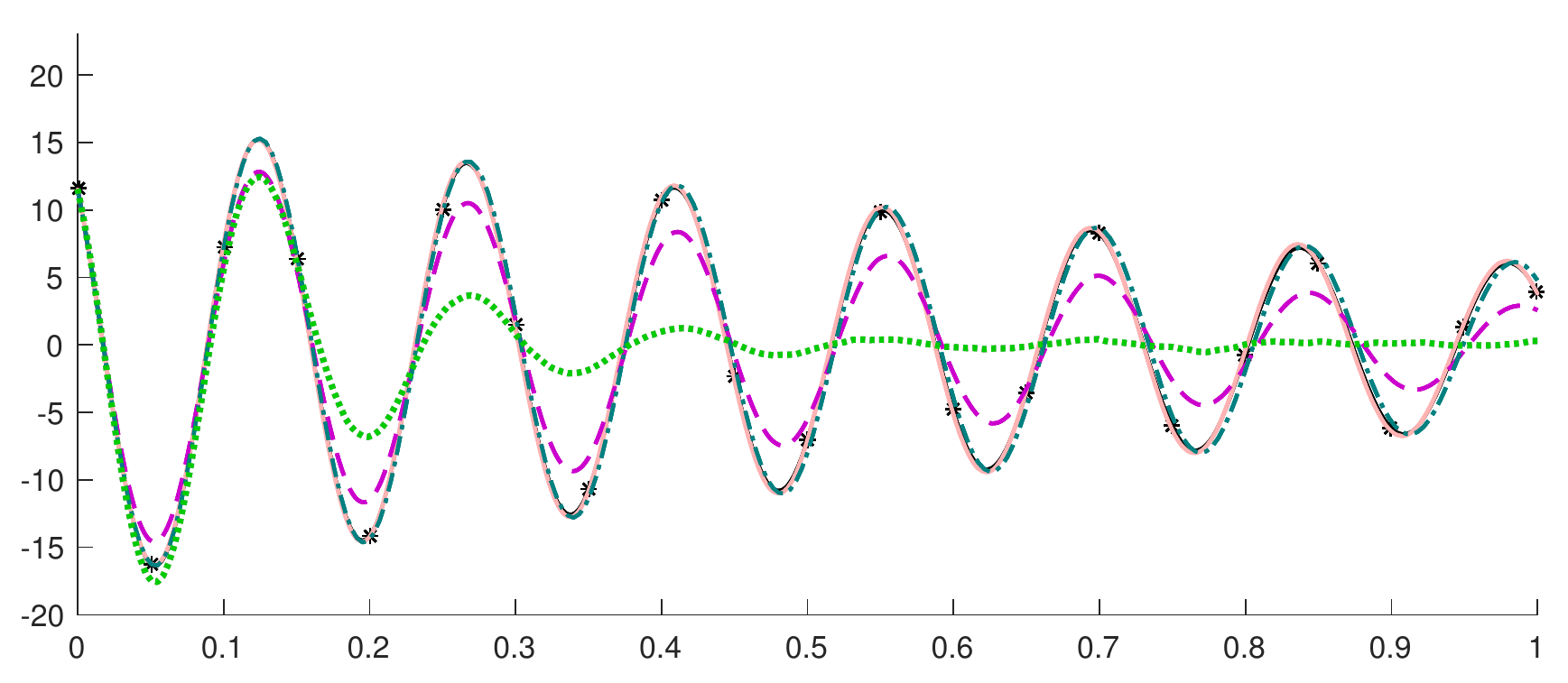}} 
\subfigure[$V\left (\tr \left( \mathbf{L}_6 \boldsymbol{\rho}_t \right)\right )$ with $\Delta=2^{-14}$]{
\includegraphics[height= 1.0in,width=3.17in]{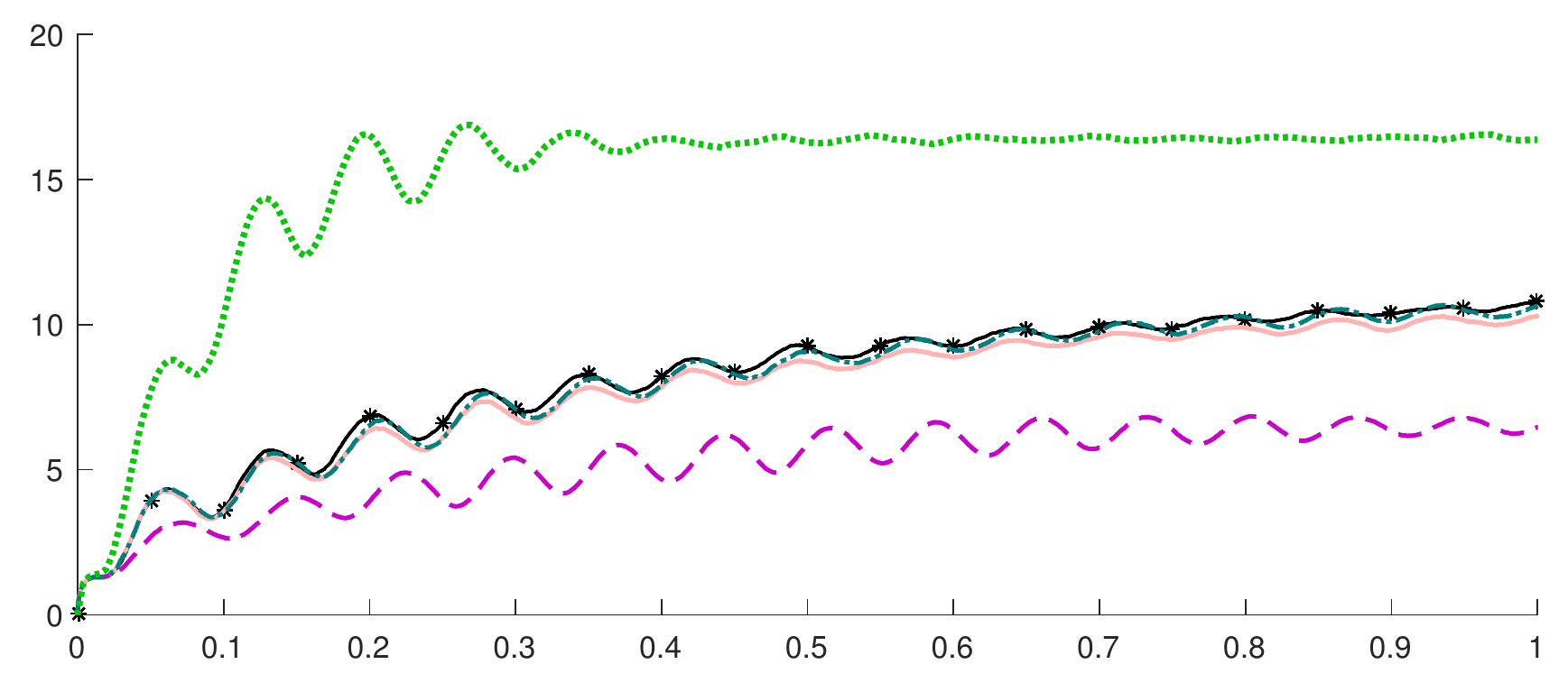}} \\
\caption{Computation of $\mathbb{E}\left( \tr \left( \mathbf{L}_6 \, \boldsymbol{\rho}_t \right) \right)$ and $V\left( \tr \left( \mathbf{L}_6 \, \boldsymbol{\rho}_t \right) \right)$,
where $ \mathbf{L}_6$ and $\boldsymbol{\rho}_t$ are given by System \ref{Ex:DiffusiveSQMEapp} with  $d=30$.
The reference values are represented by a solid line with stars, 
and the time $t$ runs from $0$ to to $1$  nanosecond.
}

\label{Fig:MeanValues}
\end{figure}

\renewcommand{\arraystretch}{1.3}
\begin{table}[tb]
\centering
 \caption{
Estimated values of the error $\epsilon\left (\Delta\right )$ and the relative CPU time $\tau\left (\Delta\right )$ for Schemes \ref{scheme:SME_Euler_Exponential}, \ref{scheme:SME_Exponential_Integral}, \ref{scheme:Euler_Implicit} 
and \ref{scheme:AminiMirrahimiRouchon}. }
 \begin{tabular}{ccccccccc}
\hline \hline
	& $\Delta$      & $2^{-8}$  & $2^{-9}$ & $2^{-10}$ & $2^{-11}$  & $2^{-12}$ & $2^{-13}$ & $2^{-14}$  \\
\hline \hline 
\multirow{4}*{\rotatebox[origin=c]{90}{$\epsilon\left (\Delta\right )$}}
 	& Scheme \ref{scheme:SME_Euler_Exponential} &5.8448 & 4.8383 & 4.0076 & 2.9131 & 1.8881 & 0.9954 & 0.6367 \\
 	& Scheme \ref{scheme:SME_Exponential_Integral} &6.0139 & 4.6793 & 3.9311 & 2.7937 & 1.6056 & 0.8887 & 0.6130 \\
 	& Scheme \ref{scheme:Euler_Implicit} &8.1782 & 7.7518 & 7.2824 & 6.6006 & 6.0178 & 5.1713 & 4.4519 \\
	& Scheme \ref{scheme:AminiMirrahimiRouchon} &7.4041 & 10.174 & 8.6682 & 11.792 & 15.868 & 12.446 & 9.8675 \\ 
 \hline
\multirow{4}*{\rotatebox[origin=c]{90}{$\tau\left (\Delta\right )$}}
	& Scheme \ref{scheme:SME_Euler_Exponential} 
         & $    1.01$ & $    1.97$ & $    3.95$ & $    7.89$ & $   15.82$ & $   31.69$ & $   63.41$ \\
	& Scheme \ref{scheme:SME_Exponential_Integral} 
         & $    1.06$ & $    2.10$ & $    4.18$ & $    8.34$ & $   16.71$ & $   33.39$ & $   66.96$ \\
	& Scheme \ref{scheme:Euler_Implicit} 
         & $    1.00$ & $    1.99$ & $    3.95$ & $    7.90$ & $   15.91$ & $   31.82$ & $   63.37$ \\
	& Scheme \ref{scheme:AminiMirrahimiRouchon} 
         & $  248.28$ & $  497.41$ & $  997.33$ & $ 1991.83$ & $ 3990.66$ & $ 7974.96$ & $16423.67$ \\
\hline \hline
 \end{tabular} 
 \label{Tab:Error}
 \end{table}

The time $t$ is given in nanoseconds.
The resonator frequency of the field is set at $2\pi*7$ $GHz$
(see, e.g., \cite{Niemczyk2010}),
and so  $\omega_2=14\pi$
(see Remark \ref{rem:normalizacion}).
We choose the qubit transition frequency such that $\omega_1= \omega_2 $.
Moreover, we select 
the qubit-mode coupling strength $g$ equal to $0.15\,\omega_2 $, 
and hence the small quantum system reach the ultrastrong-coupling regime
(see, e.g., \cite{Niemczyk2010}).
The  measurement strength $\alpha_{3}$ is set to $ 0.3 \, \omega_2$,
the quadrature angle is $\psi =  \pi/4 $ and the detection efficiency $\gamma$ has been assumed to be $0.9$.
Furthermore, we take 
$ \alpha_1 =2 \, \alpha_2 = \beta_1 = 2 \, \beta_2 = \beta_3 = 0.02 \, \omega_2 . $
The initial density operator is the mixed state described by 
$$
 \boldsymbol{\rho}_{0} 
 = 
 \frac{1}{2} \, \ketbra{3}{3} \otimes 
 \ketbra{ \begin{pmatrix} 1/\sqrt{2} \\ 1 /\sqrt{2} \end{pmatrix}}{\begin{pmatrix} 1/\sqrt{2} \\ 1 /\sqrt{2} \end{pmatrix}}
 +
 \frac{1}{2}  \, \ketbra{4}{4} \otimes
 \ketbra{  \begin{pmatrix} 1 \\ 0 \end{pmatrix}}{ \begin{pmatrix} 1 \\ 0 \end{pmatrix}} ,
 $$
 that is, 
 $\boldsymbol{\rho}_{0} $ is given by \eqref{eq:3.1} with $\mu=2$,
$\mathbf{X}^{1}_{0} = \dfrac{1}{2}   \, \left|  3 \right \rangle  \otimes \left| 
 \begin{pmatrix} 1 \\ 1  \end{pmatrix} \right \rangle $
and 
$\mathbf{X}^{2}_{0} = \dfrac{ \sqrt{2}}{2}  \,  \left|  4 \right \rangle  \otimes \left|  \begin{pmatrix} 1 \\ 0 \end{pmatrix} \right \rangle $.
As usual,  the coherent state $\left|  \alpha \right. \rangle$ is defined by 
$
\left|  \alpha \right \rangle
=  \exp\left(- { \left\vert \alpha \right\vert^2 }/{ 2 } \right) 
\sum_{k = 0}^{\infty}  {\alpha^k} \left| e_k \right \rangle / {\sqrt{ k !}} 
$.

Following Remark \ref{rem:aproximacion} we approximate System \ref{Ex:DiffusiveSQME} 
by the finite-dimensional stochastic quantum master equation described in System \ref{Ex:DiffusiveSQMEapp}.

\begin{system}
\label{Ex:DiffusiveSQMEapp} 
Consider the stochastic quantum master equation \eqref{eq:StochasticQME} with 
$\boldsymbol{\mathfrak{h}} =  \ell^2_d \otimes \mathbb{C}^{2} $,
where $\ell^2_d$ stands for the linear span of $\mathbf{e}_0, \ldots, \mathbf{e}_d$ for any $d \in \mathbb{N}$.
Take 
$$
\mathbf{H}
=
{\omega_1}\,\boldsymbol{\sigma}^z/2 + \omega_2 \, \mathbf{P}_d  \, \mathbf{a}^{\dag} \mathbf{a}  \, \mathbf{P}_d + g   \, \mathbf{P}_d \left( \mathbf{a}^{\dag} + \mathbf{a} \right ) \, \mathbf{P}_d  \, \boldsymbol{\sigma}^x ,
$$
where $\mathbf{P}_d$ is the orthogonal projection of  $\ell^2 \left(\mathbb{Z}_+ \right)$ onto $\ell^2_d$.
Moreover,  set 
$ \mathbf{L}_1  = \sqrt{\alpha_{1}}  \, \mathbf{P}_{d} \mathbf{a} \mathbf{P}_{d}$, 
$ \mathbf{L}_2 = \sqrt{\alpha_{2}} \, \mathbf{P}_d \mathbf{a}^{\dag} \mathbf{P}_d $, 
$ \mathbf{L}_3 = \sqrt{\beta_1} \boldsymbol{\sigma}^- $, 
$ \mathbf{L}_4 = \sqrt{\beta_2} \boldsymbol{\sigma}^+  $,
$ \mathbf{L}_5 =  \sqrt{\beta_3} \boldsymbol{\sigma}^z $,
$$
 \mathbf{L}_6 
=   
\sqrt{\alpha_{3} \gamma / 2} \, \mathbf{P}_d \left( \exp{\left (\mathrm{i} \psi\right )}  \mathbf{a}^{\dag} 
+ \exp{\left (- \mathrm{i}  \psi \right )} \mathbf{a} \right) \mathbf{P}_d 
$$
and
$
 \mathbf{L}_7 
=  
 \sqrt{\alpha_{3}  \left( 1 - \gamma \right) / 2}  \, \mathbf{P}_d \left( \exp{\left (\mathrm{i} \psi\right )} \mathbf{a}^{\dag} 
+ \exp{- \mathrm{i}  \psi } \mathbf{a} \right) \mathbf{P}_d 
$.
\end{system}

Adopt the framework of System \ref{Ex:DiffusiveSQMEapp}.
By pondering the computational time for running Scheme \ref{scheme:AminiMirrahimiRouchon}
and 
the match between System \ref{Ex:DiffusiveSQME} and \ref{Ex:DiffusiveSQMEapp},
we take $d=30$.
First,
we test Schemes \ref{scheme:SME_Euler_Exponential}, \ref{scheme:SME_Exponential_Integral}, \ref{scheme:Euler_Implicit} and  \ref{scheme:AminiMirrahimiRouchon}
by computing  
$
\mathbb{E} \,  \tr \left( \mathbf{L}_6  \, \boldsymbol{\rho}_t  \right)
$
with $0 \leq t \leq 1$.
For any $\mathbf{A} \in \boldsymbol{\mathfrak{L}} \left( \ell^2_d \otimes \mathbb{C}^{2} \right)$,
$
\mathbb{E} \,  \tr \left( \mathbf{A} \boldsymbol{\rho}_t  \right) =  \tr \left( \mathbf{A} \, \mathbb{E} \,  \boldsymbol{\rho}_t  \right)
$.
Since 
$\mathbb{E} \,  \boldsymbol{\rho}_t $ satisfies the quantum master equation \eqref{eq:1.3}
and  $\dim \left(  \boldsymbol{\mathfrak{h}} \right) = 2d$ is in the range of a few ten,
we get the reference values of 
$\mathbb{E} \,  \tr \left( \mathbf{L}_6  \, \boldsymbol{\rho}_t  \right)$
by calculating the explicit solution of \eqref{eq:1.3}.
Second, we compute  $V \left( \tr \left( \mathbf{L}_6 \, \boldsymbol{\rho}_t  \right) \right)$,
the variance of $\tr \left( \mathbf{L}_6  \boldsymbol{\rho}_t  \right)$. The reference values for the variance have been calculated by 
sampling $10^4$ times Scheme \ref{scheme:SME_Exponential_Integral} with $\Delta=2^{-22}$.
It is worth pointing out that the maximal gap between the  variances of $\tr \left( \mathbf{L}_6  \, \bar{\boldsymbol{\rho}}_{n}  \right)$
produced by Schemes \ref{scheme:SME_Euler_Exponential} and \ref{scheme:SME_Exponential_Integral}
with $\Delta=2^{-22}$ is of order of $0.008$, while the maximal difference for Schemes \ref{scheme:SME_Euler_Exponential} and \ref{scheme:Euler_Implicit} is $0.2296$. 

Figure \ref{Fig:MeanValues} displays estimations of 
$\mathbb{E} \,  \tr \left(  \mathbf{L}_6  \, \boldsymbol{\rho}_t  \right)$ and $V \left( \tr \left( \mathbf{L}_6 \, \boldsymbol{\rho}_t  \right) \right)$
obtained from sampling $10^4$ times
Schemes \ref{scheme:SME_Euler_Exponential}, \ref{scheme:SME_Exponential_Integral}, 
\ref{scheme:Euler_Implicit} and \ref{scheme:AminiMirrahimiRouchon}
with step sizes $\Delta$ equal to $2^{-9}$, $2^{-11}$ and $2^{-14}$.
Figure \ref{Fig:MeanValues} shows that 
Schemes \ref{scheme:SME_Euler_Exponential}  and \ref{scheme:SME_Exponential_Integral}
reproduce very well the oscillatory behavior of $\mathbb{E} \,  \tr \left( \mathbf{L}_6  \boldsymbol{\rho}_t  \right)$.
The first part of Table \ref{Tab:Error} provides the errors
$$
 \epsilon \left( \Delta \right)
=
\max \left\{ 
\left\vert V \left( \tr \left( \mathbf{L}_6  \bar{\boldsymbol{\rho}}_{n}  \right) \right) - V \left( \tr \left( \mathbf{L}_6  \boldsymbol{\rho}_{n \Delta}  \right)  \right) \right\vert
:
n = 0 ,  \ldots,  \lfloor 1 / \Delta \rfloor
\right\} ,
$$
where
each numerical method $\bar{\boldsymbol{\rho}}_{n}$ is sampled $10^4$ times.
Figure \ref{Fig:MeanValues} 
together with Table \ref{Tab:Error},
point out the superior accuracy of the exponential numerical methods 
Schemes \ref{scheme:SME_Euler_Exponential}  and \ref{scheme:SME_Exponential_Integral}
over 
Schemes \ref{scheme:Euler_Implicit} and \ref{scheme:AminiMirrahimiRouchon}.
Furthermore,
Scheme \ref{scheme:AminiMirrahimiRouchon} has experienced serious difficulties
in approximating System \ref{Ex:DiffusiveSQMEapp}.

The second part of Table \ref{Tab:Error} presents the relative mean CPU time $\tau\left (\Delta\right )$ used for processing 
 $100$ realizations of Schemes
\ref{scheme:SME_Euler_Exponential}, \ref{scheme:SME_Exponential_Integral}, \ref{scheme:Euler_Implicit}
and \ref{scheme:AminiMirrahimiRouchon} with  step size $\Delta$.
Each $\tau\left (\Delta\right )$ has been estimated by averaging $20$ batches of $100$ numerical approximations.
For a given $\Delta$,
we report the ratio of the CPU time for each numerical method to the minimum of all CPU times,
which corresponds to Scheme  \ref{scheme:Euler_Implicit}   with $\Delta = 2^{-8}$.
Table \ref{Tab:Error} shows the advantages of solving \eqref{eq:StochasticQME} by means of 
the representation \eqref{eq:Representation}.
In particular, the computational costs of
Schemes \ref{scheme:SME_Euler_Exponential}, \ref{scheme:SME_Exponential_Integral} and \ref{scheme:Euler_Implicit}
are significantly lower  than the one of Scheme \ref{scheme:AminiMirrahimiRouchon}.
Moreover, the CPU time used by Scheme \ref{scheme:SME_Exponential_Integral}
is close to that of Scheme \ref{scheme:SME_Euler_Exponential}, thanks to Theorem \ref{th:Pade1}.

\begin{remark}
\label{rem:dimension}
We examined the dimensions for which 
System \ref{Ex:DiffusiveSQMEapp} can be simulated running Scheme \ref{scheme:SME_Euler_Exponential}
on a current basic computer.
To this end,
we ran Scheme \ref{scheme:SME_Euler_Exponential} with $\Delta = 2^{-14} $ on a  
3,3 GHz Intel Core i5 with 32 GB RAM.
Recall that the dimension of the state space of System \ref{Ex:DiffusiveSQMEapp} is $2d$,
and so here the density operators have $4 \cdot d^2$ elements.
In the case where the matrix exponentials are calculated applying 
Pad\'e approximants together with the scaling and squaring method,
we could run Scheme \ref{scheme:SME_Euler_Exponential} up to $ d = 23000 $,
and we were only  able to compute the explicit solution of \eqref{eq:1.3} if $d \leq 107 $.
Using Krylov subspace methods for the computation of matrix exponentials 
we ran Scheme \ref{scheme:SME_Euler_Exponential} even with $d = 5 \cdot10^7$.
Table \ref{Tab:TiemposCorrida} shows 
the dependence between $d$ and 
the time spent for ten integration steps of Scheme \ref{scheme:SME_Euler_Exponential} with $\Delta = 2^{-14} $.
\end{remark}

\renewcommand{\arraystretch}{1.3}
\begin{table}[tb]
\centering
 \caption{
Estimated values of the time (given in seconds) 
spent for a simple implementation of ten integration steps of Scheme \ref{scheme:SME_Euler_Exponential} 
based on Krylov subspace methods.}
 \begin{tabular}{ccccccccc}
 \hline
	 $d$ &  $3 \cdot 10^{3}$ & $ 10^{4}$ & $2 \cdot 10^{4}$ & $ 10^{5}$  &  $5 \cdot 10^{5}$  & $ 10^{6}$  \\
 \hline 
 Desktop 3,3 GHz Intel Core i5 with 32 GB RAM   & 1.8 & 3.1 & 4.9  & 24.9 & 162.1 & 355.5 \\ 
 Notebook 2,5 GHz Intel Core i7 with 16 GB RAM & 1.2  & 3.0  & 6.5  & 25.3 &	157.6 & 301.2  \\
\hline
 \end{tabular} 
 \label{Tab:TiemposCorrida}
 \end{table}

\subsection{Direct detection}
\label{subsec:DirectDetection}
 
This subsection concerns with the numerical simulation of 
non-autonomous stochastic quantum master equations involving diffusion and jump terms.
We study the following driven open quantum system,
where a monochromatic laser light is applied to the field mode of the small quantum system under consideration.

\begin{system}
\label{Ex:DrivenSQME}
We take 
$$
\mathbf{H}
= 
\mathbf{H}_{Rabi} 
+
\epsilon \left( \exp{ \left (- \mathrm{i} \left( \omega_3 t + \phi \right) \right )} \mathbf{a}^{\dag}  
+ \exp{\left ( \mathrm{i} \left( \omega_3 t + \phi \right) \right )} \mathbf{a} \right)  ,
$$
acting upon $ \boldsymbol{\mathfrak{h}}= \ell^2 \left(\mathbb{Z}_+ \right)\otimes \mathbb{C}^{2}$,
and so 
the small quantum system is driven by a coherent field of amplitude $\epsilon > 0$, 
frequency $\omega_3 \in \mathbb{R}$ 
and phase $ \phi \in \mathbb{R} $
(see, e.g., \cite{Gambetta2008}).
A noiseless counter is determined by 
$\mathbf{R}_1 = \sqrt{\beta_1  \gamma} \boldsymbol{\sigma}^-$,
and 
the channel $\mathbf{R}_2 = \sqrt{\beta_1 \left( 1 -  \gamma \right)} \boldsymbol{\sigma}^-$ 
represents the loss light in the direct detection,
where 
$\gamma \in \left] 0 , 1 \right]$ is the fraction of the detected fluorescent light
and $\beta_1 > 0$
(see, e.g., \cite{BarchielliGregoratti2012}).
The interaction between the two level system and a thermal bath is described by 
$
\mathbf{R}_3 = \sqrt{\beta_2 } \boldsymbol{\sigma}^-, 
\mathbf{R}_4 = \sqrt{\beta_3} \boldsymbol{\sigma}^+,
\mathbf{R}_5 = \sqrt{\beta_4} \boldsymbol{\sigma}^z  
$,
as well as the interaction of the field mode with an independent reservoir is modeled by 
$\mathbf{L}_1 = \sqrt{\alpha_{1}} \, \mathbf{a} $
and 
$\mathbf{L}_2 = \sqrt{\alpha_{2}} \, \mathbf{a}^{\dag} $.
Here, $\alpha_{1}, \alpha_{2}, \beta_2, \beta_3, \beta_4 \geq 0$
\end{system}

As in Section \ref{subsec:DiffusiveSQMEs},
we choose $\omega_1= \omega_2 = 14\pi$ and   $g = 0.15 \,\omega_2 $.
Moreover, 
we consider a coherent field with amplitude $\epsilon = 0.15 \, \omega_{2}$,
frequency $\omega_3 = 1.1\, \omega_2$ and phase $\phi = \pi/4$.
The  measurement strength $\beta_{1}$ is set to $ 0.3 \, \omega_1$ 
and the fraction of the detected fluorescent light $\gamma$ has been assumed  $0.9$.
Finally, we select 
$ \alpha_1 =2 \, \alpha_2 = 0.02 \, \omega_2 $
and
$  \beta_2 = 2\,\beta_3 = 2\,\beta_4 = 0.01\omega_{1} $.
The initial density operator is the mixed state described by 
\eqref{eq:3.1} with $\mu=2$,
$\mathbf{X}^{1}_{0} =  \dfrac{1}{2} \, \left|  2 \right \rangle  \otimes \left| 
 \begin{pmatrix} 1 \\ 1 \end{pmatrix} \right \rangle $
and 
$\mathbf{X}^{2}_{0} =  \dfrac{ \sqrt{2}}{2} \, \left|  1 \right \rangle  \otimes \left|  \begin{pmatrix} 1 \\ 0 \end{pmatrix} \right \rangle $.

\begin{figure}[H]
\centering

\subfigure{
\includegraphics[height= 0.25in,width=3.17in]{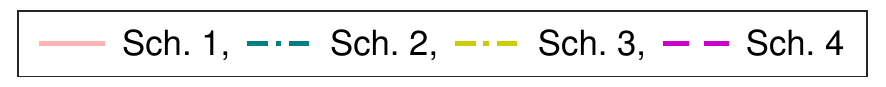}}
	\\
\addtocounter{subfigure}{-1}
\subfigure[$\mathbb{E}\,\left (\tr \left( \mathbf{L}_1 \boldsymbol{\rho}_t \right)\right )$ with $\Delta=2^{-6}$]{
\includegraphics[height= 1in,width=3.17in]{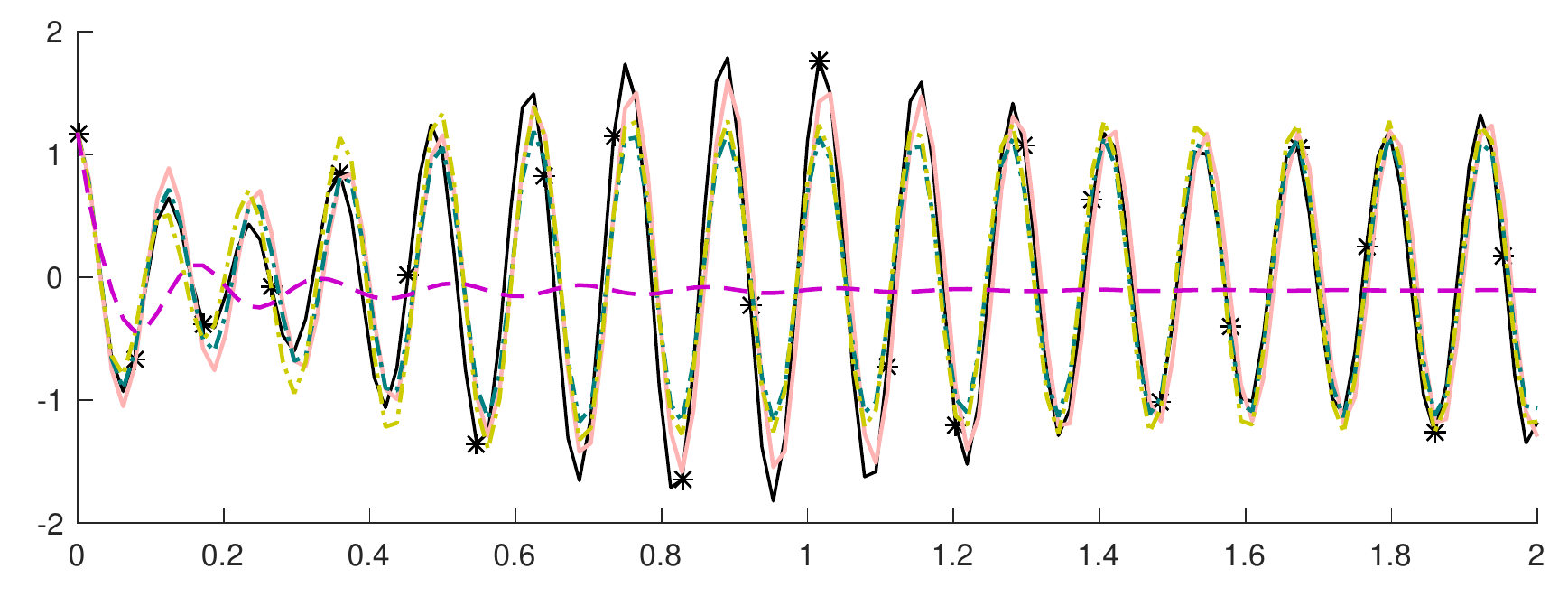}} 
\subfigure[$\mathbb{E}\,\left (\tr \left( \mathbf{R}_1 \boldsymbol{\rho}_t \right)\right )$ with $\Delta=2^{-6}$]{
\includegraphics[height= 1in,width=3.17in]{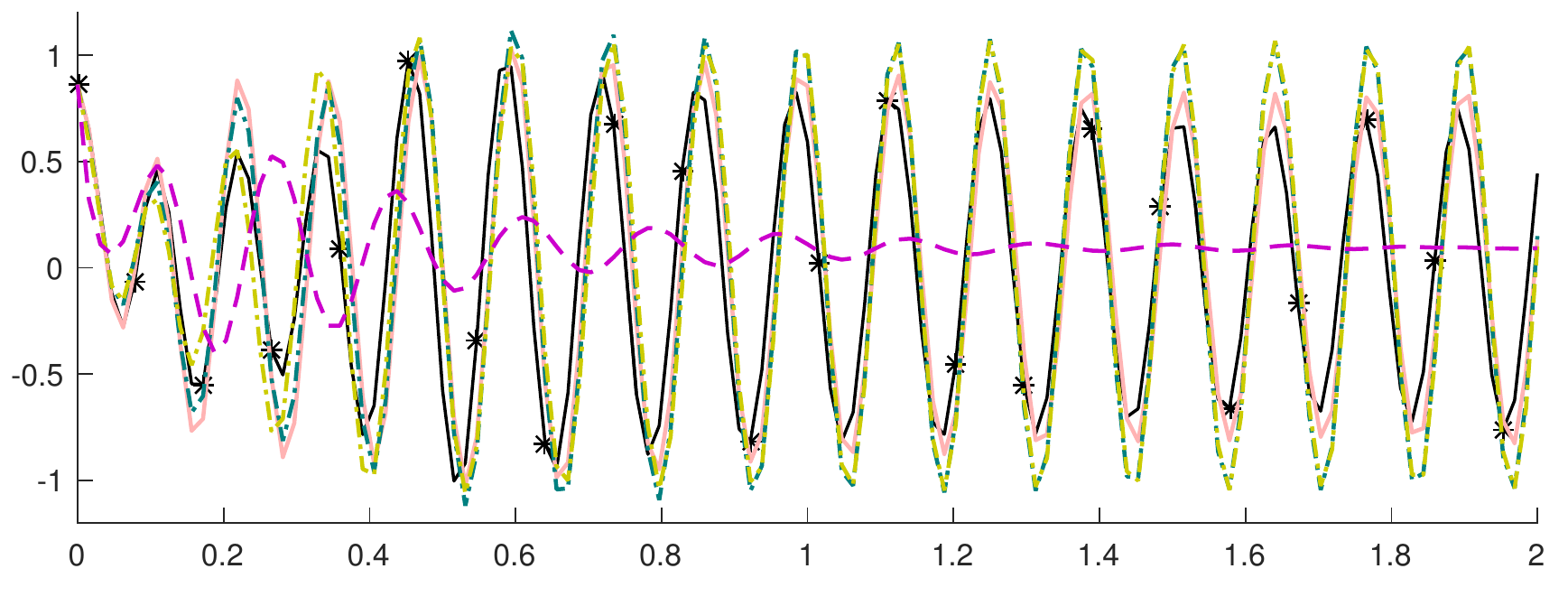}}
\\
\subfigure[$\mathbb{E}\,\left (\tr \left( \mathbf{L}_1 \boldsymbol{\rho}_t \right)\right )$ with $\Delta=  2^{-9} $]{
\includegraphics[height= 1in,width=3.17in]{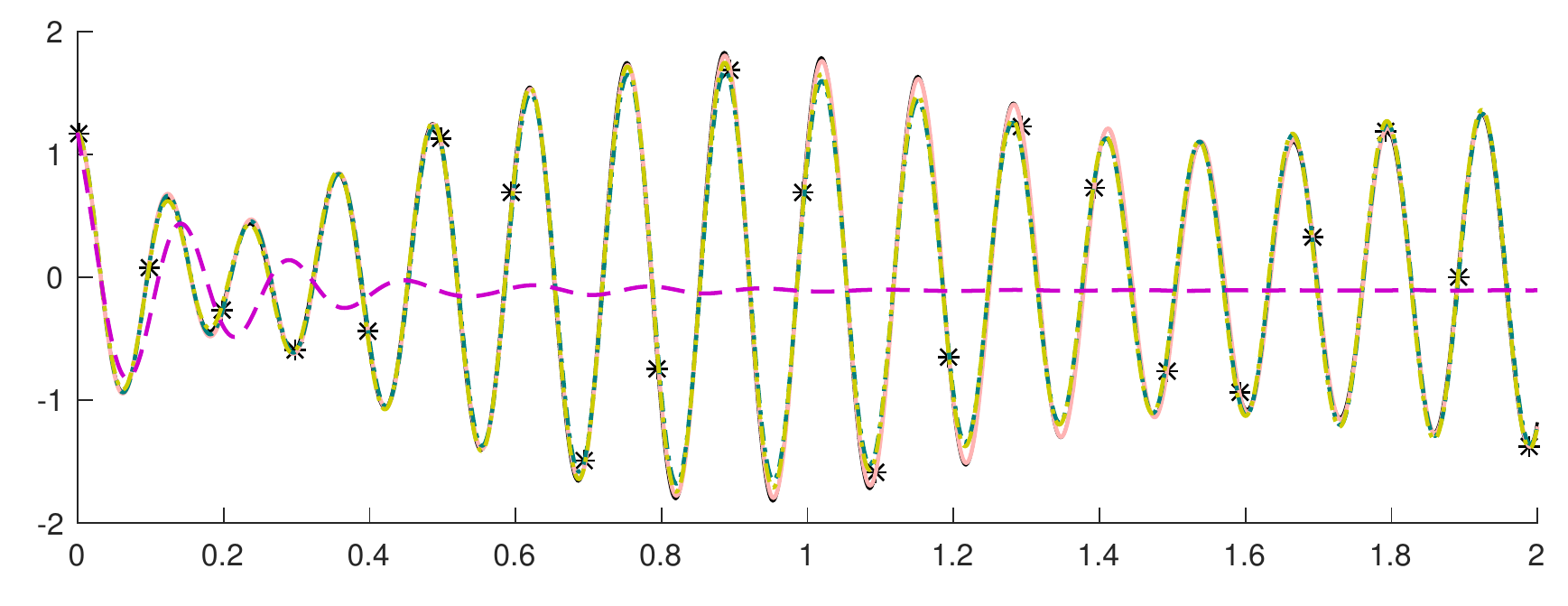}} 
\subfigure[$\mathbb{E}\,\left (\tr \left( \mathbf{R}_1 \boldsymbol{\rho}_t \right)\right )$ with $\Delta=  2^{-9} $]{
\includegraphics[height= 1in,width=3.17in]{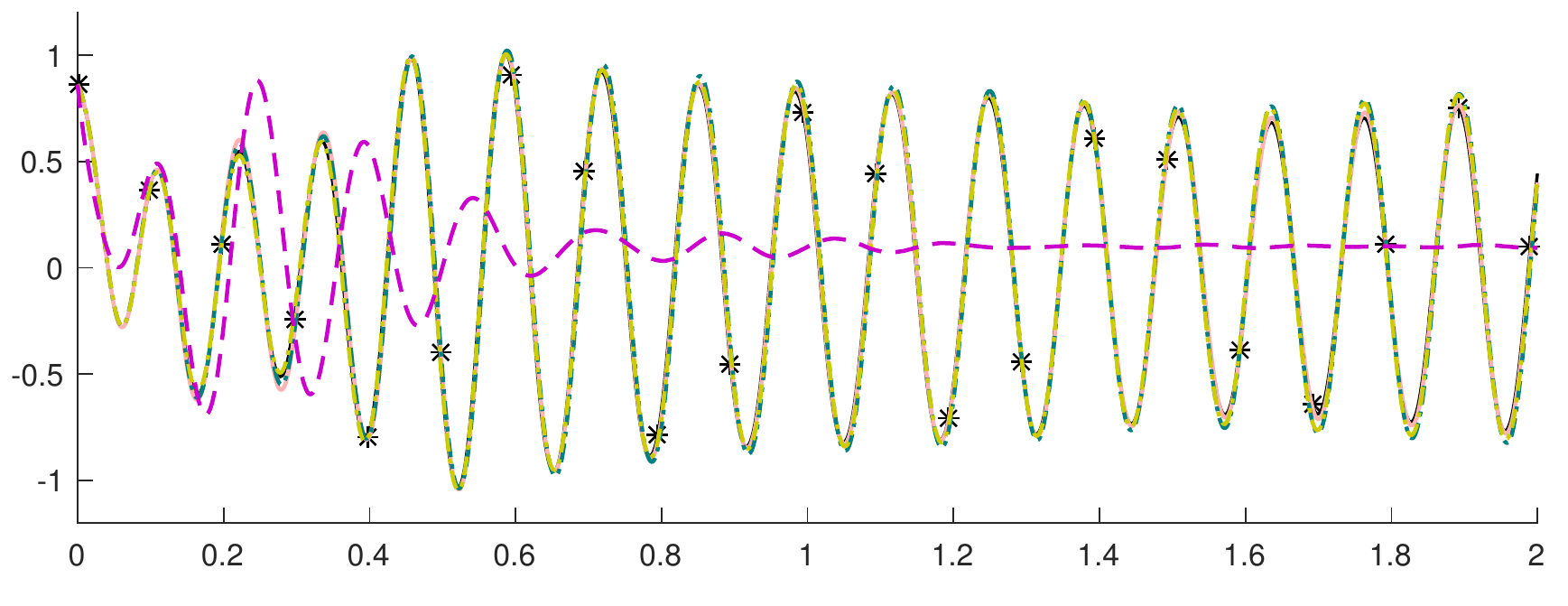}} \\
\subfigure[$\mathbb{E}\,\left (\tr \left( \mathbf{L}_1 \boldsymbol{\rho}_t \right)\right )$ with $\Delta=2^{-12}$]{
\includegraphics[height= 1in,width=3.17in]{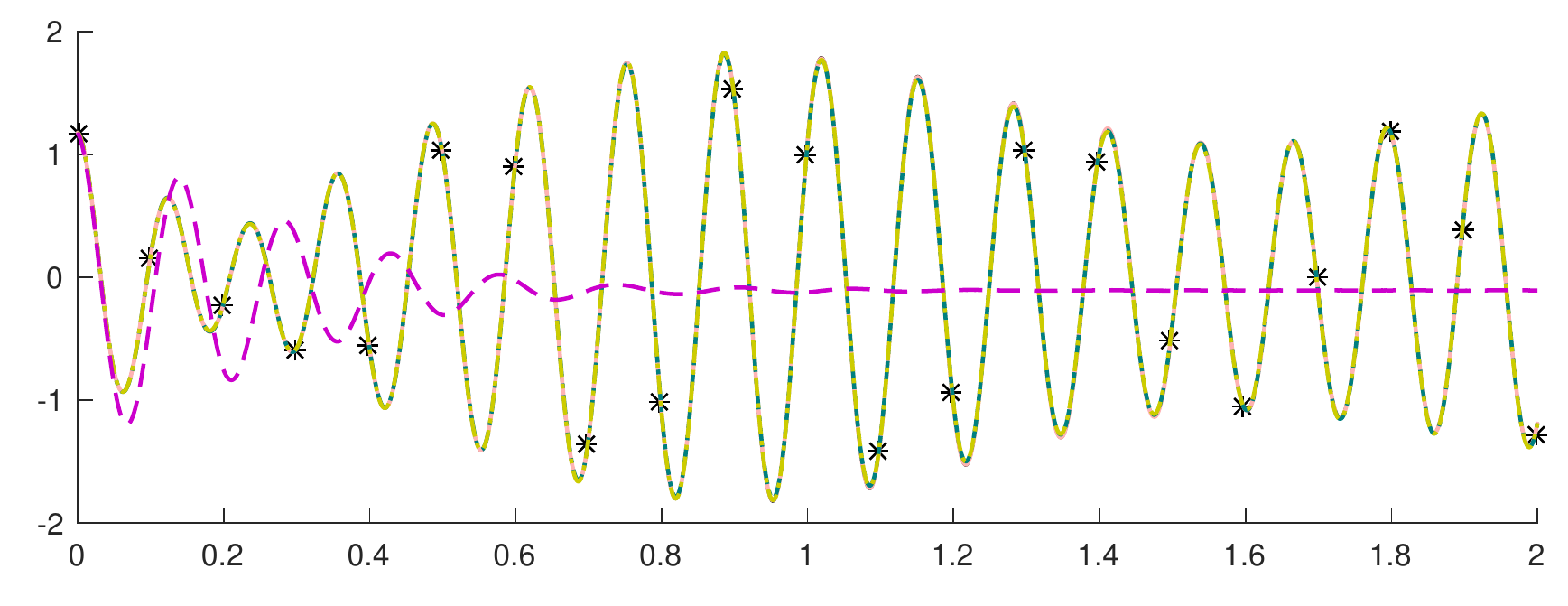}} 
\subfigure[$\mathbb{E}\,\left (\tr \left( \mathbf{R}_1 \boldsymbol{\rho}_t \right)\right )$ with $\Delta=2^{-12}$]{
\includegraphics[height= 1in,width=3.17in]{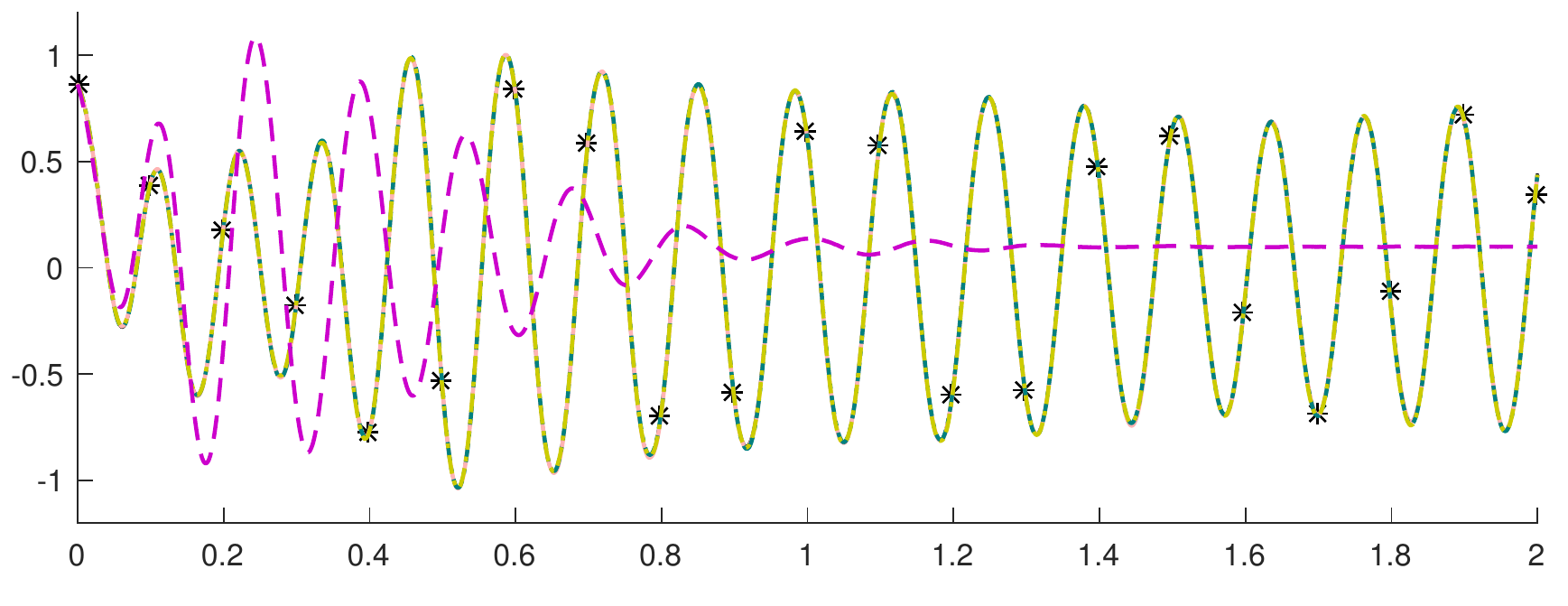}} \\
\caption{ 
Consider System \ref{Ex:DrivenSQMEapp} with $d=50$.
We present the values of 
$\mathbb{E}\,\left (\tr \left( \mathbf{L}_1 \boldsymbol{\rho}_t \right)\right )$ 
and $\mathbb{E}\,\left (\tr \left( \mathbf{R}_1 \boldsymbol{\rho}_t \right)\right )$
obtained from $20000$  samples of Schemes \ref{scheme:SME_Euler_Exponential}, \ref{scheme:SME_Exponential_Integral}, \ref{scheme:SME_ExponentialExponential_Integral_GD} and \ref{scheme:Euler_Implicit} with step-size $\Delta$.
The reference values are represented by a solid line with stars, 
 and the time scale is in nanoseconds.} 
\label{Fig:Ej2A2}
\end{figure}

\renewcommand{\arraystretch}{1.3}
\begin{table}[tb]
\begin{center}

\begin{tabular}{ccccccccccc}
\hline\hline
& $\Delta$ &$2^{- 6}$&$2^{- 7}$&$2^{- 8}$&$2^{- 9}$&$2^{-10}$&$2^{-11}$&$2^{-12}$&$2^{-13}$&$2^{-14}$ \\
\hline\hline
\multirow{4}*{\rotatebox[origin=c]{90}{$\epsilon_2\left (\Delta\right )$}} & Scheme \ref{scheme:SME_Euler_Exponential} 
         & $0.615$ & $0.315$ & $0.159$ & $0.083$ & $0.038$ & $0.029$ & $0.015$ & $0.012$ & $0.010$  \\
& Scheme \ref{scheme:SME_Exponential_Integral} 
         & $0.677$ & $0.532$ & $0.354$ & $0.200$ & $0.107$ & $0.052$ & $0.031$ & $0.016$ & $0.009$ \\
& Scheme \ref{scheme:SME_ExponentialExponential_Integral_GD} 
         & $0.638$ & $0.426$ & $0.281$ & $0.170$ & $0.093$ & $0.046$ & $0.035$ & $0.012$ & $0.007$ \\
& Scheme \ref{scheme:Euler_Implicit} 
         & $1.878$ & $1.929$ & $1.953$ & $1.948$ & $1.943$ & $1.934$ & $1.920$ & $1.905$ & $1.879$  \\
\hline
\multirow{4}*{\rotatebox[origin=c]{90}{$\epsilon_3\left (\Delta\right )$}} & Scheme \ref{scheme:SME_Euler_Exponential} 
         & $0.620$ & $0.345$ & $0.176$ & $0.088$ & $0.051$ & $0.024$ & $0.016$ & $0.012$ & $0.014$  \\
& Scheme \ref{scheme:SME_Exponential_Integral} 
         & $0.594$ & $0.366$ & $0.222$ & $0.118$ & $0.059$ & $0.032$ & $0.021$ & $0.017$ & $0.013$  \\
& Scheme \ref{scheme:SME_ExponentialExponential_Integral_GD} 
         & $0.664$ & $0.337$ & $0.162$ & $0.086$ & $0.045$ & $0.026$ & $0.019$ & $0.011$ & $0.009$  \\
& Scheme \ref{scheme:Euler_Implicit} 
         & $1.064$ & $1.134$ & $1.270$ & $1.388$ & $1.504$ & $1.596$ & $1.650$ & $1.696$ & $1.739$ \\
\hline
\multirow{4}*{\rotatebox[origin=c]{90}{$\tau\left (\Delta\right )$}}& Scheme \ref{scheme:SME_Euler_Exponential} 
         & $    1.25$ & $    2.45$ & $    4.71$ & $    9.09$ & $   18.08$ & $   34.07$ & $   67.17$ & $  134.44$ & $  262.67$ \\
& Scheme \ref{scheme:SME_Exponential_Integral} 
         & $    1.60$ & $    2.91$ & $    5.56$ & $   10.41$ & $   20.65$ & $   38.22$ & $   76.52$ & $  153.14$ & $  305.68$ \\
& Scheme \ref{scheme:SME_ExponentialExponential_Integral_GD} 
         & $    1.89$ & $    3.45$ & $    6.40$ & $   11.47$ & $   22.65$ & $   43.21$ & $   86.23$ & $  173.21$ & $  343.59$ \\
& Scheme \ref{scheme:Euler_Implicit} 
         & $    1.00$ & $    1.96$ & $    3.75$ & $    7.53$ & $   14.89$ & $   29.88$ & $   59.73$ & $  119.22$ & $  238.68$ \\
\hline\hline
\end{tabular}
\caption{Estimated values of $\epsilon_2 \left (\Delta \right )$ and $\epsilon_3 \left (\Delta \right )$ for Schemes \ref{scheme:SME_Euler_Exponential} -  \ref{scheme:Euler_Implicit},
obtained from $20000$ observations of System \ref{Ex:DrivenSQMEapp} with $d=50$. }
\label{Tab:Ej2OpA1}
\end{center}
\end{table}

\begin{figure}[tb]
\centering
\subfigure{
\includegraphics[height= 0.25in,width=3.17in]{LegendFigure2.pdf}}\\\subfigure[$\Delta=  2^{-6} $]{
\includegraphics[height= 1in,width=3.17in]{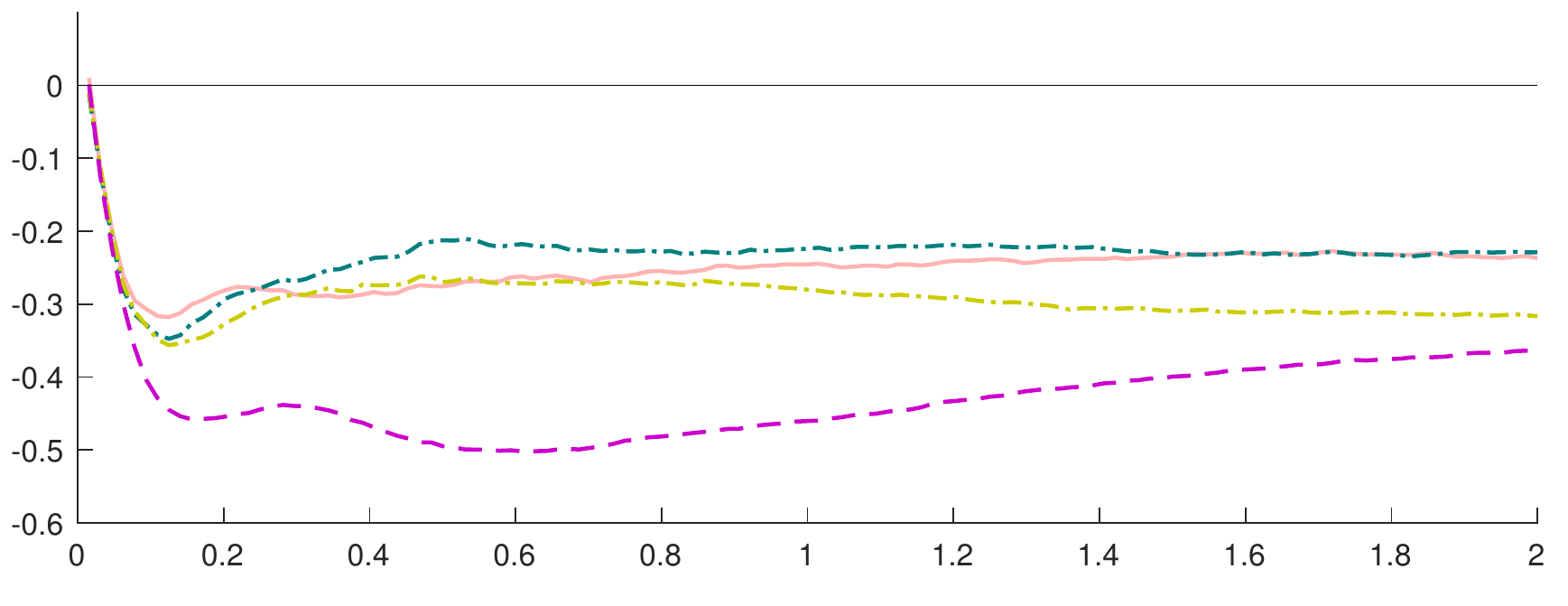}} 
\subfigure[$\Delta=  2^{-9} $]{
\includegraphics[height= 1in,width=3.17in]{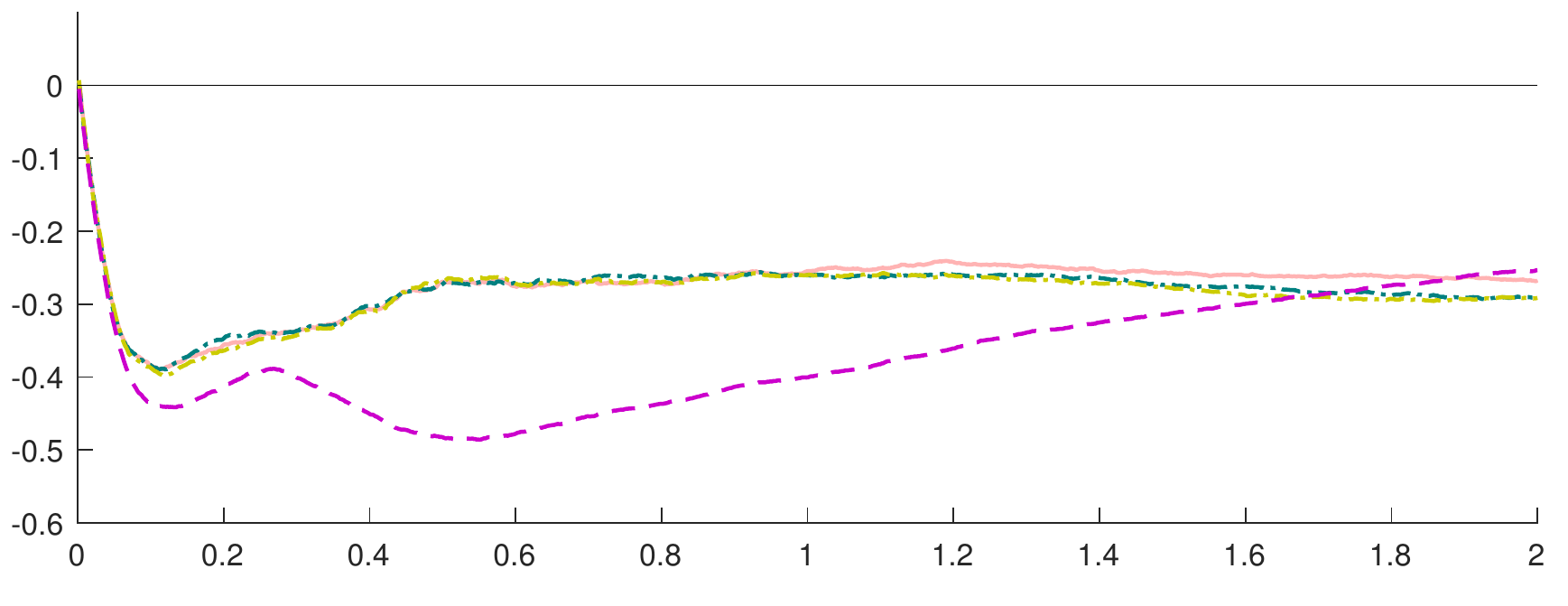}} \\
\subfigure[$\Delta=2^{-12}$]{
\includegraphics[height= 1in,width=3.17in]{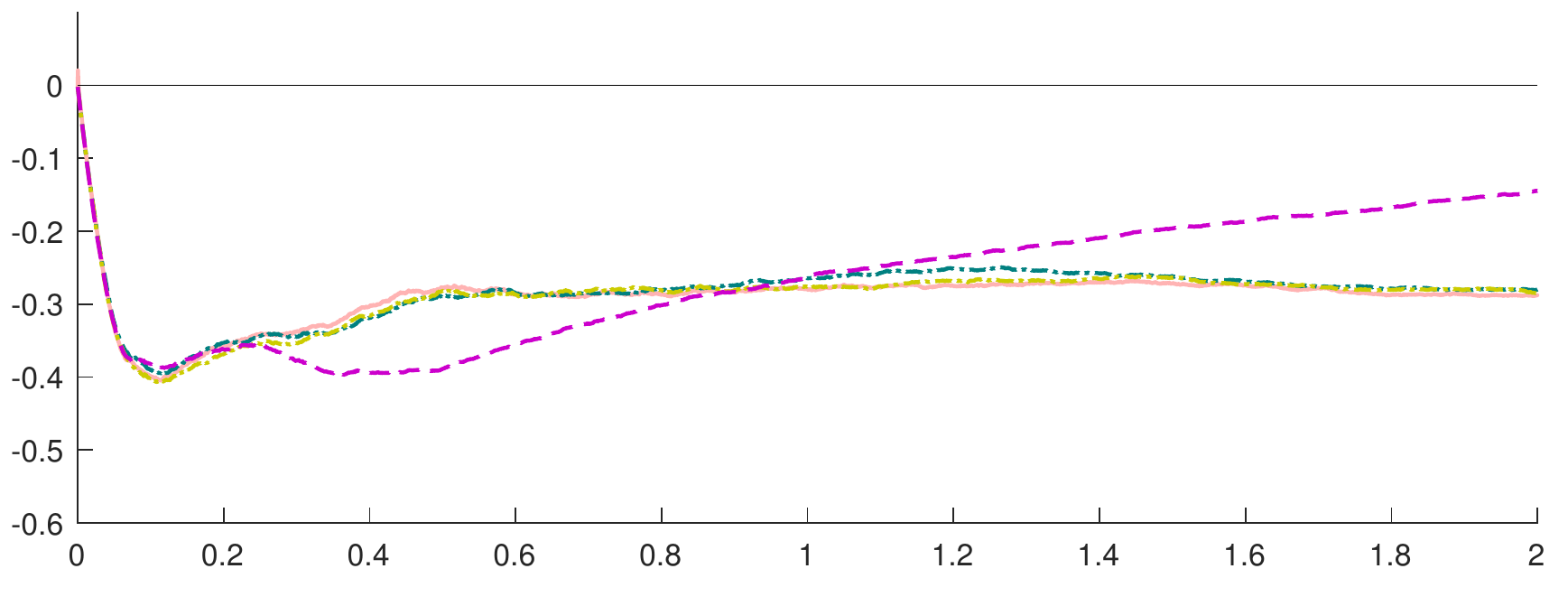}} 
\subfigure[$\Delta=2^{-15}$]{
\includegraphics[height= 1in,width=3.17in]{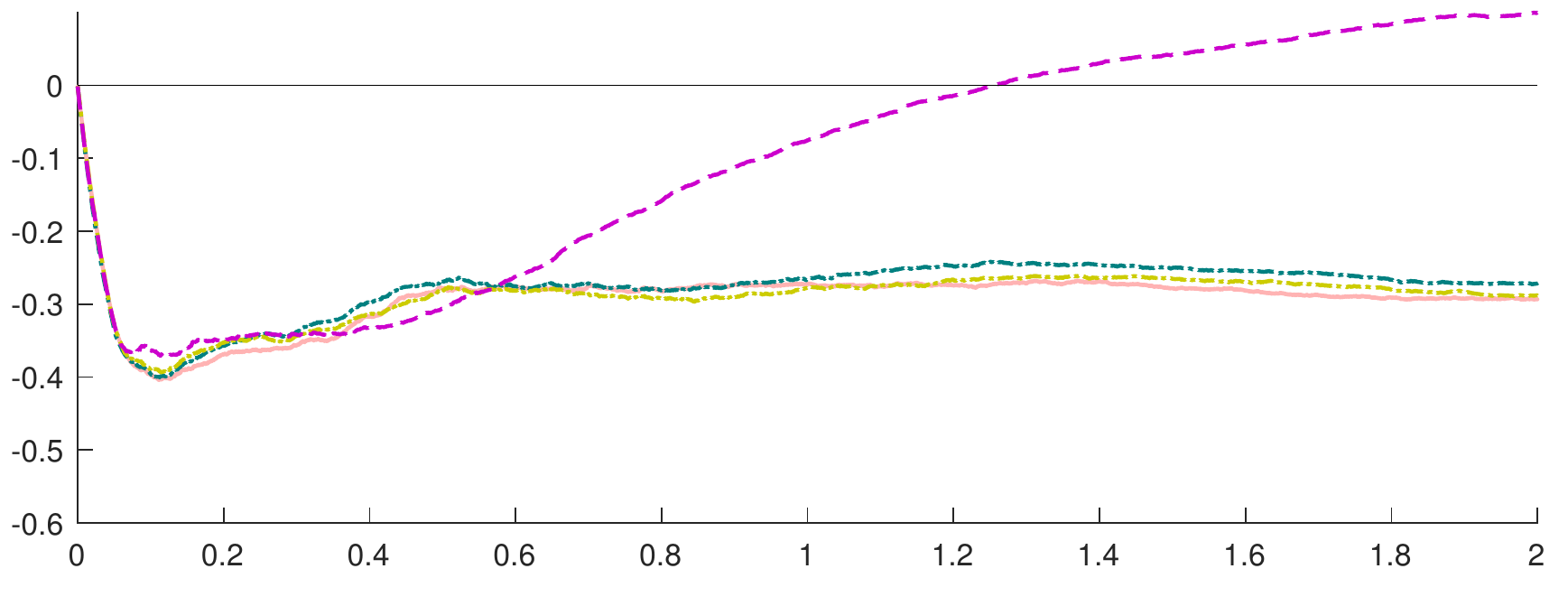}} \\
\caption{Computation of the Mandel-Q parameter \eqref{eq:MandelQ} corresponding to System \ref{Ex:DrivenSQMEapp} with  $d=50$ and $\omega_3 = 1.1\, \omega_2$.
We use $20000$  realizations of Schemes \ref{scheme:SME_Euler_Exponential}, \ref{scheme:SME_Exponential_Integral}, \ref{scheme:SME_ExponentialExponential_Integral_GD} and \ref{scheme:Euler_Implicit}.
}
\label{Fig:MandelQ}
\end{figure}

Using Remark \ref{rem:aproximacion} 
we approximate System \ref{Ex:DrivenSQME} by the following stochastic master equation.

\begin{system}
\label{Ex:DrivenSQMEapp}
Adopt the stochastic quantum master equation \eqref{eq:StochasticQME} with 
$\boldsymbol{\mathfrak{h}}  =  \ell^2_d \otimes \mathbb{C}^{2} $,
\begin{align*}
\mathbf{H}
& =
{\omega_1} \,\boldsymbol{\sigma}^z/2 + \omega_2 \, \mathbf{P}_d  \, \mathbf{a}^{\dag} \mathbf{a}  \, \mathbf{P}_d + g   \, \mathbf{P}_d \left( \mathbf{a}^{\dag} + \mathbf{a} \right ) \, \mathbf{P}_d  \, \boldsymbol{\sigma}^x
\\
& \quad +
\epsilon \mathbf{P}_d\left( \exp{ \left (- \mathrm{i} \left( \omega_3 t + \phi \right) \right )}  \mathbf{a}^{\dag}  
+ \exp{\left ( \mathrm{i} \left( \omega_3 t + \phi \right) \right )}  \mathbf{a} \right) \mathbf{P}_d  ,
\end{align*}
Furthermore, set $ \mathbf{L}_1  = \sqrt{\alpha_{1}}  \, \mathbf{P}_{d} \mathbf{a} \mathbf{P}_{d}$, 
$ \mathbf{L}_2 = \sqrt{\alpha_{2}} \, \mathbf{P}_d \mathbf{a}^{\dag} \mathbf{P}_d $, 
$ \mathbf{R}_1 = \sqrt{\beta_1  \gamma} \boldsymbol{\sigma}^-$,
$ \mathbf{R}_2 = \sqrt{\beta_1 \left( 1 -  \gamma \right)} \boldsymbol{\sigma}^-$,
$ \mathbf{R}_3 = \sqrt{\beta_2} \boldsymbol{\sigma}^- 
$, 
$ \mathbf{R}_4 = \sqrt{\beta_3} \boldsymbol{\sigma}^+ $ and
$ \mathbf{R}_5 = \sqrt{\beta_4} \boldsymbol{\sigma}^z 
$.
As in System \ref{Ex:DiffusiveSQMEapp},
$\ell^2_d$ denotes the linear span of $\mathbf{e}_0, \ldots, \mathbf{e}_d$ for any $d \in \mathbb{N}$,
and 
$\mathbf{P}_d$ is the orthogonal projection of  $\ell^2 \left(\mathbb{Z}_+ \right)$ onto $\ell^2_d$.
\end{system}

We simulate numerically System \ref{Ex:DrivenSQMEapp} with $d=50$,
which is a good approximation of System \ref{Ex:DrivenSQME},
by applying 
Schemes \ref{scheme:SME_Euler_Exponential}-\ref{scheme:Euler_Implicit}.
In order to test the accuracy of Schemes \ref{scheme:SME_Euler_Exponential}-\ref{scheme:Euler_Implicit},
we compute 
 $\mathbb{E} \,\tr \left (\mathbf{L}_1 \, \boldsymbol{\rho}_t\right )$ and $\mathbb{E} \,\tr \left ( \mathbf{R}_1 \, \boldsymbol{\rho}_t\right )$,
where $0\leq t \leq 2$.
The references values have been obtained  by solving  \eqref{eq:1.3} 
via the ode45 MATLAB program,
which is based on the Dormand-Prince method.
Figure \ref{Fig:Ej2A2} shows estimations of 
$\mathbb{E}\,\tr\left (\mathbf{L}_1\boldsymbol{\rho}_t\right )$ and $\mathbb{E}\,\tr\left (\mathbf{R}_1\boldsymbol{\rho}_t\right )$
obtained from sampling $2\cdot10^4$ times Schemes \ref{scheme:SME_Euler_Exponential}-\ref{scheme:Euler_Implicit}
with step sizes $\Delta$ equal to $2^{-6}$, $2^{-9}$ and $2^{-12}$. 
Table \ref{Tab:Ej2OpA1} provides the errors 
$$
\epsilon_2 \left( \Delta \right)
 =
\max \left\{ 
\left\vert \mathbb{E} \left( \tr \left( \mathbf{L}_1  \bar{\boldsymbol{\rho}}_{n}  \right) \right) - \mathbb{E} \left( \tr \left( \mathbf{L}_1  \boldsymbol{\rho}_{n \Delta}  \right)  \right) \right\vert
:
n = 0 ,  \ldots,  \lfloor 1 / \Delta \rfloor \right\} 
$$
and 
$$
\epsilon_3 \left( \Delta \right)
 =
\max \left\{ 
\left\vert \mathbb{E} \left( \tr \left( \mathbf{R}_1  \bar{\boldsymbol{\rho}}_{n}  \right) \right) - \mathbb{E} \left( \tr \left( \mathbf{R}_1  \boldsymbol{\rho}_{n \Delta}  \right)  \right) \right\vert
:
n = 0 ,  \ldots,  \lfloor 1 / \Delta \rfloor  \right\} ,
$$
as well as the relative mean CPU time $\tau \left (\Delta\right )$ used for processing 100 realizations of Schemes \ref{scheme:SME_Euler_Exponential}-\ref{scheme:Euler_Implicit} with step size $\Delta$, which was calculated as in Table \ref{Tab:Error}.

Figure \ref{Fig:Ej2A2} and Table \ref{Tab:Ej2OpA1}
illustrate the good accuracy of 
Schemes \ref{scheme:SME_Euler_Exponential}, \ref{scheme:SME_Exponential_Integral} and \ref{scheme:SME_ExponentialExponential_Integral_GD}.
In contrast to Scheme \ref{scheme:Euler_Implicit},
Schemes \ref{scheme:SME_Euler_Exponential}-\ref{scheme:SME_ExponentialExponential_Integral_GD}
reproduce very well the oscillatory behavior 
as well as the amplitud modulation of the interference between 
the natural frequency of the small system and the driving coherent field.

Now, we compute the Mandel $Q$-parameter:
\begin{equation}
\label{eq:MandelQ}
Q_1 \left( t \right) 
=
\frac{ \mathbb{E} \left( \left( N^1_t \right)^2 \right)  - \left( \mathbb{E} N^1_t \right)^2  }
{\mathbb{E} \left(  N^1_t \right)}
- 1 ,
\end{equation}
which characterizes the departure of the number of photocounts from Poisson statistics
(see, e.g., \cite{BarchielliGregoratti2012,Loudon2000,MandelWolf1995}).
Figure \ref{Fig:MandelQ} displays  estimations of $Q_1 \left( t \right)$
obtained from Schemes \ref{scheme:SME_Euler_Exponential}-\ref{scheme:Euler_Implicit}
with step sizes 
$\Delta = 2^{-6}, 2^{-9}, 2^{-12}, 2^{-15}$.
Schemes  \ref{scheme:SME_Euler_Exponential}-\ref{scheme:SME_ExponentialExponential_Integral_GD} 
consistently show that $Q_1 \left( t \right)$ takes negative values stabilizing around $-0.28$.
These negative values indicate sub-Poissonian statistics that is an indicator of non-clasical effects,
in a good agreement with the Physics of  System \ref{Ex:DrivenSQME}.
On the other hand,
Scheme \ref{scheme:Euler_Implicit} 
change its behavior from sub-Poissonian statistics to Poissonian statistics as the step size $\Delta$ decrease.

Finally,   
we study the behavior of System \ref{Ex:DrivenSQME} 
for different frequencies $\omega_3$ of the monochromatic laser light,
while keeping the other parameters unchanged.
To this end,
we simulate numerically System \ref{Ex:DrivenSQMEapp} with $d=100$ 
by using Scheme \ref{scheme:SME_Euler_Exponential}.
Figure \ref{Fig:MandelQfrecs} presents the Mandel $Q$-parameter $Q_1 \left( t \right)$ 
when 
$\omega_3  = 0$, $0.4\, \omega_2$, $0.8\, \omega_2$, $\omega_2$, $1.2\, \omega_2$, $1.6\, \omega_2$, $2\, \omega_2$;
we recall that the time $t$ is given in  in nanoseconds.
If $\omega_3$ is near the resonance frequency $\omega_2$,
then Figure \ref{Fig:MandelQfrecs} shows that $Q_1 \left( t \right)$ takes negative values.
In the other case, 
Schemes  \ref{scheme:SME_Euler_Exponential}
predicts that $Q_1 \left( t \right)$ varies from negative to positive values as the system evolves in time.

\begin{figure}[tb]
\centering
\includegraphics[height= 0.3in,width=3.17in]{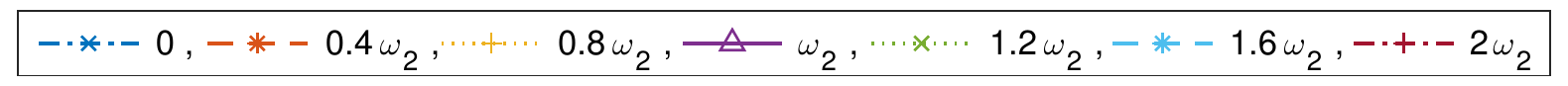}
\\
\includegraphics[height= 2.0in,width=5.5in]{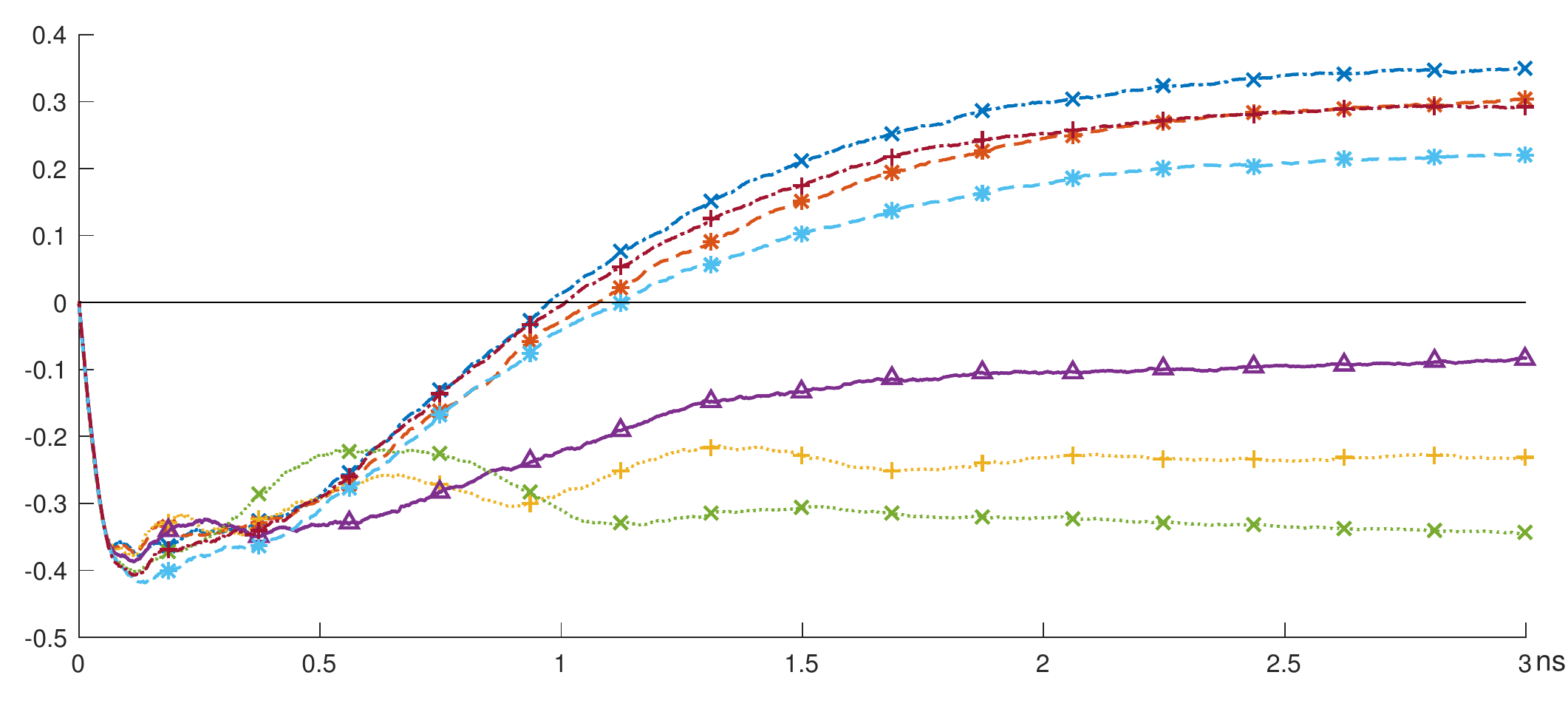}
\caption{Consider System \ref{Ex:DrivenSQMEapp} with  $d=100$.
We show the Mandel-Q parameter \eqref{eq:MandelQ} for different values of $\omega_3$
obtained from  $20000$ observations of Scheme \ref{scheme:SME_Euler_Exponential}.
We take $\omega_3$ equal to $0$, $0.4\, \omega_2$, $0.8\, \omega_2$, $\omega_2$, $1.2\, \omega_2$, $1.6\, \omega_2$ and $2\, \omega_2$.
}
\label{Fig:MandelQfrecs}
\end{figure}


\section{Proofs}
\label{sec:Proofs}


\subsection{Proof of Theorem \ref{th:Representation}}
\label{sec:Proof:Representation}

\begin{proof}

As
$ \Delta \left[ N^m, N^{\bar{m}} \right] = \left( \Delta N^m \right) \left( \Delta N^{\bar{m}} \right)$ (see, e.g., \cite{Protter2005})
we have 
$$
\left[ N^m, N^{\bar{m}} \right] 
=
\begin{cases}
 N^m
 &
\text{if } m = \bar{m}
\\
0
&
\text{otherwise} ,
\end{cases}
$$
because 
$\Delta N^m_t  \in \left\{ 0, 1 \right\}$
and
$N^1,\ldots, N^M$ are counting process having no common jumps.
Since 
$$
\ketbra{\mathbf{X}^k_{s}}{\mathbf{X}^k_{s}} - \ketbra{\mathbf{X}^k_{s-}}{\mathbf{X}^k_{s-}} -  \ketbra{\mathbf{X}^k_{s-}}{\Delta \mathbf{X}^k_{s}}
-  \ketbra{\Delta \mathbf{X}^k_{s} }{\mathbf{X}^k_{s-}} -  \ketbra{\Delta \mathbf{X}^k_{s} }{ \Delta \mathbf{X}^k_{s} }
=
0 ,
$$
applying It\^o's formula for complex-valued semimartingales (see, e.g., \cite{Protter2005}) yields  
\begin{align*}
& \ketbra{\mathbf{X}^k _t}{\mathbf{X}^k _t}
 =
\ketbra{\mathbf{X}^k _0}{\mathbf{X}^k _0} 
+ 
\int_{0}^t \left( 
\ketbra{\mathbf{G} \left ( s - , \mathbf{X}^k_{s-}  \right)}{\mathbf{X}^k_{s-}} + \ketbra{\mathbf{X}^k_{s-}}{\mathbf{G} \left ( s - , \mathbf{X}^k_{s-}  \right)} 
\right) ds 
\\ & \quad 
+
\sum_{j=1}^J \int_{0}^t 
\left(
\ketbra{\mathbf{L}_{j}  \left( s \right) \mathbf{X}^k_{s-}  - \ell_j \left( s - \right) \mathbf{X}^k_{s-} }{\mathbf{X}^k_{s-}} 
+
\ketbra{\mathbf{X}^k_{s-}}{\mathbf{L}_{j}  \left( s \right) \mathbf{X}^k_{s-}  - \ell_j \left( s - \right) \mathbf{X}^k_{s-} } 
\right)
dW^j_s
\\ & \quad 
+
\sum_{j=1}^J \int_{0}^t 
\left(
\ketbra{\mathbf{L}_{j}  \left( s \right) \mathbf{X}^k_{s-}  - \ell_j \left( s - \right) \mathbf{X}^k_{s-} }
{\mathbf{L}_{j}  \left( s \right) \mathbf{X}^k_{s-}  - \ell_j \left( s -\right) \mathbf{X}^k_{s-}}
\right)
ds  
\\ & \quad 
+
 \sum_{m=1}^M \int_{0+}^t   \left(  
 \ketbra{ \dfrac{ \mathbf{R}_{m}  \left( s \right) \mathbf{X}^k_{s-}}{ \sqrt{r_{m} \left( s - \right)} } - \mathbf{X}^k_{s-} }{\mathbf{X}^k_{s-}} 
 +
 \ketbra{\mathbf{X}^k_{s-}}{ \dfrac{ \mathbf{R}_{m}  \left( s \right) \mathbf{X}^k_{s-}}{ \sqrt{r_{m} \left( s - \right)}} - \mathbf{X}^k_{s-} } 
 \right) dN^m_s 
 \\ & \quad 
+
 \sum_{m=1}^M \int_{0+}^t   \left(  
 \ketbra{ \dfrac{ \mathbf{R}_{m}  \left( s \right) \mathbf{X}^k_{s-}}{ \sqrt{r_{m} \left( s - \right)} } - \mathbf{X}^k_{s-} }
 {\dfrac{ \mathbf{R}_{m}  \left( s \right) \mathbf{X}^k_{s-}}{ \sqrt{r_{m} \left( s - \right)}} - \mathbf{X}^k_{s-}} 
 \right) dN^m_s .
\end{align*} 
Using simple algebraic manipulations we get
\begin{align}
\label{eq:3.5}
& \ketbra{\mathbf{X}^k _t}{\mathbf{X}^k _t}
 =
\ketbra{\mathbf{X}^k _0}{\mathbf{X}^k _0} 
+
 \sum_{m=1}^M  \int_{0}^t   \mathbf{R}_{m} \left( s - \right) \ketbra{\mathbf{X}^k_{s-} }{ \mathbf{X}^k_{s-}} \, ds
\\ & \quad 
\nonumber
+ 
\int_{0}^t \left( 
\ketbra{\mathbf{G} \left ( s \right) \mathbf{X}^k_{s-} }{\mathbf{X}^k_{s-}} + \ketbra{\mathbf{X}_{s-} }{\mathbf{G} \left ( s \right) \mathbf{X}^k_{s-}}
+
\sum_{j=1}^J 
\ketbra{ \mathbf{L}_{j}  \left( s \right) \mathbf{X}^k_{s-} } {\mathbf{L}_{j}  \left( s \right) \mathbf{X}^k_{s-} } 
\right) ds 
\\ & \quad 
\nonumber
+
\sum_{j=1}^J \int_{0}^t 
\left(
\ketbra{\mathbf{L}_{j}  \left( s \right) \mathbf{X}^k_{s-}}{\mathbf{X}^k_{s-}}  + \ketbra{ \mathbf{X}^k_{s-}}{\mathbf{L}_{j}  \left( s \right) \mathbf{X}^k_{s-}} 
- 2  \ell_j \left( s - \right) \ketbra{ \mathbf{X}^k_{s-}}{\mathbf{X}^k_{s-}}
\right)
dW^j_s
\\ & \quad 
\nonumber
+
 \sum_{m=1}^M \int_{0+}^t   \left(  
 \dfrac{  \ketbra{  \mathbf{R}_{m}  \left( s  \right) \mathbf{X}^k_{s-}}{\mathbf{R}_{m}  \left( s \right) \mathbf{X}^k_{s-}} }
 {r_{m} \left( s - \right)}
 -
 \ketbra{\mathbf{X}^k_{s-} }{ \mathbf{X}^k_{s-}} 
 \right) dN^m_s .
\end{align}

Define 
$
\boldsymbol{\rho}^k_t = \ketbra{\mathbf{X}^k _t}{\mathbf{X}^k _t}
$
and 
$
\boldsymbol{\varrho}_t =  \sum_{k=1}^{\mu}  \boldsymbol{\rho}^k_t
$.
Then
\begin{equation}
\label{eq:3.6}
\ell_j \left( s \right) 
= \sum_{k=1}^{\mu}  \Re \left( \tr \left( \mathbf{L}_{j}  \left( s \right) \boldsymbol{\rho}^k_{s}  \right)  \right)
= \Re \left( \tr \left( \mathbf{L}_{j}  \left( s \right)  \boldsymbol{\varrho}_{s}  \right) \right)
\end{equation}
and
\begin{equation}
\label{eq:3.7}
r_{m} \left( s \right) 
= \sum_{k=1}^{\mu}   \tr \left( \mathbf{R}_{m} \left( s \right)^*  \mathbf{R}_{m} \left( s \right)  \boldsymbol{\rho}^k_{s}  \right)
=  \tr \left( \mathbf{R}_{m} \left( s \right)^*  \mathbf{R}_{m} \left( s \right)  \boldsymbol{\varrho}_{s}  \right) .
\end{equation}
Thus,
the stochastic intensity of $N^m$ is 
$\tr \left( \mathbf{R}_{m} \left( s \right) ^*\mathbf{R}_{m} \left( s \right) \boldsymbol{\rho}_{s-} \right) $
for any $m=1,\ldots,M$.
Combining \eqref{eq:3.5}, \eqref{eq:3.6} and \eqref{eq:3.7} gives 
\begin{align*}
\boldsymbol{\rho}^k_t 
&  =
\ketbra{\mathbf{X}^k _0}{\mathbf{X}^k _0} 
+
\sum_{m=1}^M  \int_{0}^t   \tr \left( \mathbf{R}_{m} \left( s \right)^*  \mathbf{R}_{m} \left( s \right)  \boldsymbol{\varrho}_{s-}  \right) \boldsymbol{\rho}^k_{s-} \, ds
\\ 
& \quad
+ \int_{0}^t \left( 
\mathbf{G} \left ( s \right) \boldsymbol{\rho}^k_{s-} + \boldsymbol{\rho}^k_{s-}  \mathbf{G} \left ( s \right)^*
+
\sum_{j=1}^J  \mathbf{L}_{j}  \left( s \right) \boldsymbol{\rho}^k_{s-}  \mathbf{L}_{j}  \left( s \right)^*
\right) ds 
\\ & \quad 
+
\sum_{j=1}^J \int_{0}^t 
\left(
\mathbf{L}_{j}  \left( s \right) \boldsymbol{\rho}^k_{s-}  + \boldsymbol{\rho}^k_{s-}  \mathbf{L}_{j}  \left( s \right)^* 
- 2  \Re \left( \tr \left( \mathbf{L}_{j}  \left( s \right)  \boldsymbol{\varrho}_{s-}  \right) \right) \boldsymbol{\rho}^k_{s-} 
\right)
dW^j_s 
\\ & \quad
+
 \sum_{m=1}^M \int_{0+}^t   \left(  
 \dfrac{   \mathbf{R}_{m}  \left( s \right) \boldsymbol{\rho}^k_{s-}  \mathbf{R}_{m}  \left( s \right)^* }
 { \tr \left( \mathbf{R}_{m} \left( s \right)^*  \mathbf{R}_{m} \left( s \right)  \boldsymbol{\varrho}_{s-}  \right)}
 -
\boldsymbol{\rho}^k_{s-}
\right) dN^m_s .
\end{align*} 
Therefore, $\boldsymbol{\rho}_t $ satisfies \eqref{eq:StochasticQME}.
\end{proof}

\subsection{Proof of Theorem \ref{th:Pade1}}
\label{sec:Proof:Pade1}

\begin{proof}
Since 
$$
\begin{pmatrix}
  \mathbf{G}  & \mathbf{g}  \\
  \mathbf{0}_{1 \times d}  & 0
\end{pmatrix}^j
=
\begin{pmatrix}
  \mathbf{G}^j  &  \mathbf{G}^{j-1} \mathbf{g}  \\
  \mathbf{0}_{1 \times d}  & 0
\end{pmatrix}
\hspace{2cm}
\forall j \in \mathbb{N} ,
$$
simple algebraic manipulations yield
$
\mathbf{N}_{p q} \left( \frac{\mathbf{A}}{2^m} \right)
=
\begin{pmatrix}
  \mathbf{N}_{p q} \left( \frac{\mathbf{G}}{2^m} \right)  & \widetilde{\mathbf{N}}_{ p q} \left( \frac{\mathbf{G}}{2^m} \right) \frac{\mathbf{g}}{2^m}  \\
  \mathbf{0}_{1 \times d}  & 1
\end{pmatrix} 
$
and
$$
\mathbf{D}_{p q} \left( \frac{\mathbf{A}}{2^m} \right)
=
\begin{pmatrix}
  \mathbf{D}_{p q} \left( \frac{\mathbf{G}}{2^m} \right)  & \widetilde{\mathbf{D}}_{ p q} \left( \frac{\mathbf{G}}{2^m} \right) \frac{\mathbf{g}}{2^m}  \\
  \mathbf{0}_{1 \times d}  & 1
\end{pmatrix} .
$$
Hence
$
\mathbf{D}_{p q} \left( \frac{\mathbf{A}}{2^m} \right)^{-1}
=
\begin{pmatrix}
  \mathbf{D}_{p q} \left( \frac{\mathbf{G}}{2^m} \right)^{-1} & - \mathbf{D}_{p q} \left( \frac{\mathbf{G}}{2^m} \right)^{-1} \widetilde{\mathbf{D}}_{ p q} \left( \frac{\mathbf{G}}{2^m} \right) \frac{\mathbf{g}}{2^m}  \\
  \mathbf{0}_{1 \times d}  & 1
\end{pmatrix} 
$,
and so
$$
\mathbf{D}_{p q} \left( \mathbf{A}/{2^m} \right)^{-1} \mathbf{N}_{p q} \left( \mathbf{A}/{2^m} \right)
=
\begin{pmatrix}
 \mathbf{P}_{p q} \left( \frac{\mathbf{G}}{2^m} \right) 
 & 
 \mathbf{D}_{p q} \left( \frac{\mathbf{G}}{2^m} \right)^{-1} 
\left(
 \widetilde{\mathbf{N}}_{ p q} \left( \frac{\mathbf{G}}{2^m} \right) -  \widetilde{\mathbf{D}}_{ p q} \left( \frac{\mathbf{G}}{2^m} \right) 
\right) \frac{\mathbf{g}}{2^m}  \\
  \mathbf{0}_{1 \times d}  & 1
\end{pmatrix} .
$$
Hence
\begin{align*}
& \left( \mathbf{D}_{p q} \left( \frac{\mathbf{A}}{2^m} \right)^{-1} \mathbf{N}_{p q} \left( \frac{\mathbf{A}}{2^m} \right) \right)^{2^m}
\\
& \quad =
\begin{pmatrix}
 \mathbf{P}_{p q} \left( \frac{\mathbf{G}}{2^m} \right)^{2^m}
 & 
\left(  \sum_{j=0}^{2^m - 1} \mathbf{P}_{p q} \left( \frac{\mathbf{G}}{2^m} \right)^{j} \right)
\mathbf{D}_{p q} \left( \frac{\mathbf{G}}{2^m} \right)^{-1} 
\left(
 \widetilde{\mathbf{N}}_{ p q} \left( \frac{\mathbf{G}}{2^m} \right) -  \widetilde{\mathbf{D}}_{ p q} \left( \frac{\mathbf{G}}{2^m} \right) 
\right) \frac{\mathbf{g}}{2^m}  \\
  \mathbf{0}_{1 \times d}  & 1
\end{pmatrix} .
\end{align*}
Using that
$
\sum_{j=0}^{2^m-1} \mathbf{X}^j =  \prod^{m-1}_{j=0} \left( \mathbf{I} + \mathbf{X}^{2^j} \right)
$
for any matrix $\mathbf{X}$,
we obtain the assertion of the theorem.
\end{proof}

\subsection{Proof of Theorem \ref{th:Pade2}}
\label{sec:Proof:Pade2}

\begin{proof} Fix $\mu \in \mathbb{N}$.
Due to 
$$
\mathbf{A}^j =
\begin{pmatrix}
  \mathbf{G}^j  & \mathbf{G}^{j-1} \mathbf{a} & \mathbf{G}^{j-2} \left( \mathbf{G}\mathbf{g} + \mathbf{a}\,b  \right) \\
  \mathbf{0}_{1 \times d}  & 0 & 0\\
  \mathbf{0}_{1 \times d}  & 0 & 0 
\end{pmatrix} 
\hspace{2cm}
\forall j \geq 2 ,
$$
algebraic manipulations lead to 
$$
\mathbf{N}_{p q} \left( \frac{\mathbf{A}}{\mu} \right)
=
\begin{pmatrix}
 \mathbf{N}_{p q} \left( \frac{\mathbf{G}}{ \mu } \right) 
 &  
 \widetilde{\mathbf{N}}_{p q} \left( \frac{\mathbf{G}}{ \mu } \right) \frac{\mathbf{a}}{ \mu }
 &
  \widetilde{\mathbf{N}}_{p q} \left( \frac{\mathbf{G}}{ \mu } \right) \frac{\mathbf{g}}{ \mu }
  +  \widehat{\mathbf{N}}_{p q} \left( \frac{\mathbf{G}}{ \mu } \right) \frac{ \mathbf{a} \, b }{ \mu^2 }
 \\
  \mathbf{0}_{1 \times d}  & 1 & \frac{ b \, p  }{ \mu \, \left( p + q \right) }
   \\
  \mathbf{0}_{1 \times d}  & 0 & 1
\end{pmatrix} 
$$
and
$$
\mathbf{D}_{p q} \left( \frac{\mathbf{A}}{\mu} \right)
=
\begin{pmatrix}
 \mathbf{D}_{p q} \left( \frac{\mathbf{G}}{ \mu } \right) 
 &  
 \widetilde{\mathbf{D}}_{p q} \left( \frac{\mathbf{G}}{ \mu } \right) \frac{\mathbf{a}}{ \mu }
 &
  \widetilde{\mathbf{D}}_{p q} \left( \frac{\mathbf{G}}{ \mu } \right) \frac{\mathbf{g}}{ \mu }
  +  \widehat{\mathbf{D}}_{p q} \left( \frac{\mathbf{G}}{ \mu } \right) \frac{ \mathbf{a} \, b }{ \mu^2 }
\\
   \mathbf{0}_{1 \times d}  & 1 & - \frac{ b \, q  }{ \mu \, \left( p + q \right) }
\\
  \mathbf{0}_{1 \times d}  & 0 & 1
\end{pmatrix} .
$$
Therefore, 
$$
\mathbf{D}_{p q} \left(  \frac{\mathbf{A}}{ \mu } \right)^{-1}
=
\begin{pmatrix}
 \mathbf{D}_{p q} \left( \frac{\mathbf{G}}{ \mu } \right)^{-1}
 &  
  - \mathbf{D}_{p q} \left( \frac{\mathbf{G}}{ \mu } \right)^{-1} \widetilde{\mathbf{D}}_{p q} \left( \frac{\mathbf{G}}{ \mu } \right) \frac{\mathbf{a}}{ \mu }
 &
 \ast
 \\
   \mathbf{0}_{1 \times d}  & 1 &  \frac{ b \, q  }{ \mu \, \left( p + q \right) }
\\
  \mathbf{0}_{1 \times d}  & 0 & 1
\end{pmatrix} 
$$
with 
$\ast = - \mathbf{D}_{p q} \left( \frac{\mathbf{G}}{ \mu } \right)^{-1} \left( 
 \widetilde{\mathbf{D}}_{p q} \left( \frac{\mathbf{G}}{ \mu } \right)  \left(  \frac{\mathbf{g}}{ \mu } + \frac{ \mathbf{a} \, b \, q }{ \mu^2 \, \left( p + q \right) }  \right)
  +  \widehat{\mathbf{D}}_{p q} \left( \frac{\mathbf{G}}{ \mu } \right) \frac{ \mathbf{a} \, b }{ \mu^2 }
 \right)
$.
This gives
\begin{equation}
\label{eq:6.1}
 \mathbf{P}_{p q} \left(  \frac{\mathbf{A}}{ \mu } \right)
=
\begin{pmatrix}
 \mathbf{P}_{p q} \left( \frac{\mathbf{G}}{ \mu } \right) 
 &  
  \widetilde{\mathbf{r}} \left( \mu \right) \frac{\mathbf{a}}{ \mu }
 &
  \widetilde{\mathbf{r}} \left( \mu \right) \frac{\mathbf{g}}{ \mu }
  +  \widehat{\mathbf{r}} \left( \mu \right)  \frac{ \mathbf{a} \, b }{ \mu^2 }
 \\
  \mathbf{0}_{1 \times d}  & 1 & \frac{ b }{ \mu }
   \\
  \mathbf{0}_{1 \times d}  & 0 & 1
\end{pmatrix} ,
\end{equation}
where
$
\widetilde{\mathbf{r}} \left( \mu \right)
=
\mathbf{D}_{p q} \left( \frac{\mathbf{G}}{ \mu } \right)^{-1}
 \left(\widetilde{\mathbf{N}}_{p q}\left( \frac{\mathbf{G}}{ \mu } \right) - \widetilde{\mathbf{D}}_{p q} \left( \frac{\mathbf{G}}{ \mu } \right)\right) 
$
and 
$$
\widehat{\mathbf{r}} \left( \mu \right)
=
\mathbf{D}_{p q} \left( \frac{\mathbf{G}}{ \mu } \right)^{-1}
 \left(\widehat{\mathbf{N}}_{p q}\left( \frac{\mathbf{G}}{ \mu } \right) 
 - \widehat{\mathbf{D}}_{p q} \left( \frac{\mathbf{G}}{ \mu } \right)
 - \widetilde{\mathbf{D}}_{p q} \left( \frac{\mathbf{G}}{ \mu } \right)\right) .
$$

From \eqref{eq:6.1} we obtain
$$
\mathbf{P}_{p q} \left(  \frac{\mathbf{A}}{ \mu } \right)^{\mu}
=
\begin{pmatrix}
 \mathbf{P}_{p q} \left( \frac{\mathbf{G}}{ \mu } \right) ^{\mu}
 &  
\mathbf{r} \left( \mu \right)  \widetilde{\mathbf{r}} \left( \mu \right) \frac{\mathbf{a}}{ \mu }
 &
\mathbf{r} \left( \mu \right) 
\left(
 \widetilde{\mathbf{r}} \left( \mu \right) \frac{\mathbf{g}}{ \mu }
  +  \widehat{\mathbf{r}} \left( \mu \right)  \frac{ \mathbf{a} \, b }{ \mu^2 }
 \right)
+ 
\widehat{\mathbf{r}} \left( \mu \right) \widetilde{\mathbf{r}} \left( \mu \right) \frac{ \mathbf{a} \, b }{ \mu^2}
\\
  \mathbf{0}_{1 \times d}  & 1 &  b 
 \\
  \mathbf{0}_{1 \times d}  & 0 & 1
\end{pmatrix} ,
$$
with
$
\mathbf{r} \left( \mu \right) = \sum_{j=0}^{\mu-1}  \mathbf{P}_{p q} \left( \mathbf{G}/{ \mu } \right) ^{j} 
$
and
$
\widehat{\mathbf{r}} \left( \mu \right) = \sum_{j=0}^{\mu-1}  \left( \mu - 1 - j \right) \mathbf{P}_{p q} \left( \mathbf{G}/{ \mu } \right) ^{j} 
$.
Using algebraic manipulations we deduce that
$ m \mapsto \mathbf{r} \left( 2^m \right) $ and $ m \mapsto \widehat{\mathbf{r}} \left( 2^m \right)$
satisfy the recurrence relations \eqref{eq:2.21}.
\end{proof}

\section{Conclusion} 
\label{sec:Conclusion}
The numerical simulation of quantum measurement processes in continuous time, with mixed initial states,
leads to solve stochastic quantum master equations (SQMEs for short).
In a wide range of physical situations,
the SQMEs  are (or are approximated by)
stiff stochastic differential equations in high-dimensional linear operator space.
In order to overcome the difficulties arising in the direct numerical integration of SQMEs,
we find a new representation of the solution to the jump-diffusion SQME 
by means of coupled non-linear stochastic Schr\"odinger equations.
This has allowed us to design numerical methods for SQMEs
based on new exponential schemes for jump-diffusion stochastic Schr\"odinger equations.
Thus,
we develop a set of numerical methods that accurately calculate
stochastic density operators describing continuous measurements of quantum systems,
even if we use a typical desktop computer and the Hilbert state space has dimension in the order of hundreds of thousand.
We show the good performance of the new numerical integrators  
by simulating the continuous monitoring of two open quantum systems
formed by a quantized electromagnetic field interacting with a two-level system,
under the effect of the environment.

\section*{Acknowledgements}
J.F.  thanks the support of CONICYT grant CONICYT-PCHA/Doctorado Nacional/2013-21130792.
C.M.M. was partially supported by FONDECYT Grant 1140411 and BASAL Grant  PFB-03.
C.M.M. is grateful to Alberto Barchielli and  Matteo Gregoratti (Politecnico di Milano)
for asking him about the numerical solution of stochastic quantum master equations,
as well as for useful discussions on this topic.
The authors thank the referees for their valuable comments and suggestions on the manuscript.

\section*{References}

\begin{thebibliography}{10}

\bibitem{Abdulle2012}
{\sc A.~Abdulle, D.~Cohen, G.~Vilmart, and K.~C. Zygalakis}, {\em High weak
  order methods for stochastic differential equations based on modified
  equations}, SIAM J. Sci. Comput., 34 (2012), pp.~A1800--A1823.

\bibitem{Agarwal2012}
{\sc S.~Agarwal, S.~M. Hashemi~Rafsanjani, and J.~H. Eberly}, {\em
  Tavis-{C}ummings model beyond the rotating wave approximation:
  {Q}uasidegenerate qubits}, Phys. Rev. A, 85 (2012), p.~043815.

\bibitem{AlickiLendi2007}
{\sc R.~Alicki and K.~Lendi}, {\em Quantum dynamical semigroups and
  applications}, vol.~717 of Lecture Notes in Physics, Springer, Berlin, second
  edition~ed., 2007.

\bibitem{Amini2011}
{\sc H.~Amini, M.~Mirrahimi, and P.~Rouchon}, {\em On stability of
  continuous-time quantum filters}, in 2011 50th IEEE Conference on Decision
  and Control and European Control Conference, 2011, pp.~6242--6247.

\bibitem{Anmarkrud2017}
{\sc S.~Anmarkrud and A.~Kv{\ae}rn\o}, {\em Order conditions for stochastic
  {R}unge-€{"K}utta methods preserving quadratic invariants of {S}tratonovich
  {SDE}s}, J. Comput. Appl. Math., 316 (2017), pp.~40--46.

\bibitem{BarchielliBelavkin1991}
{\sc A.~Barchielli and V.~P. Belavkin}, {\em Measurements continuous in time
  and a posteriori states in quantum mechanics}, J. Phys. A, 24 (1991),
  pp.~1495--1514.

\bibitem{BarchielliGregoratti2009}
{\sc A.~Barchielli and M.~Gregoratti}, {\em Quantum trajectories and
  measurements in continuous time: the diffusive case}, vol.~782 of Lecture
  Notes in Physics, Springer, Berlin, 2009.

\bibitem{BarchielliGregoratti2012}
\leavevmode\vrule height 2pt depth -1.6pt width 23pt, {\em Quantum measurements
  in continuous time, non-{M}arkovian evolutions and feedback}, Phil. Trans. R.
  Soc. A, 370 (2012), pp.~5364--5385.

\bibitem{BarchielliHolevo1995}
{\sc A.~Barchielli and A.~Holevo}, {\em Constructing quantum measurement
  processes via classical stochastic calculus}, Stochastic Process. Appl., 58
  (1995), pp.~293--317.

\bibitem{BarchielliPellegriniPetruccione2012}
{\sc A.~Barchielli, C.~Pellegrini, and F.~Petruccione}, {\em Quantum
  trajectories: {M}emory and continuous observation}, Phys. Rev A, 86 (2012),
  p.~063814.

\bibitem{Braak2011}
{\sc D.~Braak}, {\em Integrability of the {R}abi model}, Phys. Rev. Lett., 107
  (2011), p.~100401.

\bibitem{Bremaud1981}
{\sc P.~Br\'emaud}, {\em Point processes and queues. {M}artingale dynamics},
  Springer Series in Statistics, Springer, New York-Berlin, 1981.

\bibitem{Breuer2000}
{\sc H.~P. Breuer, U.~Dorner, and F.~Petruccione}, {\em Numerical integration
  methods for stochastic wave function equations}, Comp. Phys. Commun., 132
  (2000), pp.~30--43.

\bibitem{BreuerPetruccione2002}
{\sc H.~P. Breuer and F.~Petruccione}, {\em The theory of open quantum
  systems}, Oxford University Press, 2002.

\bibitem{Carmichael2008}
{\sc H.~J. Carmichael}, {\em Statistical Methods in Quantum Optics 2:
  Non-Classical Fields}, Springer, 2008.

\bibitem{ChenHong2016}
{\sc C.~Chen and J.~Hong}, {\em Symplectic {R}unge--{K}utta semidiscretization
  for stochastic {S}chr\"odinger equation}, SIAM Journal on Numerical Analysis,
  54 (2016), pp.~2569--2593.

\bibitem{Chow1997}
{\sc Y.~S. Chow and H.~Teicher}, {\em Probability theory: Independence,
  interchangeability, martingales}, Springer-Verlag, New York, third
  edition~ed., 1997.

\bibitem{Chung2001}
{\sc K.~L. Chung}, {\em A course in probability theory}, Academic Press, San
  Diego, third edition~ed., 2001.

\bibitem{Cui2017}
{\sc J.~Cui, J.~Hong, Z.~Liu, and W.~Zhou}, {\em Stochastic symplectic and
  multi-symplectic methods for nonlinear {S}chr\"odinger equation with white
  noise dispersion}, J. Comput. Phys., 342 (2017), pp.~267 -- 285.

\bibitem{DeLaCruz2013}
{\sc H.~De~la Cruz, R.~J. Biscay, J.~C. Jimenez, and F.~Carbonell}, {\em Local
  linearization-{R}unge-{K}utta methods: A class of {A}-stable explicit
  integrators for dynamical systems}, Math. Comput. Modelling, 57 (2013),
  pp.~720--740.

\bibitem{Doherty2000}
{\sc A.~C. Doherty, S.~Habib, K.~Jacobs, H.~Mabuchi, and S.~M. Tan}, {\em
  Quantum feedback control and classical control theory}, Phys. Rev. A, 62
  (2000), p.~012105.

\bibitem{GallopoulosSaad1992}
{\sc E.~Gallopoulos and Y.~Saad}, {\em Efficient solution of parabolic
  equations by krylov approximation methods}, SIAM J. Sci. Statist. Comput., 13
  (1992), pp.~1236--1264.

\bibitem{Gambetta2008}
{\sc J.~Gambetta, A.~Blais, M.~Boissonneault, A.~A. Houck, D.~I. Schuster, and
  S.~M. Girvin}, {\em Quantum trajectory approach to circuit {QED}: {Q}uantum
  jumps and the {Z}eno effect}, Phys. Rev. A, 77 (2008), p.~012112.

\bibitem{GrahamTalay2013}
{\sc C.~Graham and D.~Talay}, {\em Stochastic simulation and {M}onte {C}arlo
  methods. {M}athematical foundations of stochastic simulation}, vol.~68,
  Springer, Heidelberg, 2013.

\bibitem{HarocheRaimond2006}
{\sc S.~Haroche and J.~M. Raimond}, {\em Exploring the Quantum: Atoms,
  Cavities, and Photons}, Oxford University Press, 2006.

\bibitem{Higham2005}
{\sc N.~J. Higham}, {\em The scaling and squaring method for the matrix
  exponential revisited}, SIAM J. Matrix Anal. Appl., 26 (2005),
  pp.~1179--1193.

\bibitem{Higham2009}
\leavevmode\vrule height 2pt depth -1.6pt width 23pt, {\em The scaling and
  squaring method for the matrix exponential revisited}, SIAM Rev., 51 (2009),
  pp.~747--764.

\bibitem{HochbruckLubich1997}
{\sc M.~Hochbruck and C.~Lubich}, {\em On {K}rylov subspace approximations to
  the matrix exponential operator}, SIAM J. Numer. Anal., 34 (1997),
  pp.~1911--1925.

\bibitem{Hong2015}
{\sc J.~Hong, D.~Xu, and P.~Wang}, {\em Preservation of quadratic invariants of
  stochastic differential equations via {R}unge-{€"K}utta methods}, Appl.
  Numer. Math., 87 (2015), pp.~38 -- 52.

\bibitem{Jacod1979}
{\sc J.~Jacod}, {\em Calcul stochastique et probl\`emes de martingales},
  vol.~714 of Lecture Notes in Mathematics, Springer, Berlin, 1979.

\bibitem{JacodProtter1982}
{\sc J.~Jacod and P.~Protter}, {\em Quelques remarques sur un nouveau type
  d'\'quations diff\'erentielles stochastiques}, in Seminar on Probability XVI,
  vol.~920 of Lecture Notes in Math., Springer, Berlin-New York, 1982,
  pp.~447--458.

\bibitem{jiang_wang_hong_2013}
{\sc S.~Jiang, L.~Wang, and J.~Hong}, {\em Stochastic multi-symplectic
  integrator for stochastic nonlinear {S}chr\"odinger equation}, Commun.
  Comput. Phys., 14 (2013), pp.~393--411.

\bibitem{Kloeden1992}
{\sc P.~E. Kloeden and E.~Platen}, {\em Numerical solution of stochastic
  differential equations}, Springer, Berlin, 1992.

\bibitem{Kobayashi2011}
{\sc K.~Kobayashi}, {\em Stochastic calculus for a time-changed semimartingale
  and the associated stochastic differential equations}, J. Theoret. Probab.,
  24 (2011), pp.~789--820.

\bibitem{Loudon2000}
{\sc R.~Loudon}, {\em The Quantum Theory of Light}, Oxford University Press,
  Oxford, third edition~ed., 2000.

\bibitem{MandelWolf1995}
{\sc L.~Mandel and E.~Wolf}, {\em Optical Coherence and Quantum Optics},
  Cambridge University Press, Cambridge, 1995.

\bibitem{Milstein2004}
{\sc G.~N. Milstein and M.~V. Tretyakov}, {\em Stochastic numerics for
  mathematical physics}, Springer-Verlag, Berlin, 2004.

\bibitem{Mirrahimi2007}
{\sc M.~Mirrahimi and R.~V. Handel}, {\em Stabilizing feedback controls for
  quantum systems}, SIAM J. Control Optim., 46 (2007), pp.~445--467.

\bibitem{MolerVanLoan2003}
{\sc C.~B. Moler and C.~F. Van~Loan}, {\em Nineteen dubious ways to compute the
  exponential of a matrix, twenty-five years later}, SIAM Rev., 45 (2003),
  pp.~3--49.

\bibitem{Mora2005}
{\sc C.~Mora}, {\em Numerical solution of conservative finite-dimensional
  stochastic {S}chr\"{o}dinger equations}, Ann. Appl. Probab., 15 (2005),
  pp.~2144--2171.

\bibitem{MoraMC2004}
{\sc C.~M. Mora}, {\em Numerical simulation of stochastic evolution equations
  associated to quantum {M}arkov semigroups}, Math. Comp., 73 (2004),
  pp.~1393--1415.

\bibitem{MoraJFA}
\leavevmode\vrule height 2pt depth -1.6pt width 23pt, {\em Heisenberg evolution
  of quantum observables represented by unbounded operators}, J. Funct. Anal.,
  255 (2008), pp.~3249--3273.

\bibitem{MoraAP}
\leavevmode\vrule height 2pt depth -1.6pt width 23pt, {\em Regularity of
  solutions to quantum master equations: a stochastic approach}, Ann. Probab.,
  41 (2013), pp.~1978--2012.

\bibitem{Mora2017}
{\sc C.~M. Mora, H.~A. Mardones, J.~C. Jimenez, M.~Selva, and R.~Biscay}, {\em
  A stable numerical scheme for stochastic differential equations with
  multiplicative noise}, SIAM J. Numer. Anal., 55 (2017), pp.~1614--1649.

\bibitem{Niemczyk2010}
{\sc T.~Niemczyk and et~al.}, {\em Circuit quantum electrodynamics in the
  ultrastrong-coupling regime}, Nature Phys., 6 (2010), pp.~772--776.

\bibitem{Pellegrini2010}
{\sc C.~Pellegrini}, {\em Existence, uniqueness and approximation of the
  jump-type stochastic {S}chr\"odinger equation for two-level systems},
  Stochastic Process. Appl., 120 (2010), pp.~1722--1747.

\bibitem{PellegriniIHP2010}
\leavevmode\vrule height 2pt depth -1.6pt width 23pt, {\em Markov chains
  approximation of jump-diffusion stochastic master equations}, Ann. Inst. H.
  Poincaré Probab. Statist., 46 (2010), pp.~924--948.

\bibitem{Percival1998}
{\sc I.~C. Percival}, {\em Quantum state diffusion}, Cambridge University
  Press, 1998.

\bibitem{PlenioKnight1998}
{\sc M.~B. Plenio and P.~L. Knight}, {\em The quantum-jump approach to
  dissipative dynamics in quantum optics}, Rev. Mod. Phys., 70 (1998),
  pp.~101--144.

\bibitem{Protter2005}
{\sc P.~E. Protter}, {\em Stochastic integration and differential equations},
  Springer-Verlag, Berlin, 2005.

\bibitem{Rouchon2015}
{\sc P.~Rouchon and J.~F. Ralph}, {\em Efficient quantum filtering for quantum
  feedback control}, Phys. Rev. A, 91 (2015), p.~012118.

\bibitem{Sarovar2004}
{\sc M.~Sarovar, C.~Ahn, K.~Jacobs, and G.~J. Milburn}, {\em Practical scheme
  for error control using feedback}, Phys. Rev. A, 69 (2004), p.~052324.

\bibitem{Schack1995}
{\sc R.~Schack, T.~A. Brun, and I.~C. Percival}, {\em Quantum state diffusion,
  localization and computation}, J. Phys. A: Math. Gen., 28 (1995),
  pp.~5401--5413.

\bibitem{Shabani2014}
{\sc A.~Shabani, J.~Roden, and K.~B. Whaley}, {\em Continuous measurement of a
  non-{M}arkovian open quantum system}, Phys. Rev. Lett., 112 (2014),
  p.~113601.

\bibitem{VanLoan1978}
{\sc C.~Van~Loan}, {\em Computing integrals involving the matrix exponential},
  IEEE Trans. Automat. Control, 23 (1978), pp.~395--404.

\bibitem{Wang2016}
{\sc H.~Wang and Q.~Ye}, {\em Error bounds for the {K}rylov subspace methods
  for computations of matrix exponentials}, Preprint, arXiv:1603.07358v1
  (2016).

\bibitem{WisemanMilburn1993}
{\sc H.~Wiseman and G.~J. Milburn}, {\em Quantum theory of optical feedback via
  homodyne detection}, Phys. Rev. Lett., 70 (1993), pp.~548--551.

\bibitem{WisemanMilburn2009}
{\sc H.~M. Wiseman and G.~J. Milburn}, {\em Quantum measurement and control},
  Cambridge University Press,, 2009.

\bibitem{Xiang2013}
{\sc Z.-L. Xiang, S.~Ashhab, J.~Q. You, and F.~Nori}, {\em Hybrid quantum
  circuits: {S}uperconducting circuits interacting with other quantum systems},
  Rev. Mod. Phys., 85 (2013), p.~623.

\end{thebibliography}

\end{document}